\newenvironment{tight_enumerate}{
\begin{enumerate}[topsep=4pt,itemindent=\parindent]
  \setlength{\itemsep}{1pt}
  \setlength{\parskip}{0pt}
  \setlength{\parsep}{0pt}
}{\end{enumerate}}
\newmdenv[	linewidth=0.75pt,
			leftline=false,
			rightline=false,
   			innerleftmargin=0pt,
			innerrightmargin=0pt]{topbot}
\newtheorem{mydefinition}{Definition}
\newcommand{\figref}[1]{Fig.\ \ref{#1}}
\newcommand{\secref}[1]{Sec.\ \ref{#1}}
\newcommand{\equref}[1]{Eq.\ \eqref{#1}}
\newcommand{\defref}[1]{Def.\ \ref{#1}}
\newcommand{\lemmaref}[1]{Lemma \ref{#1}}
\newcommand{\theoremref}[1]{Theorem \ref{#1}}
\newcommand{\corollaryref}[1]{Cor.\ \ref{#1}}
\DeclareMathOperator*{\argmin}{arg\,min}
\newcommand{\R}{\mathds{R}}
\newcommand{\Z}{\mathds{Z}}
\newcommand{\x}{x}
\newcommand{\p}{p}
\newcommand{\Fbo}{F_{\bar{1}}}
\newcommand{\Fbt}{F_{\bar{2}}}
\newcommand{\norm}[1]{\left\Vert#1\right\Vert}
\begin{document}

\title{Discrete Geodesic Nets for Modeling Developable Surfaces} 
\author{Michael Rabinovich}
\affiliation{%
  \institution{ETH Zurich}
  \department{Department of Computer Science}
  \streetaddress{Universit\"atstrasse 6}
  \city{Zurich}
  \postcode{8092}
  \country{Switzerland}}
\author{Tim Hoffmann}
\affiliation{%
  \institution{TU Munich}
  \department{Department of Mathematics}}
\author{Olga Sorkine-Hornung}
\affiliation{%
  \institution{ETH Zurich}
  \department{Department of Computer Science}
  \streetaddress{Universit\"atstrasse 6}
  \city{Zurich}
  \postcode{8092}
  \country{Switzerland}}

\renewcommand\shortauthors{Rabinovich, M. et al}

\begin{abstract}

We present a discrete theory for modeling developable surfaces as quadrilateral meshes satisfying simple angle constraints. The basis of our model is a lesser known characterization of developable surfaces as manifolds that can be parameterized through orthogonal geodesics. Our model is simple, local, and, unlike previous works, it does not directly encode the surface rulings. This allows us to model continuous deformations of discrete developable surfaces independently of their decomposition into torsal and planar patches or the surface topology. We prove and experimentally demonstrate strong ties to smooth developable surfaces, including a theorem stating that every sampling of the smooth counterpart satisfies our constraints up to second order. We further present an extension of our model that enables a local definition of discrete isometry.
We demonstrate the effectiveness of our discrete model in a developable surface editing system, as well as  computation of an isometric interpolation between isometric discrete developable shapes.
\end{abstract}

%
%
\begin{CCSXML}
<ccs2012>
<concept>
<concept_id>10010147.10010371.10010396.10010397</concept_id>
<concept_desc>Computing methodologies~Mesh models</concept_desc>
<concept_significance>500</concept_significance>
</concept>
<concept>
<concept_id>10010147.10010371.10010396.10010398</concept_id>
<concept_desc>Computing methodologies~Mesh geometry models</concept_desc>
<concept_significance>500</concept_significance>
</concept>
</ccs2012>
\end{CCSXML}

\ccsdesc[500]{Computing methodologies~Mesh models}
\ccsdesc[500]{Computing methodologies~Mesh geometry models}
%
%

\keywords{discrete developable surfaces, geodesic nets, isometry, mesh editing, shape interpolation, shape modeling, discrete differential geometry}

\thanks{This work was supported in part by the ERC grant iModel (StG-2012-306877) and by the Deutsche Forschungsgemeinschaft-Collaborative Research Center, TRR 109, ``Discretization in Geometry and Dynamics.''

  Author's addresses: M.\ Rabinovich and O.\ Sorkine-Hornung: Department of Computer Science, ETH Zurich, Universit\"atstrasse 6, 8092 Zurich, Switzerland; Tim Hoffmann: Department of Mathematics, TU Munich, Boltzmannstrasse 3, 85748 Garching bei M\"unchen, Germany.}

\maketitle


\section{Introduction}

\begin{figure}
\includegraphics[width=\linewidth]{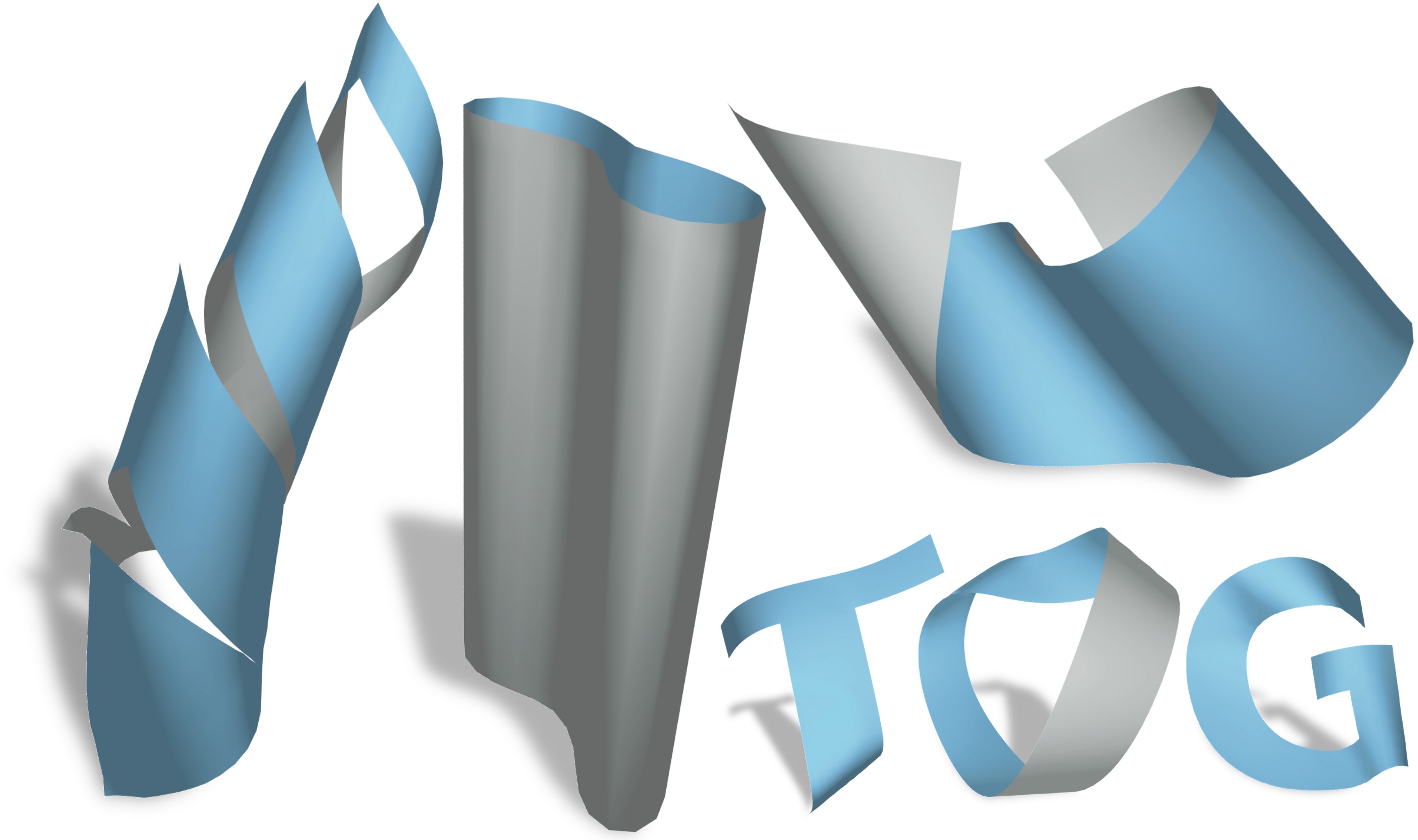}
\caption{\label{fig:teaser}We propose a discrete model for developable surfaces. The strength of our model is its locality, offering a simple and consistent way to realize deformations of various developable surfaces without being limited by the topology of the surface or its decomposition into torsal developable patches.  Our editing system allows for operations such as freeform handle-based editing, cutting and gluing, modeling closed and un-oriented surfaces, and seamlessly transitioning between planar, cylindrical, conical and tangent developable patches, all in a unified manner.}	
\end{figure}

\begin{figure*}
\centering
\includegraphics[height=0.21\linewidth]{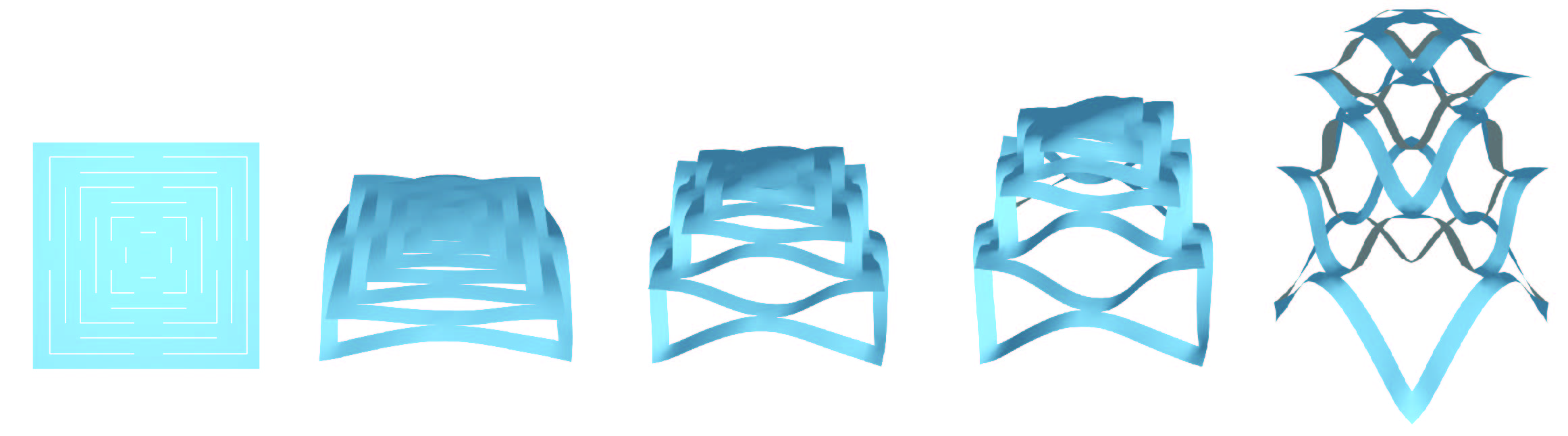}
\includegraphics[height=0.24\linewidth]{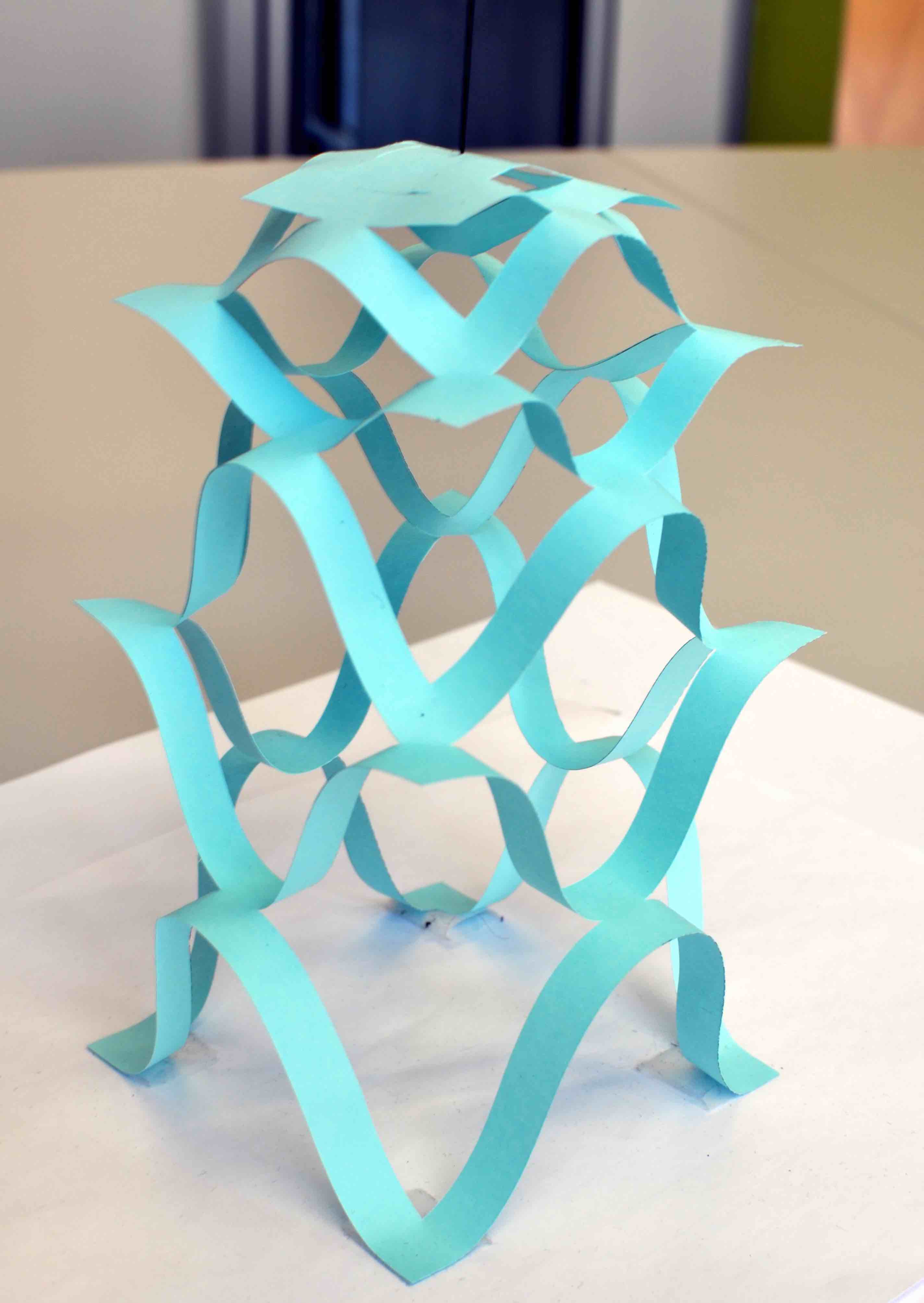}
\caption{\label{fig:lantern}
A lantern shaped developable surface, demonstrating how our discrete model can seamlessly and effortlessly model developable surfaces with nontrivial topology. This figure was created from an alternating cut pattern on a square sheet (left). The shapes in the middle were formed by pulling the central vertex up while constraining the corners to stay on their initial plane.  Right: A physical model made of paper with the same cut pattern (we glued the corners to the table and lifted the center point using a thin thread).
}	
\end{figure*}

The concept of isometry lies at the core of the study of surfaces. Loosely speaking, two surfaces are isometric if one can be obtained by bending and twisting the other. The deforming map is then called an isometry, and the properties of a surface that are invariant to isometries are called intrinsic properties. A \emph{local} isometry is such a mapping in a neighborhood of some point on a surface.


Surfaces that are locally isometric to a plane are called developable surfaces. In the physical world, these surfaces can be formed by bending thin flat sheets of material, which makes them particularly attractive in manufacturing \cite{ships}, architecture \cite{gehry} and art \cite{huffman2}. Consequently, the  design of freeform developable surfaces has been an active research topic in computer graphics, computer aided design and computational origami for several decades. 

{The scope of our work} is modeling developable surfaces through \emph{deformation}, which can be applied in a design and fabrication pipeline. This is in contrast to contour interpolation works \cite{Frey2,sheffer}, which compute a developable surface passing through an input set of curves, as well as shape approximation through developable surfaces \cite{pottmann_approx,chen_approx}. Our goal is to model \emph{smooth} deformations, such as the rolling and bending of a planar sheet into a cone, rather than $C^0$ origami-like folding and creasing \cite{tachi_sim}. 

{Smooth developable surfaces} are 	well studied in differential geometry \cite{do_carmo} and are often characterized as surfaces with vanishing Gaussian curvature, or, equivalently, as ruled surfaces with a constant normal along each ruling.


A given smooth developable surface $S$ can be naturally discretized as a ruled surface, as it can be locally represented by a single curve 
\setlength{\intextsep}{8pt}%
\setlength{\columnsep}{8pt}%
\begin{wrapfigure}{r}{0.3\linewidth}
  \centering
  \includegraphics[width=\linewidth]{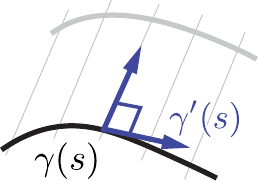}
\end{wrapfigure}
and its orthogonal rulings (see inset). For this reason, many discrete developable models encode rulings explicitly \cite{rect_dev,conical}. However, this representation has limitations when it comes to interactive modeling of a developable surface. 
In this process, the user starts with an initial developable surface $S_0$, for instance a planar surface, and interactively manipulates it to obtain a desired surface $S$ (see Figs.\ \ref{fig:teaser}, \ref{fig:lantern}). Since the output surface is not necessarily known precisely in advance, one would like to explore the entire space of attainable developable surfaces in this interactive setting. As stated in \cite{pottmann_new}, explicitly including the rulings in the surface representation limits the space of possible deformations of $S_0$. From a user's point of view, it may be more intuitive to manipulate local point handles to edit the surface, rather than editing its global rulings. 

 We show that such developable shape space exploration is made possible by discretizing a lesser known, local condition for developability: The existence of an orthogonal geodesic parameterization.
We propose an alternative way to understand developable surface \emph{isometries} by looking at their invariants, rather than the rulings. 

\subsection{Contributions}
\begin{itemize}[label={--}]
  \item We introduce \textit{discrete orthogonal geodesic nets} to model developable surfaces as quadrilateral nets with angle constraints. Our conditions are simple and local, and our model does not depend on the explicit encoding of the rulings or the surface topology.
  \item We use this model to build a simple editing system for developable surfaces with point handles as user interface. Our system can smoothly transition between a wide range of shapes while maintaining developability, and, unlike previous methods, does not require the user to specify global rulings or any other global structure of the unknown desired shape.
  \item We further study our new discrete model and draw parallels to smooth developable surfaces. We prove that our discrete constraints are satisfied in the smooth case up to second order, analyze our model's degrees of freedom, discretize quantities such as tangents and normals and propose a local scheme to approximate the rulings. We formulate and prove a discrete analogue to a known continuous theorem linking curvature line parameterizations, geodesic parameterizations, and developable surfaces.
  \item We introduce a generalization of our nets, called  \emph{discrete 4Q orthogonal geodesic nets}, which allows us to define {local discrete isometry} between our surfaces. We demonstrate the effectiveness and flexibility of such 4Q nets by computing an {isometric interpolation} between isometric developable shapes.
\end{itemize}




\section{Preliminaries}
\label{sec:preliminaries}

\subsection{Nets in discrete differential geometry}
In the spirit of previous works in discrete differential geometry \cite{DDG_Desbrun,DDG_Book}, we discretize a developable surface as a quad grid mesh, referred to as a \emph{net}, which can be viewed as a discrete analogue to a smooth parameterization (often termed \emph{smooth net}). This approach has been previously taken to discretize and construct a variety of surface types, including constant Gaussian curvature surfaces \cite{wunderlich,ddg_neg_gaussian}, minimal surfaces \cite{ddg_minimal} and isothermic surfaces \cite{ddg_isothermic}. Just as a smooth surface can be locally represented by a parameterization $f:\R^2 \rightarrow \R^3$, a discrete surface can be locally represented by a discrete map $F:\Z^2 \rightarrow \R^3$ (\figref{fig:discrete_chart}). This structural view is especially appealing, as it can be used to convert between smooth and discrete notions on surfaces, such as tangents, normals and surface transformations, and to analyze the construction of discrete surfaces and their convergence to the continuous counterparts. Discrete analogues of smooth differential geometry theorems are systematically studied in the context of nets; see the review in \cite{DDG_Book}.

\begin{figure}[h]
\centering
   \includegraphics[width=\linewidth]{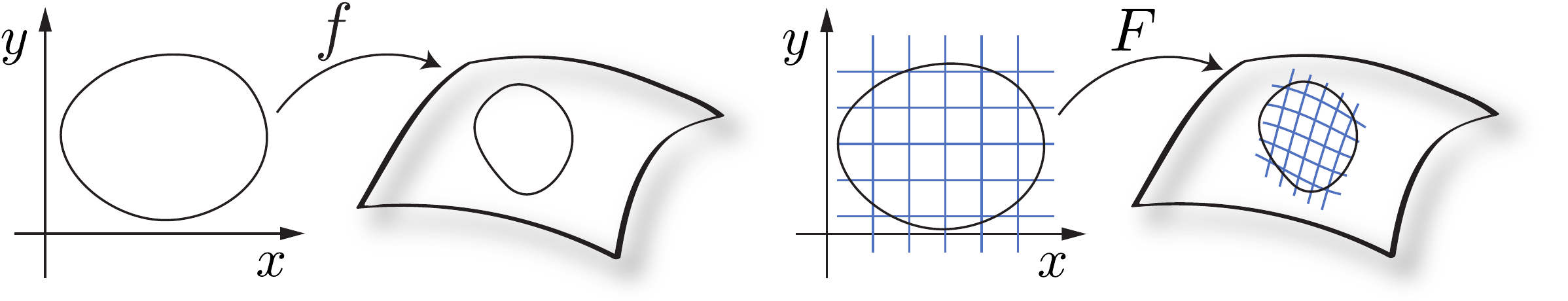}
   \caption{\label{fig:discrete_chart} A smooth net $f:\R^2 \rightarrow \R^3$ and a discrete net $F:\Z^2 \rightarrow \R^3$.}
\end{figure}

The same smooth surface can be represented by many different parameterizations, or nets, and some are more convenient than others.  These typically differ by the properties of their coordinate curves $f(x_0+t,y_0),f(x_0,y_0+t)$.
Prominent examples include curvature line nets, where the coordinate curves are principal curvature lines, and asymptotic nets, whose coordinate curves trace the asymptotic directions of a surface. The freedom to choose various nets exists also in the discrete setting, and usually a discrete model of a surface is coupled with a given parameterization. For example,  discrete minimal surfaces have been defined through curvature line nets, and discrete constant negative Gaussian curvature surfaces through nets of asymptotic lines \cite{ddg_neg_gaussian,ddg_minimal}. Each choice of parameterization implies certain conditions on the discrete surface, formulated in terms of the values of $F$, i.e., the positions of the net's vertices. 

\subsection{Developable surfaces through conjugate nets}
\label{sec:dev_curvature}
The neighborhood of a non-planar point $\p$ on a developable surface $S$ can be locally parameterized by its rulings, which are straight lines contained in the surface. This means that there exists a neighborhood $U \subseteq S$ such that $\p \in U$ and all points in $U$ are parameterized by
$$\x(s, t) = \gamma(s) + t\, {r}(s),$$ where ${r}(s)$ corresponds to the ruling, and fixing the parameter $t$ gives us another curve on the surface with non-vanishing curvature, from which the rulings emanate.

The subset $U \subseteq S$ is called a \emph{torsal surface}. A torsal surface can be classified based on the directions of its rulings: if they are parallel, it is said to be cylindrical, if they all intersect at a single point, it is a generalized cone, and otherwise it is a so-called tangent surface (see \figref{fig:torsal_surfaces}).
\begin{figure}[t]
   \includegraphics[width=\linewidth]{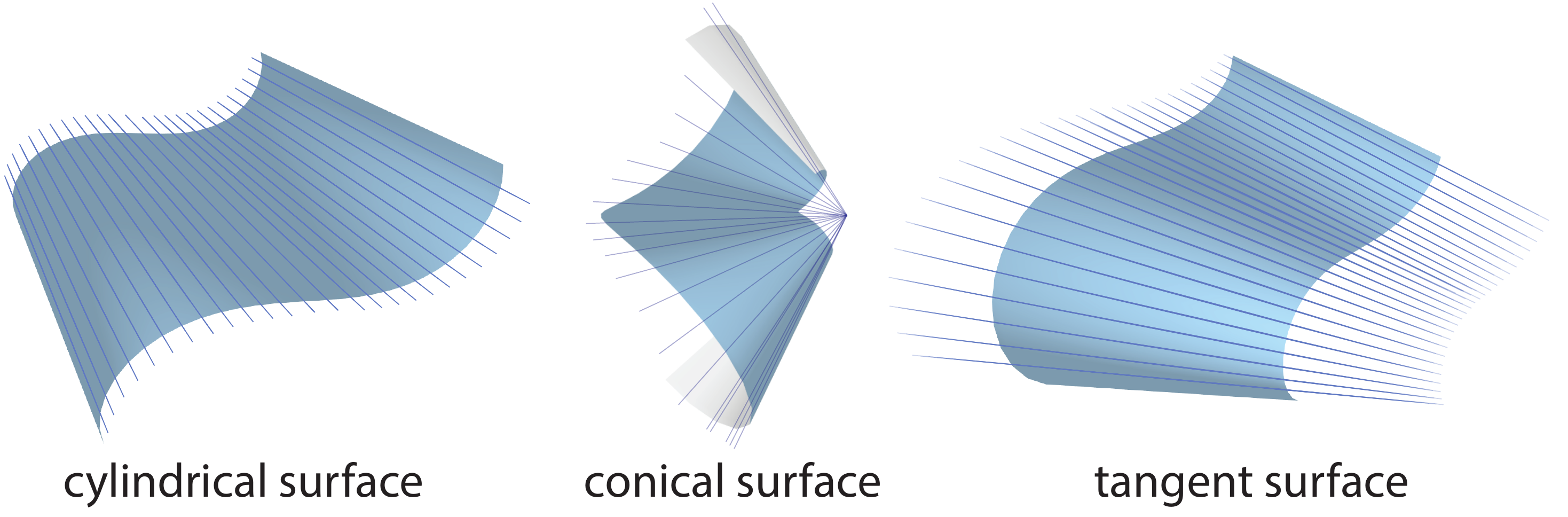}
   \caption{  \label{fig:torsal_surfaces} Different types of developable surfaces and their rulings. Left: A cylindrical shape with parallel rulings. Center: all rulings meet in a point in a conical surface. Right: a tangent developable surface where the rulings meet at a curve.}
\end{figure}


A parameterization of a developable surface through its rulings is called a developable \emph{conjugate net} \cite{conical}.
To clarify the previous statement, we elaborate on the definition of a conjugate net in a more general context, where $f$ is a smooth net that is not necessarily developable. A smooth parameterization $f$ is a conjugate net if 
$$\langle {n}_x, f_y \rangle = 0, \ \ \textrm{where }{n} = \frac{f_x \times f_y}{\norm{f_x \times f_y}}.$$
Here, $n$ is the normal map of $f$, and a subscript denotes differentiation with respect to the coordinate in the subscript. In this case the tangents $f_x, f_y$ are said to be conjugate directions. The condition is equivalent to $f_{xy} \in \mathrm{span}\{f_x, f_y\}$. Intuitively, in such a parameterization, infinitesimally small squares in the parameter domain are mapped to planar quads on the surface up to second order. Hence, planar quad meshes are seen as a discretization of conjugate nets \cite{DDG_Book}. Note that curvature line nets are a special case of conjugate nets. In the case of a developable surface, the normal $n$ is constant along a ruling, and therefore any developable net parameterized through rulings is in fact a conjugate net. A well established \emph{discrete} model for a conjugate developable net is a planar quad strip \cite{sauer,computational_line,conical}.

\subsection{The combinatorics of a developable surface}
\label{sec:combinatorial_problem}

The possible presence of  planar parts in developable surfaces further complicates their representation. A general developable surface is a composition of (possibly infinite) torsal and planar patches. The works of Liu et al.~\shortcite{conical} and Kilian et al.~\shortcite{curved_folding} model torsal surfaces by discrete conjugate nets, i.e., planar quad strips where the rulings are explicitly given by the transversal quad edges.
Accordingly, the discrete representation in \cite{conical,curved_folding,pottmann_new} consists of multiple discrete torsal patches connected together to form a discrete developable surface. The connectivity between those patches is represented by a combinatorial structure termed \textit{decomposition combinatorics} \cite{pottmann_new}. As stated in \cite{pottmann_new}, this fixed combinatorial structure requires the user to manually specify the said combinatorial structure of the modeled developable surface, and it is not possible to model a smooth transition between different combinatorial decompositions. We call this problem the \emph{combinatorial problem} (see \figref{fig:combinatoric_problem}). 

\begin{figure}[t]
\centering
   \includegraphics[width=\linewidth]{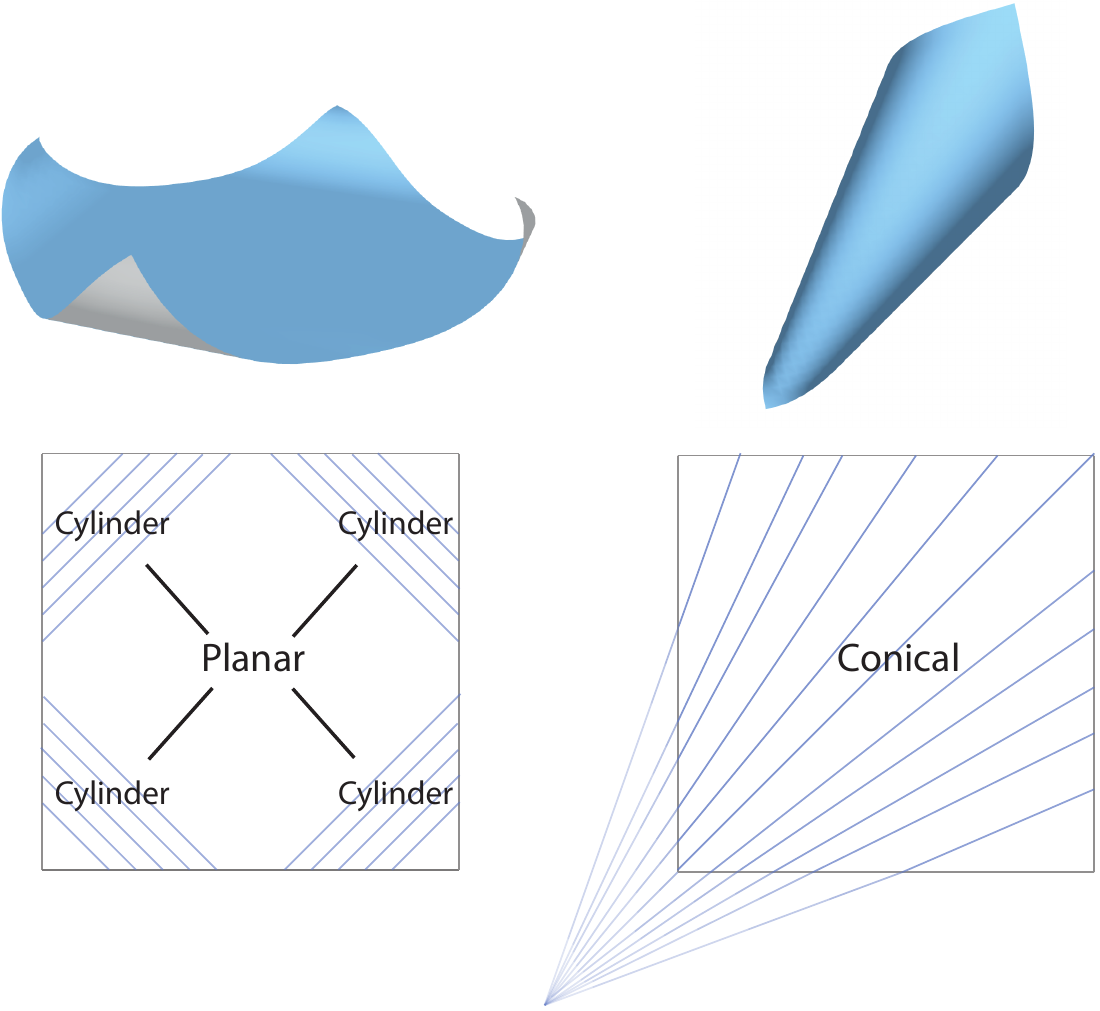}
   \caption{Two isometric shapes (top row) that are composed of different configurations of torsal surfaces, as illustrated in the bottom row. Our method does not rely on explicit encoding of this combinatorial structure of the developable surface, and can seamlessly model the transition between shapes without additional input from the user.}
   \label{fig:combinatoric_problem}
\end{figure}

\subsection{Developable surface isometry} \label{sec:dev_surface_iso}
A common task in a developable surface editing system is modeling isometries, which are non-stretching deformations that preserve distances on the surface. 
Since we are interested in modeling and editing shapes while staying within the shape space of developable surfaces, surface representations by conjugate or curvature lines are not good candidates for this application, because they are not invariant under isometries, as we explain next.


Isometrically unrolling a developable surface onto the plane reveals the innate shape of its curves. In the following, we often display a developable surface next to its flattened, isometric planar version, and refer to the geometry of a thereby flattened curve as the curve's intrinsic geometry. For instance, geodesics on a developable surface are curves that are intrinsically straight. 
The intrinsic shape of a curve is determined by its geodesic curvature $\kappa_g$, which does not change under isometry. 
As an example, all curvature lines of a cylinder are intrinsically straight, and for a cone they are a family of concentric circles and radial straight lines emanating from a single point (see \figref{fig:curve_flat}). Rulings and their conjugate directions are altered by isometries, but an isometry always maps geodesics to geodesics and intrinsic circles to circles of the same geodesic curvature. Therefore, a discrete isometry cannot be plausibly defined based on a mapping between conjugate curves on two developable surfaces.

\begin{figure}[t]
\centering
   \includegraphics[width=\linewidth]{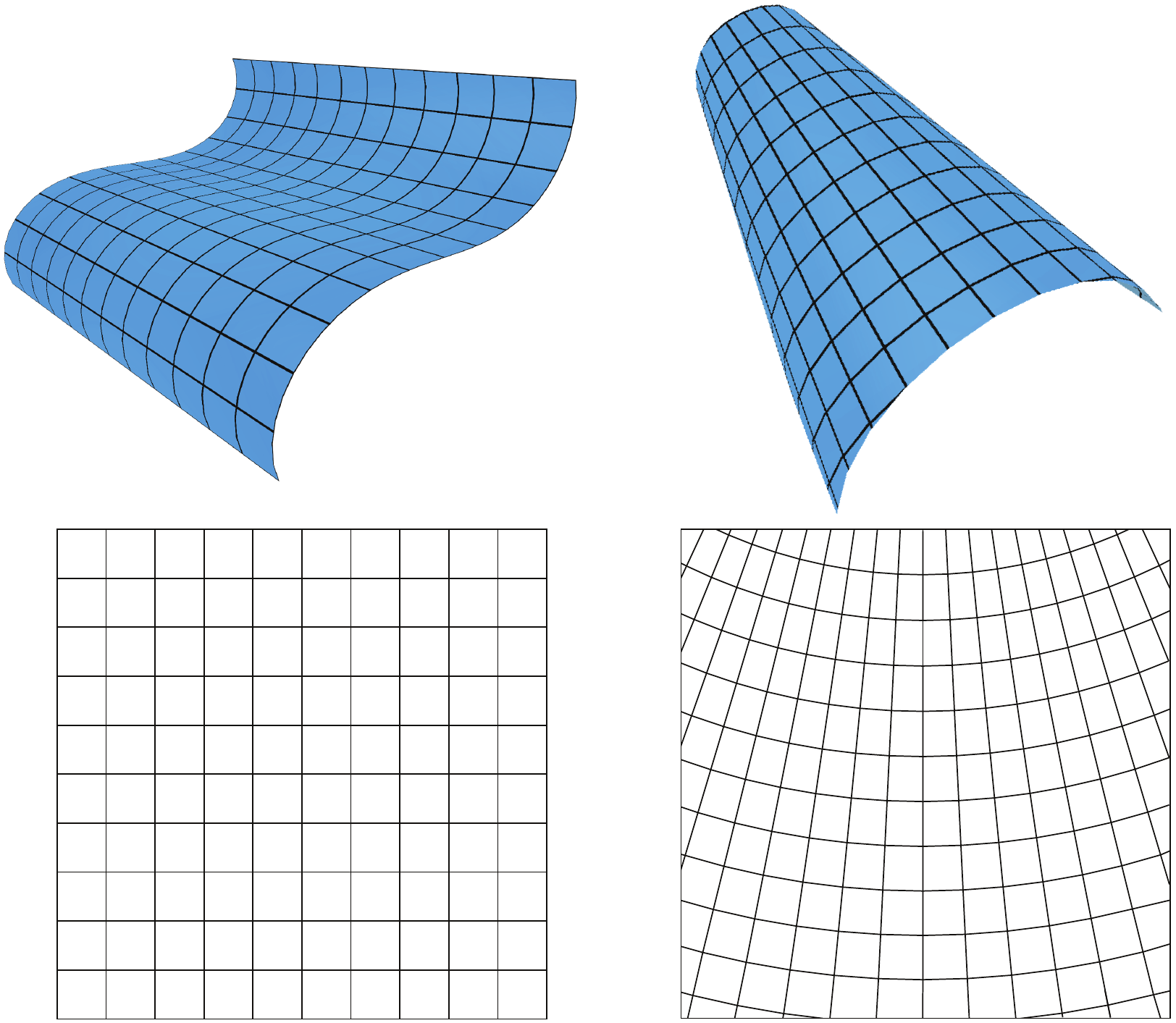}
   \caption{\label{fig:curve_flat}Two isometric developable surfaces (a cylindrical and a conical one) in curvature line parameterization (top row), and their isometrically flattened versions (bottom row), which reveal the intrinsic geometry of the curvature lines. These families of lines are intrinsically different, and there is no isometric mapping from one family to the other.}
\end{figure}

\subsection{Developable surface through orthogonal geodesic nets}
We propose to look at a different type of parameterization of developable surfaces, which is better suited for our interactive editing goals and is a more natural starting point to define discrete isometry.
Imagine taking a flat piece of paper with a square grid texture and watching the vertices of the squares while curving and rolling the paper. Squares, which started as planar, transform, but the intrinsic distances between all points stay the same, as long as one does not tear or stretch the paper. This is analog to our model. We propose a discrete model of developable surfaces through intrinsic entities: geodesics, which are invariant under isometries. A net $f$ is a geodesic net if its coordinate curves trace geodesics on the surface. On a developable surface, geodesics are straight lines when developed onto the plane. As we still have a degree of freedom in choosing the directions of the intrinsic lines, we set them to be of the simplest form -- orthogonal -- as in a rectangular grid (see \figref{fig:geo_net_prelimanary}).
By employing geodesics rather than rulings and conjugate directions, we overcome the aforementioned combinatorial problem and are able to define a notion of discrete isometry for such surfaces.

\begin{figure}[t]
\centering
   \includegraphics[width=\linewidth]{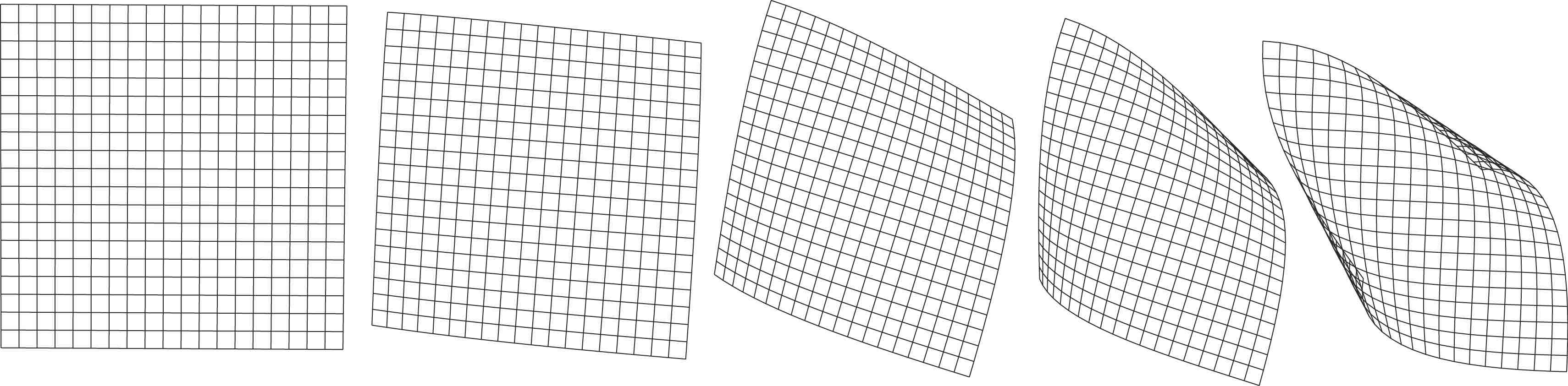}
   \caption{\label{fig:geo_net_prelimanary} Tracing orthogonal geodesics while rolling a planar surface into a circular cone.}
\end{figure}



\section{Related work}
\subsection{Developable surfaces}
The theory of surfaces formed by local $C^2$ isometries of the plane is covered in the differential geometry literature \cite{do_carmo,spivak} and traces back to the works of Euler and Monge in the eighteenth century \cite{DevHistory}. Gauss' Theorema Egregium coupled with Minding's theorem shows that $C^2$ developable surfaces are surfaces with zero Gaussian curvature. Intuitively, this means that the image of their Gauss map is a curve or a point. Another point of view is the characterization of developable surfaces as special ruled surfaces, namely, those with constant tangents along rulings \cite{computational_line}. Hence, a developable surface is locally a planar or a torsal surface. A torsal surface can be constructed by a single curve: For example, one can pass a torsal surface through a curvature line curve and its parallel Bishop frame \cite{bishop}, or through a geodesic and its Frenet frame \cite{graustein}.

The study of $C^1$ and $C^0$ developable surfaces is a much newer area, stirred by the beautiful models and work of Huffmann \cite{huffman,huffman2} and more recently by the field of computational origami \cite{origami_book}, which examines shapes created by straight and curved folds. Straight folds are $C^0$ creases through lines on a paper. Any shape created by repeated application of these folds is piecewise planar \cite{non_pleated}. Curved folds are $C^0$ creases through arbitrary curves on a paper. These are more rigid than straight folds, as splitting a surface into two parts by a curve and folding the surface on one side of the curve locally determines the shape of the other part \cite{curved_folding}.

The study of smooth developable surfaces is analytic in nature, whereas the study of origami folds is in essence combinatorial.
As previously stated, our work focuses on modeling smooth deformations.



\subsection{Modeling with developable surfaces}
Works on deformations of developable surfaces can be largely categorized into \emph{geometric} and \emph{physics based}.

{Geometric approaches} are not tied to a physical representation such as a paper sheet or a metal plate. They mainly consider and discretize  the geometry of a smooth developable surface. The foundation of these works is a discrete developable surface model, i.e., an \emph{exact} definition of the set of discrete developable surfaces. The definition should be flexible enough for the user to explore a wide range of shapes, while capturing important properties of the smooth surface.
The works of Liu et al.~\shortcite{conical} and Kilian et al.~\shortcite{curved_folding} model torsal surfaces as planar quad strips, which are a discretization of  developable conjugate nets. In \cite{pottmann_new} the authors model smooth torsal surfaces as developable splines. These are represented as ruled surfaces connecting two B\'{e}zier curves satisfying a set of quadratic equations that guarantee a constant normal along rulings. The work of \cite{rect_dev} models a torsal surface by a single geodesic curve and rulings emanating from it, i.e., the \emph{rectifying developable}  of a curve. All works above model a general developable surface as a composition of multiple torsal surfaces, explicitly encoding rulings and sharing the combinatorial problem we discussed in \secref{sec:combinatorial_problem}. Moreover, by construction these approaches cannot model isometry between different torsal shapes, such as a cylinder and a cone, as explained in \figref{fig:curve_flat}. We refer the reader to the 'Limitations' and 'Future work' paragraphs in Section~7 of \cite{pottmann_new} for an in-depth discussion of these shortcomings. 

The work of \cite{solomon} presents an origami based editing system for developable surfaces, allowing the user to navigate through the highly nonlinear space of admissible folds of a sheet. By involving a mean curvature bending energy, the user can further ask to relax the folds, resulting in a smoother looking, yet always piecewise planar surface~\cite{non_pleated}. Due to the reliance on global folds, this method shares a similar dependency on rulings with the previously mentioned works, which also complicates the user interface. 
Our proposal can be seen as a follow-up to all these works, removing the dependency on rulings and adding a notion of discrete isometry that is capable of smoothly interpolating between a wide range of shapes.

In contrast to geometric models, {physics based models} are coupled with given material properties of the surface. They model a material's behavior through energy minimization, simulating the physical shape when applying forces. The focus of these works is the physics of an object, such as an elastic simulation \cite{shells}, or paper crumpling and tearing \cite{Narain,Wang,SchreckEG2017}, rather than the geometry of developable surfaces. As such, these works do not deal with defining a precise notion of a discrete developable surface, nor do they aim at the exploration of the entire shape space of developable surfaces without straying off constraints. Developable surface editing can be indirectly approximated by discrete shell models \cite{grin_shells,froh_botsch} when starting from a flat sheet and setting a very high penalty in the stretch component of the elastic energy; the latter could lead to numerical problems, however. We view the physics based approaches as tangential to the geometric models, and they can also potentially benefit from new discrete surface models.

\subsection{Developable surfaces in discrete differential geometry}
As mentioned in \secref{sec:preliminaries}, the work of Liu et al.~\shortcite{conical} discretizes developable surfaces through conjugate line nets as planar quad strips, where the transversal quad edges lie on rulings. In contrast, our proposed discretization is through orthogonal geodesic nets, which is especially convenient when modeling deformations and isometries of developable surfaces. Our discretization is inspired by the work of Wunderlich~\shortcite{wunderlich} on discrete Voss surfaces, which are surfaces parameterized through conjugate lines that are also geodesics. Voss surfaces include surfaces that are not necessarily developable, and  
modeling with conjugate \emph{orthogonal} geodesics is quite limiting, since any such net is in fact a cylindrical shape.
Therefore, as a base for our model we use the same notion of a geodesic net set by Wunderlich but drop the conjugacy requirement, which means that our model allows for non planar quads. 

A few works in DDG cover discrete isometries of specific classes of surfaces, such as those of Voss surfaces~\cite{Schief2008}, where conjugate geodesics are preserved. We are not aware of a method that covers the entire range of developable surface isometries. As mentioned, developable Voss surfaces form only a limited subset of developable surfaces, and our isometry definition subsumes this subset, covering general developable surfaces in orthogonal geodesic parameterization. 
\section{Notations and setup}

\begin{figure}[b]
\centering
   \includegraphics[width=\linewidth]{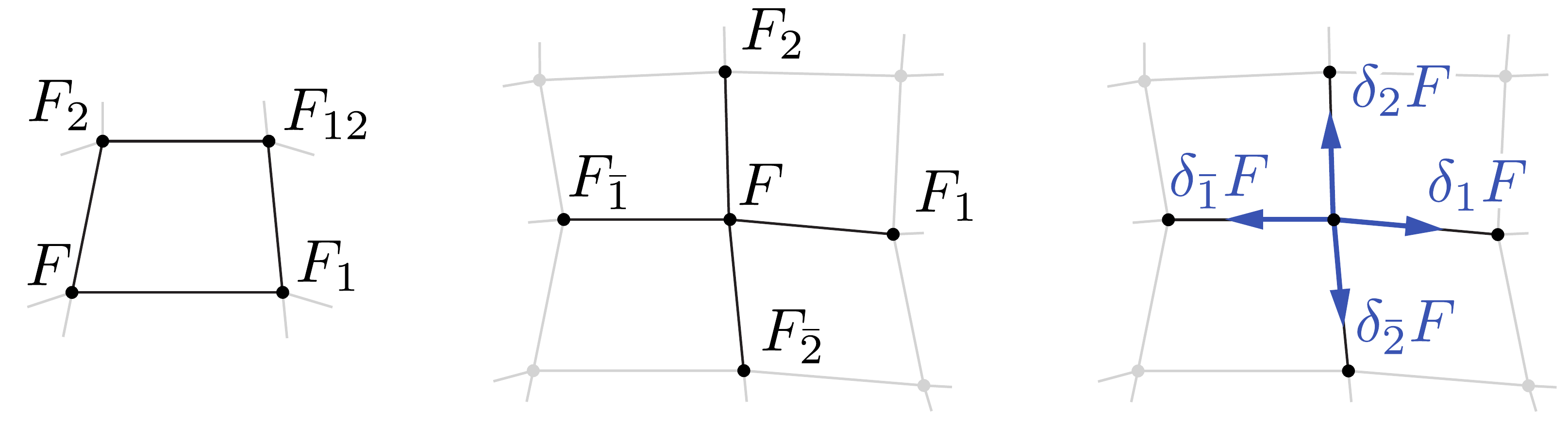}
   \caption{The shift notation on a quad (left) and a star (center); edge directions (right).}
   \label{fig:shift_notation}
\end{figure}

As briefly introduced in \secref{sec:preliminaries}, we denote continuous maps in lower case letters and their discrete equivalents by upper case. 
The notation $f(x,y): U \rightarrow \R^3$, where $U \subseteq \R^2$, refers to a (local) regular parameterization of a smooth surface, and $n(x,y):U \rightarrow \mathcal{S}^2$ is its normal map. Derivatives with respect to the coordinates $x$ and $y$ are denoted by subscripts, e.g., tangent vectors  $f_x, f_y$ and derivatives of the normal $n_x, n_y$. We denote the unit tangents of the coordinate lines by $t_1 = f_x/\|f_x\|, \ t_2 = f_y/\|f_y|$, which are linearly independent as $f$ is an immersion.

A natural discrete analogy for a local parameterization $f$ is a map $F: V \rightarrow \R^3$, where $V \subseteq \Z^2 $. We refer to $F$ as our discrete net, and likewise $N: V \rightarrow \mathcal{S}^2$ denotes our discrete Gauss map. Discrete unit tangents are denoted by $T_1, T_2$. We define these quantities in the following for our particular setting, namely discrete geodesic nets.

As is customary in discrete differential geometry, we slightly abuse the naming and employ shift notation to refer to vertex positions on our net, denoting 
\begin{align*}
&F = F(j,k),\ F_1 = F(j+1,k),\ F_2 = F(j,k+1),\\ 
&F_{12} = F(j+1,k+1),\ \Fbo = F(j-1,k),\ \Fbt = F(j, k-1),
\end{align*}
where $j, k \in \Z$, i.e., the lower index denotes the coordinate number to shift, and a bar above it indicates a negative shift  (see \figref{fig:shift_notation}). The unit-length directions of edges emanating from a point $F$ are denoted as $\delta_{1}F,\  \delta_{2}F,\ \delta_{\bar 1}F,\ \delta_{\bar 2}F,$ i.e., 
\begin{align*}
\delta_{1}F = (F_1 - F)/\|F_1-F\|,\ \ \ \delta_{\bar 1}F = (\Fbo - F)/\|\Fbo-F\|,\\
\delta_{2}F = (F_2 - F)/\|F_2-F\|,\ \ \ \delta_{\bar 2}F = (\Fbt - F)/\|\Fbt-F\|.
\end{align*}
%
We assume our net is a discrete immersion, which means that the edge directions $\delta_iF,\delta_{\bar i}F$ are distinct.
In practice, we represent our discrete nets as pure quad grid meshes, where the valence of every inner vertex is 4. We refer to an inner vertex, its four neighbors and its four emanating edges as a \emph{star}. 
Our discrete nets neither require nor assume any global orientation on the mesh. The shift notation requires only a local arbitrary orientation per quad or star, and is used for convenience.


\section{Discrete orthogonal geodesic nets} \label{sec:disc_orth_nets}


We are interested in defining conditions on $F$, i.e., on the positions of our mesh vertices, such that it represents a discrete developable surface parameterized by orthogonal geodesic lines. In the following, we develop the necessary definitions and their properties, to arrive at the following condition:

\begin{topbot}
\begin{mydefinition}
\label{def:def1}
A discrete net $F$ is said to be a discrete developable surface in orthogonal geodesic parameterization, i.e., a discrete orthogonal geodesic net, if for every star, all angles between consecutive edges are equal. 
\end{mydefinition}
\end{topbot}

To develop the rationale for the condition above, we start by looking at smooth developable geodesic nets.
\subsection{Smooth developable geodesic nets}
When is a geodesic net a developable net? Let $f:\R^2 \rightarrow \R^3$ be a geodesic net and $P = \{(x,y)\in \R^2\ |\ x_0 \leq x \leq x_1,\ y_0 \leq y \leq y_1 \}$ an axis-aligned rectangle. The rectangle is mapped by $f$ to a ``curved rectangle'' $f(P)$. Let $\alpha_j,\  j = 1, \ldots, 4,$ be the interior angles at the vertices of $f(P)$, measured as the angles between the respective tangent directions (as usual in differential geometry), e.g., $\alpha_1 = \sphericalangle\left(f_x(x_0,y_0),\ f_y(x_0,y_0)\right)$.

\begin{lemma}
\label{Lem:geo_dev_cond}
A geodesic net $f$ is developable if and only if for every axis-aligned rectangle $P \subset \R^2$, the angles of the mapped curved rectangle $f(P)$ satisfy:
\begin{align}
\label{eq:angles_2pi}
\sum_{j=1}^{4} \alpha_j  = 2\pi.
\end{align}
\end{lemma}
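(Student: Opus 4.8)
The plan is to reduce the statement to the Gauss--Bonnet theorem, exploiting that the paper characterizes developability as the vanishing of the Gaussian curvature $K$ (via Theorema Egregium and Minding's theorem). Since $f$ is a regular parameterization, I would work intrinsically on the parameter rectangle $P$ equipped with the pullback metric $g = f^{*}\langle\cdot,\cdot\rangle$. The Gaussian curvature, geodesics, and all angles are intrinsic quantities, and the interior angles $\alpha_j$ of the curved rectangle $f(P)$ coincide with the angles measured in $(P,g)$, matching the definition $\alpha_1 = \sphericalangle(f_x, f_y)$ used in the statement.

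The key observation is that each of the four sides of $f(P)$ is a segment of a coordinate curve, hence a geodesic by the geodesic-net hypothesis, so its geodesic curvature vanishes identically along the boundary. Applying Gauss--Bonnet to the curved rectangle $f(P)$, whose boundary is piecewise smooth with four corners carrying exterior angles $\pi - \alpha_j$, the boundary term $\oint \kappa_g\, ds$ drops out and I obtain
\[
\int_{f(P)} K \, dA + \sum_{j=1}^{4} (\pi - \alpha_j) = 2\pi,
\]
which rearranges to $\int_{f(P)} K\, dA = \sum_{j=1}^{4} \alpha_j - 2\pi$. This single identity already contains both directions of the equivalence, and it passes the obvious sanity check: a flat rectangle maps to a planar rectangle with $\alpha_j = \pi/2$, so the angle sum is $2\pi$ precisely when the curvature integral is zero.

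For the forward direction, if $f$ is developable then $K \equiv 0$, so the area integral vanishes for every $P$ and the angle sum is exactly $2\pi$. For the converse, suppose the angle condition \eqref{eq:angles_2pi} holds for every axis-aligned rectangle; then $\int_{f(P)} K\, dA = 0$ for all such $P$. A standard localization argument finishes the proof: if $K(p) \neq 0$ at some interior point, then by continuity $K$ keeps a fixed sign on a small rectangle around $p$, forcing a nonzero integral there, a contradiction. Hence $K \equiv 0$ and $f$ is developable.

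I expect the only delicate points to be bookkeeping rather than conceptual. The first is fixing the orientation and sign convention so that the corner contributions read as the exterior angles $\pi - \alpha_j$, consistent with the flat case above. The second is making explicit that a sub-arc of a coordinate geodesic is itself a geodesic, so that $\kappa_g = 0$ holds along the \emph{entire} boundary of $f(P)$ rather than merely at interior points of the coordinate lines. The continuity-based localization in the converse is routine once $K$ is known to be continuous, which it is for a smooth immersion.
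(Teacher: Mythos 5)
Your proof is correct and takes essentially the same route as the paper: apply local Gauss--Bonnet to $f(P)$, use the geodesic-net hypothesis to kill the $\int_{\partial f(P)} \kappa_g\, ds$ term, and finish the converse with a continuity/localization argument on a small rectangle where $K$ has fixed sign. If anything, your bookkeeping is the more careful one --- writing the corner contributions as exterior angles $\pi - \alpha_j$ yields the identity $\int_{f(P)} K\, dA = \sum_{j=1}^{4}\alpha_j - 2\pi$, whereas the paper's displayed equation adds the interior angles $\alpha_j$ directly (a sign slip), even though its subsequent conclusions, such as $\sum_j \alpha_j > 2\pi$ when $K>0$, agree with your corrected formula.
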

\begin{proof}
Applying the local Gauss-Bonnet theorem to $P$ (see \cite{do_carmo}, chapter 4, page 268), we get
\begin{equation}
\label{eq:modGaussBonnet}
\int_{f(P)} K \, dA  + \int_{\partial f(P)} \kappa_g\, ds \ +\ \sum_{j=1}^4 \alpha_j   = 2\pi,
\end{equation}
where $K$ is the Gauss curvature and $\kappa_g$ is the geodesic curvature.
Since $f$ is a geodesic net, the images of $P$'s edges under $f$ are geodesics, and so $\kappa_g = 0$ on the curves of $\partial f(P)$, hence $\int_{\partial f(P)} \kappa_g\, ds = 0$.\\ 
$[\Rightarrow]$ Assume $f(P)$ is developable. Then $K$ vanishes and $\int_{f(P)} K \, dA = 0$, hence $\sum_{j=1}^4 \alpha_j = 2\pi$.\\
$[\Leftarrow]$ Assume $f(P)$ is not developable. Then there exists a point $p = f(x_*,y_*)$ such that $K(p) \neq 0$; assume w.l.o.g.\ $K(p) > 0$. There is a sufficiently small neighborhood $U$ with $(x_*,y_*) \in U$ such that $K > 0$ on $f(U)$. Let $P\subset U$ be an axis-aligned rectangle, then $\int_{f(P)} K\,dA > 0$ and from \eqref{eq:modGaussBonnet} we have $\sum_{j=1}^{4} \alpha_j  > 2\pi$, contradicting our condition \eqref{eq:angles_2pi}.
\end{proof}

\begin{topbot}
\begin{corollary}
\label{cor:cont_orth_geo_dev}
An orthogonal geodesic net $f$, i.e., a geodesic net with $\sphericalangle(t_1, t_2) = \frac{\pi}{2}$, is a developable net.
\end{corollary}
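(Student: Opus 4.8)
The plan is to obtain this as an immediate specialization of \lemmaref{Lem:geo_dev_cond}: since the lemma characterizes developability of a geodesic net by the angle-sum condition $\sum_{j=1}^4 \alpha_j = 2\pi$ on every axis-aligned rectangle, it suffices to verify that orthogonality forces this sum to equal $2\pi$ for each such rectangle. First I would emphasize that orthogonality is a pointwise hypothesis: $\sphericalangle(t_1,t_2)=\frac{\pi}{2}$ holds at every point of the parameter domain, hence in particular at each of the four corners of any axis-aligned rectangle $P$.

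Next I would compute the four interior angles $\alpha_j$ of the curved rectangle $f(P)$. Each $\alpha_j$ is the angle between the tangent directions of the two coordinate curves meeting at that corner, measured into the interior of $f(P)$. At the corner $(x_0,y_0)$ these directions are $f_x$ and $f_y$, so $\alpha_1 = \sphericalangle(f_x,f_y) = \frac{\pi}{2}$. At the remaining corners one or both tangent directions are reversed (e.g.\ $-f_x$ and $f_y$ at $(x_1,y_0)$), which replaces the angle by its supplement; but since $\pi - \frac{\pi}{2} = \frac{\pi}{2}$, each such angle is again $\frac{\pi}{2}$. Thus $\alpha_j = \frac{\pi}{2}$ at every corner regardless of the sign convention.

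Summing gives $\sum_{j=1}^4 \alpha_j = 4\cdot\frac{\pi}{2} = 2\pi$, and because this holds for \emph{every} axis-aligned rectangle $P \subset \R^2$, the forward direction of \lemmaref{Lem:geo_dev_cond} applies and yields that $f$ is developable.

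I do not expect a genuine obstacle here, since the corollary is a direct consequence of the lemma. The only point requiring a moment's care is the sign bookkeeping at the corners where the coordinate tangents are reversed; one must check that the \emph{interior} angle there is $\frac{\pi}{2}$ rather than its supplement. This is resolved trivially by the self-complementary nature of the right angle, so the argument reduces to a one-line substitution into the lemma.
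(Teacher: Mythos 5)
Your proposal is correct and follows exactly the route the paper intends: the corollary is stated as an immediate consequence of \lemmaref{Lem:geo_dev_cond}, and the implicit argument is precisely your observation that pointwise orthogonality forces all four interior angles of every curved rectangle $f(P)$ to equal $\frac{\pi}{2}$ (your sign check at the corners with reversed tangents included), so that $\sum_{j=1}^4 \alpha_j = 2\pi$ and the lemma yields developability. One small remark: the implication you actually invoke (angle sum $2\pi$ on every rectangle implies developable) is the $[\Leftarrow]$ half of the lemma, the one the paper proves by contradiction, rather than the ``forward'' Gauss--Bonnet half, but since the lemma is a biconditional this is purely a labeling issue and does not affect your argument.
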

\end{topbot}
An isometry $f$ of a planar region $U \subseteq \R^2$ is an orthogonal geodesic net, as it maps a regular grid in the plane to orthogonal geodesics. Therefore the opposite is also true:  every developable net can be parameterized by an orthogonal geodesic net. This is summarized by the following corollary:

\begin{topbot}
\begin{corollary}
\label{cor:orth_geo_dev_equiv}
A smooth surface is developable if and only if it can be locally parameterized by orthogonal geodesics.
\end{corollary}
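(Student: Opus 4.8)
The plan is to prove the two implications separately, leaning on the machinery already in place rather than redoing any curvature analysis. The backward direction is essentially immediate: if a surface $S$ can be locally parameterized by an orthogonal geodesic net $f$, then \corollaryref{cor:cont_orth_geo_dev} tells us at once that $f$ is a developable net, so $S$ is developable. The only thing worth noting is that \corollaryref{cor:cont_orth_geo_dev} already packages the relevant Gauss--Bonnet argument (through \lemmaref{Lem:geo_dev_cond}), so no further vanishing-curvature computation is needed here.

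For the forward direction I would start from the definition of developability used throughout the paper, namely that $S$ is locally isometric to a planar region: there is a local isometry $\phi : U \to S$ with $U \subseteq \R^2$ open. I would then take the most naive parameterization of $U$, the identity chart whose coordinate curves are the horizontal and vertical straight lines of the Cartesian grid, and push it forward by $\phi$ to obtain a parameterization $f = \phi$ of a piece of $S$. Since $\phi$ is an isometry it is in particular a local diffeomorphism, so $f$ is a regular parameterization (an immersion), and the two properties left to verify are that its coordinate curves are geodesics and that they meet orthogonally.

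Both properties reduce to the fact that an isometry preserves all intrinsic data, i.e.\ the first fundamental form. Orthogonality is the easy half: the coordinate directions of the flat grid are orthogonal, and because $\phi$ preserves inner products of tangent vectors it sends them to orthogonal tangents, giving $\sphericalangle(t_1,t_2) = \tfrac{\pi}{2}$. The geodesic property is the step that deserves the most care, and it is the one place where I would be explicit about the classical result invoked: straight lines are geodesics of the flat plane, and geodesics are an intrinsic notion determined by the Levi-Civita connection, which depends only on the metric, so an isometry carries geodesics to geodesics. Hence the images of the grid lines are geodesics on $S$ meeting at right angles, and $f$ is an orthogonal geodesic net. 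Combining the two directions closes the equivalence; I expect the only genuine subtlety to be bookkeeping about locality, since $\phi$, and therefore the orthogonal geodesic parameterization, exists only on a neighborhood of each point, which is precisely why the statement is phrased locally and why no global compatibility of charts is required.
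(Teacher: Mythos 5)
Your proposal is correct and takes essentially the same route as the paper: the backward direction is exactly \corollaryref{cor:cont_orth_geo_dev} (packaged Gauss--Bonnet), and the forward direction is precisely the paper's observation that a local isometry of a planar region $U \subseteq \R^2$ maps the regular Cartesian grid to orthogonal geodesics, since isometries preserve angles and carry geodesics to geodesics. You merely spell out the intrinsic-invariance details that the paper compresses into a single sentence.
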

\end{topbot}
We are now ready to discuss discrete geodesic nets and our derivation of an equivalent condition for their orthogonality.

\subsection{Discrete geodesic nets}
As a base for our model we use the following definition:

\begin{mydefinition}
\label{def:wunderlich}
A discrete net $F$ is a \emph{discrete geodesic net} if each two opposing angles made by the edges of a star in the net are equal (see \figref{fig:geodesic_star}). 
\end{mydefinition}

This is a modification of a definition set by Wunderlich~\shortcite{wunderlich} in his work discretizing Voss surfaces, which are surfaces parameterized through conjugate geodesics. By \cite{wunderlich}, a discrete net $F$ is a discrete Voss surface if it is a planar quad net that also satisfies the angle condition in \defref{def:wunderlich}. We remove the planarity restriction, as we are interested in discretizing geodesics that are not necessarily conjugate.

To obtain an intuition, consider the polylines $(\Fbo, F, F_1)$ and $(\Fbt, F, F_2)$ as two discrete coordinate curves passing through point $F$. A geodesic curve is ``as straight as possible'', dividing the angle deviation from $\pi$ on both sides equally, i.e., $\alpha_1+\alpha_2 = \alpha_3 + \alpha_4$ for the first curve and 
$\alpha_2+\alpha_3 = \alpha_4 + \alpha_1$ for the second, where $\alpha_1, \ldots, \alpha_4$ are the angles around the star of $F$ (see \figref{fig:geodesic_star}). Together, these two conditions are equivalent to $\alpha_1=\alpha_3$ and $\alpha_2=\alpha_4$, as in \defref{def:wunderlich}.

\begin{figure}[b]
\centering
   \includegraphics[width=0.8\linewidth]{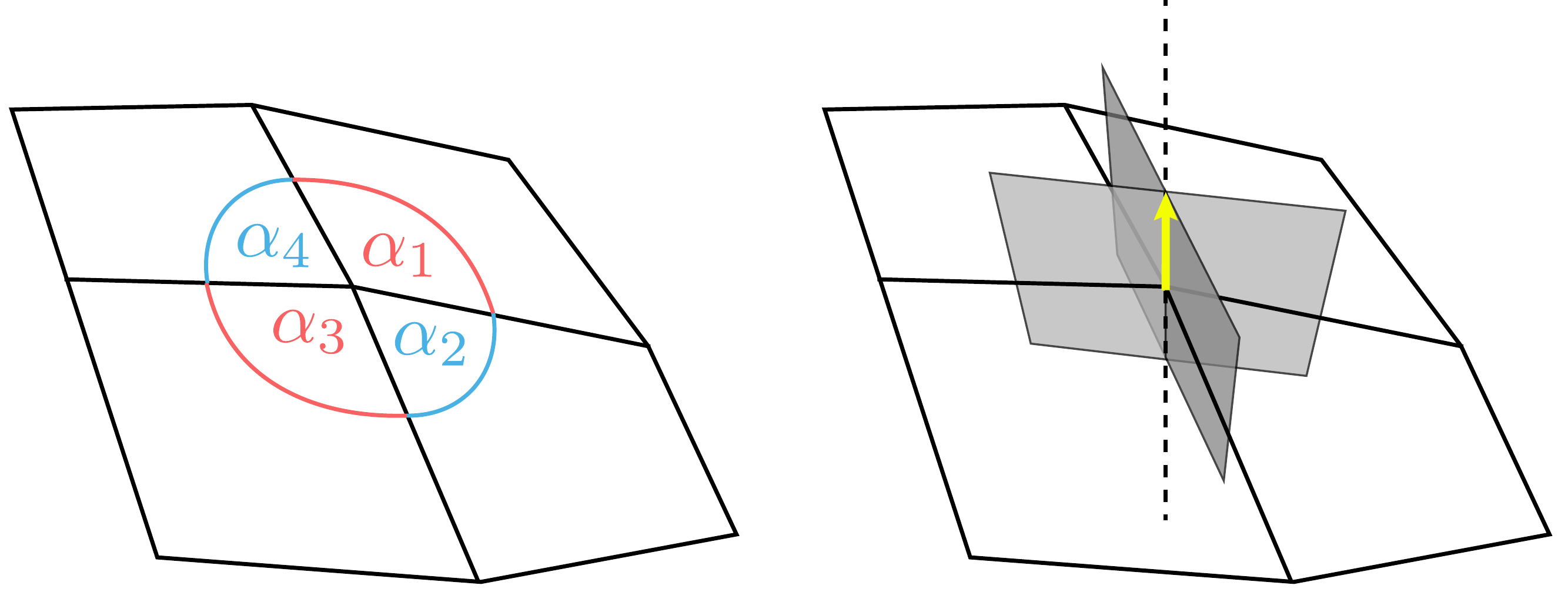}
   \caption{Left: A star in a discrete geodesic net has equal opposing angles. Right: On a geodesic star, the intersection of the osculating planes of the discrete coordinate curves is the surface normal.}
   \label{fig:geodesic_star}
\end{figure}

We define tangents and normals on discrete geodesic nets through their (discrete) coordinate lines, mimicking the properties of their continuous counterparts. On a smooth geodesic net $f$, let $p = f(x_0,y_0)$ be some point and $\gamma_1(t) = f(x_0 + t,y_0),\  \gamma_2(t) = f(x_0,y_0 + t)$ the coordinate lines through $p$.  The curves $\gamma_1$ and $\gamma_2$ are geodesics emanating from $p$ at two linearly independent directions $\gamma_1'(0) = f_x,\  \gamma_2'(0) = f_y$. If $\gamma_1(t)$ is regular and non-degenerate at $0$, i.e., $\gamma_1'(0), \gamma_1''(0) \neq 0$, it has a well defined Frenet frame $\{t_1,n_1,b_1\}$ and an osculating plane $\Pi_1$ spanned by $t_1,n_1$. Since $\gamma_1(t)$ has zero geodesic curvature, its curvature is equal to the normal curvature of the surface, which implies that the curve's normal is in fact parallel to the surface normal at $p$: $n_1 \parallel n$ (where $n = \frac{t_1 \times t_2}{\norm{t_1 \times t_2}}$). If also $\gamma_2$ has non-vanishing first and second derivatives, the surface normal $n$ is parallel to the intersection line between the two osculating planes $\Pi_1, \Pi_2$. We can find a natural discrete model for those quantities for a discrete geodesic net $F$.

Let $F$ be a vertex on a discrete geodesic net, and let $\Gamma_1, \Gamma_2$ be discrete geodesic curves through $\Fbo,F,F_{1}$ and $\Fbt,F,F_{2}$, respectively. We say that the curve $\Gamma_j$ is non-degenerate if the three points $F_{\bar j},F,F_{j}$ are not collinear. In that case, we can define the osculating plane and Frenet frame:

\begin{mydefinition} \label{def:discrete_osc_plane}  The osculating plane \ $\Pi_j, \ j=1,2,$ of a non-degenerate discrete curve \ $\Gamma_j$ through vertices $F_{\bar j},F,F_{j}$ is the plane passing through these three points.
The Frenet frame of \ $\Gamma_j$ at $F$ is denoted by \ $\{T_j,N_j,B_j\}$, where 
\begin{eqnarray*}
 &T_j = \frac{\delta_{j}F- \delta_{\bar j}F}{\norm{\delta_{j}F - \delta_{\bar j}F}},\ 
 N_j = \frac{\delta_{j}F + \delta_{\bar j}F}{\norm{\delta_{j}F + \delta_{\bar j}F}},\ 
 B_j = T_j \times N_j.
\end{eqnarray*}
\end{mydefinition}
See \figref{fig:discrete_frenet_frame} for an illustration. Note that $T_j$ are well defined also when $F_{\bar j},F,F_{j}$ are collinear, and are never zero as our net is assumed to be a discrete immersion.

\begin{figure}[h]
\centering
   \includegraphics[width=0.6\linewidth]{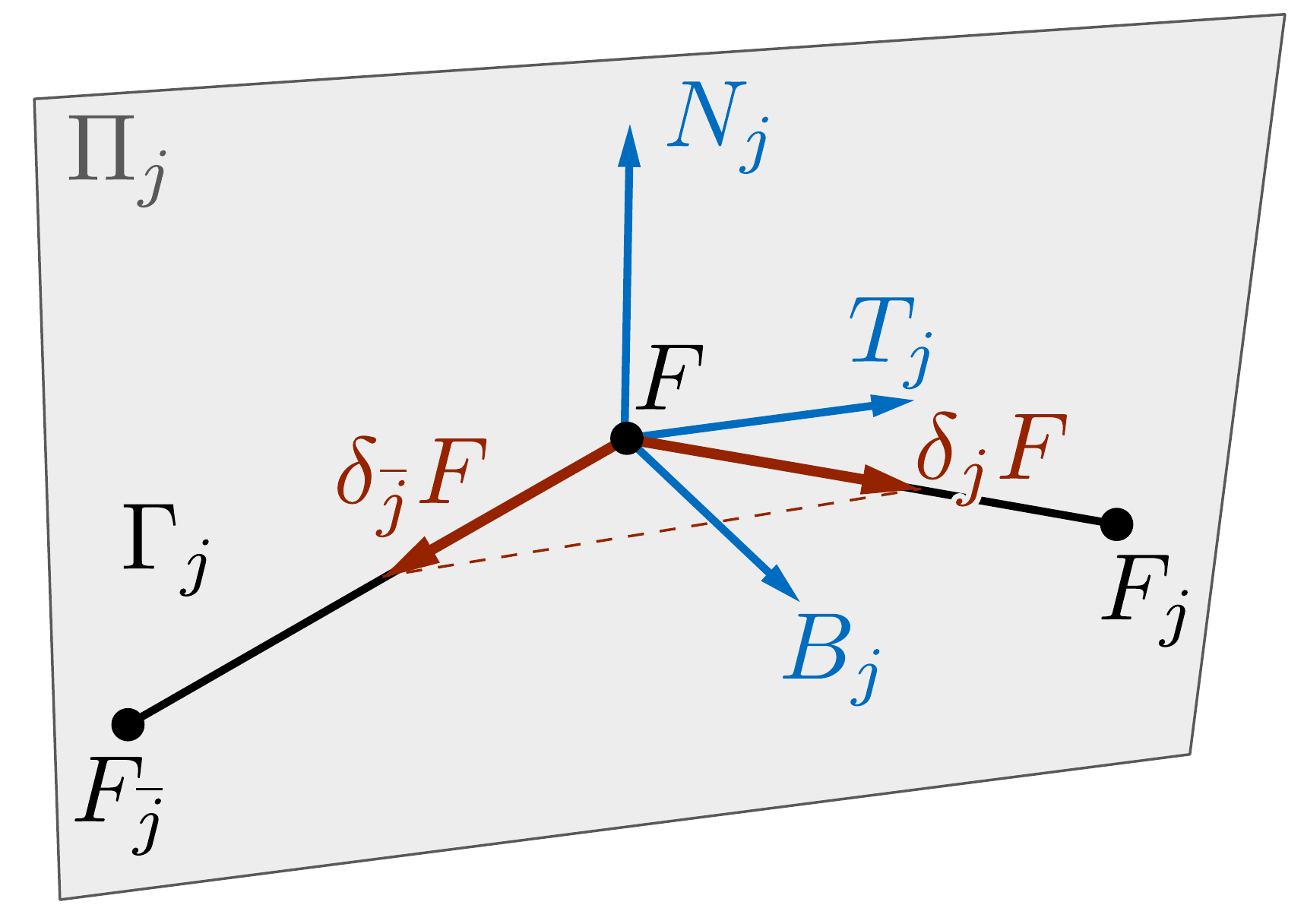}
   \caption{A discrete coordinate curve $\Gamma_j$ at $F$ (in black), its osculating plane $\Pi_j$ spanned by the edges of $\Gamma_j$, the Frenet frame (in blue): tangent $T_j$, normal~$N_j$  and binormal $B_j$. The dashed red vector is $\delta_jF-\delta_{\bar j}F$.}
   \label{fig:discrete_frenet_frame}
\end{figure}

\begin{mydefinition} 
\label{def:N_map} 
The discrete Gauss map of a geodesic net $F$ is $$N=\frac{T_1 \times T_2}{\norm{T_1 \times T_2}},$$
where $T_1, T_2$ are defined as above.
\end{mydefinition}

Just as in the continuous case, the principle normals of discrete geodesic curves and the surface normal agree, as shown by the following lemma:

\begin{lemma}\label{Lem:discrete_geo_n}
Let $\Gamma_1, \Gamma_2$ be two non-degenerate discrete curves around a vertex of a discrete geodesic net $F$ and $\{T_1,N_1,B_1\}, \{T_2,N_2,B_2\}$ their discrete Frenet frames. Then $N_1, N_2$ and the discrete surface normal $N$ (see \defref{def:N_map}) are all parallel and lie on the intersection of the osculating planes $\Pi_1$ and $\Pi_2$.
\end{lemma}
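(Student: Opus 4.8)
My plan is to prove the stronger statement that each of $N_1$ and $N_2$ is orthogonal to \emph{both} discrete tangents $T_1$ and $T_2$. Since the net is a discrete immersion and $N$ is well defined through the cross product of \defref{def:N_map} (so $T_1,T_2$ are linearly independent), being perpendicular to both $T_1$ and $T_2$ forces a vector to be parallel to $T_1\times T_2$, and hence to $N$. Thus establishing the four orthogonality relations immediately yields $N_1\parallel N_2\parallel N$, which is the parallelism half of the lemma; the osculating-plane half will then follow cheaply.

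The first two orthogonalities, $N_j\perp T_j$, hold for \emph{any} net and need no geodesic hypothesis: because the edge directions are unit vectors, their sum and difference are orthogonal, $\ip{\delta_{j}F-\delta_{\bar j}F,\ \delta_{j}F+\delta_{\bar j}F}=\norm{\delta_{j}F}^2-\norm{\delta_{\bar j}F}^2=0$, so $T_j\perp N_j$ for $j=1,2$. By construction $N_j$ is a combination of $\delta_{j}F$ and $\delta_{\bar j}F$, hence it lies in the osculating plane $\Pi_j=\mathrm{span}\{\delta_{j}F,\delta_{\bar j}F\}$ of \defref{def:discrete_osc_plane}.

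The crux is the cross term, say $\ip{N_1,T_2}=0$, and this is exactly where the discrete geodesic condition enters. Expanding $\ip{\delta_{1}F+\delta_{\bar 1}F,\ \delta_{2}F-\delta_{\bar 2}F}$ produces four inner products of unit edge directions, each equal to the cosine of one of the four star angles at $F$. With the labeling $\alpha_1=\sphericalangle(\delta_{1}F,\delta_{2}F)$, $\alpha_2=\sphericalangle(\delta_{2}F,\delta_{\bar 1}F)$, $\alpha_3=\sphericalangle(\delta_{\bar 1}F,\delta_{\bar 2}F)$, $\alpha_4=\sphericalangle(\delta_{\bar 2}F,\delta_{1}F)$, the expansion reads $\cos\alpha_1-\cos\alpha_4+\cos\alpha_2-\cos\alpha_3$. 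The equal-opposing-angles condition of \defref{def:wunderlich}, namely $\alpha_1=\alpha_3$ and $\alpha_2=\alpha_4$, makes this telescope to zero, so $N_1\perp T_2$; the symmetric computation gives $\ip{N_2,T_1}=0$.

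Combining all four relations, $N_1$ and $N_2$ are each perpendicular to the independent pair $T_1,T_2$, hence both are parallel to $T_1\times T_2$ and therefore to $N$. For the intersection claim, note $N_1\in\Pi_1$ and $N_2\in\Pi_2$ share this common direction, so that direction lies in $\Pi_1\cap\Pi_2$, and $N$ lies along it as well; as a byproduct this shows $\Pi_1\neq\Pi_2$ (otherwise a nonzero vector in a single plane would be perpendicular to two independent vectors of that plane), so the intersection is a genuine line. The one delicate point I expect is the bookkeeping in the cross term: correctly pairing each edge-direction inner product with the right star angle and confirming that the paper's notion of \emph{opposing} angles is precisely what cancels the expression. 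Everything else is forced by the unit-length normalization of the edges.
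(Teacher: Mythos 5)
Your proof is correct and takes essentially the same route as the paper's: the paper likewise observes $N_1 \perp T_1$ by construction, obtains $\langle N_1, T_2\rangle = 0$ by direct computation from the equal-opposing-angles condition, and concludes $N_1 \parallel N$ (with the symmetric argument for $N_2$). You have simply written out explicitly the cosine-telescoping computation and the osculating-plane intersection step that the paper leaves as "direct computation."
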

\begin{proof}
By construction, $N_1 \bot T_1 $, and by direct computation using the opposite angles condition (\defref{def:wunderlich}) we have $\langle N_1, T_2 \rangle  = 0$. Therefore $N_1 \parallel 
 N$. Similar computation shows $N_2 \bot T_1$ and therefore $N_2 \parallel N$. \end{proof}
Note that $N$ is the angle bisector of both discrete curves meeting at $F$, see \figref{fig:geodesic_star}.

\subsection{Discrete developable geodesic net}
Using the tangents defined above, we are now ready to define discrete developable surfaces through nets of orthogonal geodesics:
\begin{topbot}
\begin{mydefinition} 
\label{def:orth_geo_net}  
A discrete orthogonal geodesic net is a discrete geodesic net where at every star, the discrete tangents of the two discrete coordinate curves are orthogonal: $T_1  \bot T_2 $. Such a net is a discrete developable surface in orthogonal geodesic parameterization. 
\end{mydefinition}
\end{topbot}
This definition obviously reflects the smooth case, where an existence of an orthogonal geodesic net on a surface is equivalent to developablity (\corollaryref{cor:orth_geo_dev_equiv}). The following theorem provides useful interpretations of our net and helps to see why this definition is equivalent to \defref{def:def1} (see also \figref{fig:geo_cross}). 

\begin{figure}[b]
\centering
   \includegraphics[width=\linewidth]{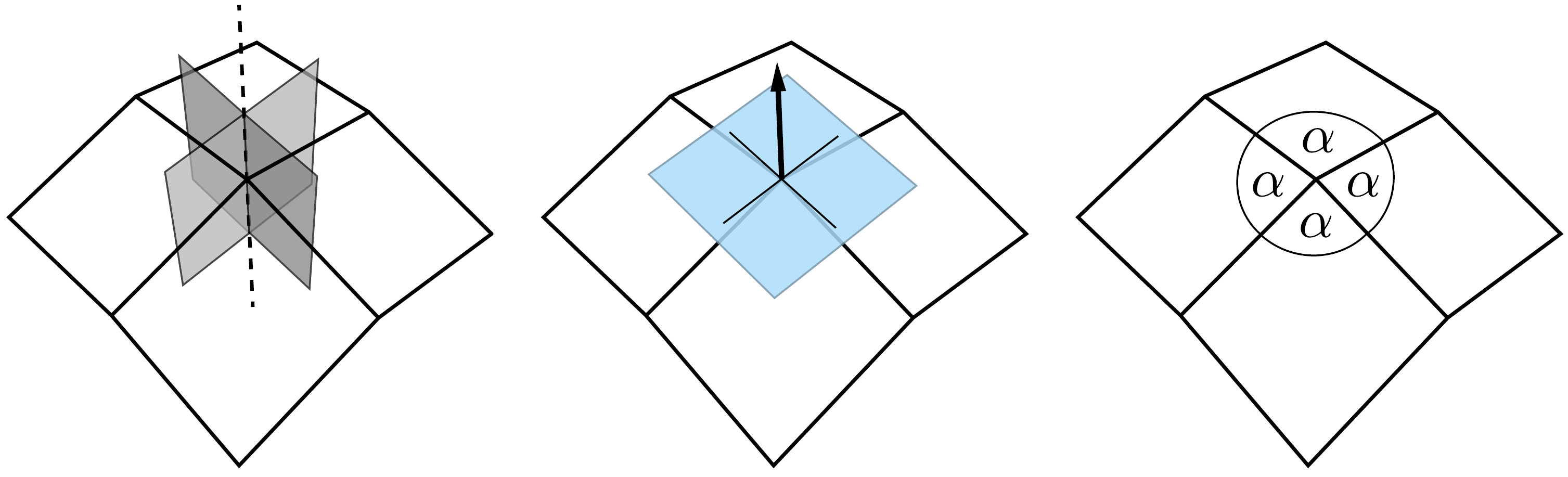}
   \caption{These three conditions on a geodesic star are equivalent: the two osculating planes are perpendicular to each other (left), the projection of the star's edges onto the tangent plane forms an orthogonal cross (middle), all angles around the star are equal (right).}
   \label{fig:geo_cross}
\end{figure}

\begin{figure*}[t]
\includegraphics[width=\linewidth]{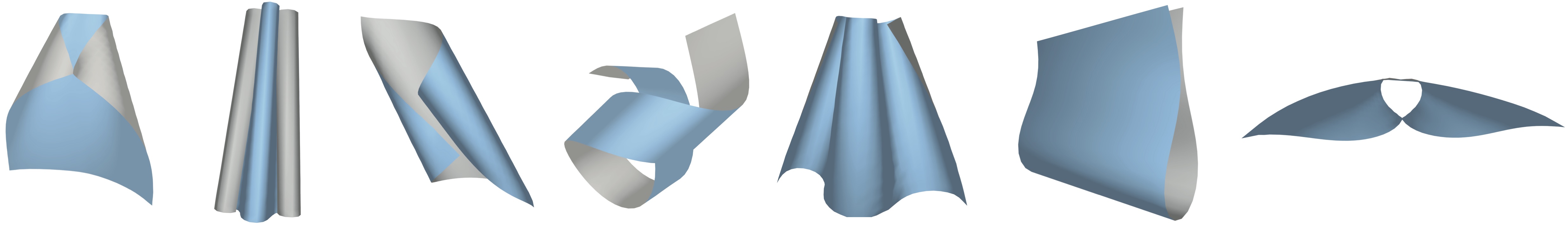}
\caption{\label{fig:paper_folding} Developable surfaces created using our vertex-handle based editing system. These examples were designed by deforming a flat sheet.}	
\end{figure*}

\begin{theorem}\label{Thm:orth_geo_equiv}
Assume a star has equal opposing angles, i.e., it fulfills the angles condition for discrete geodesic nets (\defref{def:wunderlich}). Then the following conditions are equivalent:
\begin{enumerate}
\item The discrete tangents of the coordinate curves are orthogonal: $T_1 \bot T_2$. \label{orth_tangents}
\item The edges of the star form a right-angle cross when projected into the discrete tangent plane, which is the plane orthogonal to the discrete normal $N$.\label{right_cross}
\item All angles between consecutive edges of the star are equal. \label{orth_geo_angle_cond}
\end{enumerate}
\end{theorem}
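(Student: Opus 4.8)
The plan is to reduce the three-dimensional geometry of the star to essentially planar reasoning by invoking \lemmaref{Lem:discrete_geo_n}, and then to prove the two equivalences $(\ref{orth_tangents})\Leftrightarrow(\ref{right_cross})$ and $(\ref{orth_tangents})\Leftrightarrow(\ref{orth_geo_angle_cond})$ separately, using orthogonality of the tangents as a hub. First I would record the structural consequence of the geodesic angle condition: by \lemmaref{Lem:discrete_geo_n} the principal normals satisfy $N_1 \parallel N_2 \parallel N$, and since $\Pi_j = \mathrm{span}\{T_j, N_j\}$ with $T_j \perp N_j$ by \defref{def:discrete_osc_plane}, it follows that $T_1 \perp N$ and $T_2 \perp N$. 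Thus both tangents lie in the tangent plane orthogonal to $N$, and each osculating plane can be rewritten as $\Pi_j = \mathrm{span}\{T_j, N\}$. This is the key step: it places both coordinate curves in a common frame sharing the axis $N$, which makes the remaining arguments two-dimensional.

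For $(\ref{orth_tangents})\Leftrightarrow(\ref{right_cross})$ I would project the four edge directions onto the tangent plane. Since $\delta_1 F, \delta_{\bar 1}F \in \Pi_1 = \mathrm{span}\{T_1, N\}$, removing their $N$-components leaves multiples of $T_1$; a short sign check, $\ip{\delta_1 F, T_1} > 0 > \ip{\delta_{\bar 1}F, T_1}$ (which holds because the net is a discrete immersion, so $\sphericalangle(\delta_1 F,\delta_{\bar 1}F) > 0$), shows these two projections are nonzero and opposite, hence span the line $\R T_1$. Likewise the projections of $\delta_2 F, \delta_{\bar 2}F$ span $\R T_2$. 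So the projected edges always form a cross with axes $T_1$ and $T_2$, and this cross is right-angled precisely when $T_1 \perp T_2$, giving the equivalence.

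The heart of the proof is $(\ref{orth_tangents})\Leftrightarrow(\ref{orth_geo_angle_cond})$, which I would settle by an explicit angle computation in the frame just constructed. Choosing orthonormal coordinates with $N = (0,0,1)$, $T_1 = (1,0,0)$ and $T_2 = (\cos\omega, \sin\omega, 0)$ where $\omega = \sphericalangle(T_1,T_2)$, I would decompose each edge inside its osculating plane as $\delta_1 F = \sin\psi_1\, T_1 + \cos\psi_1\, N$ and $\delta_{\bar 1}F = -\sin\psi_1\, T_1 + \cos\psi_1\, N$, and analogously for $\delta_2 F, \delta_{\bar 2}F$ with $\psi_2$ and $T_2$, where $2\psi_j = \sphericalangle(\delta_j F, \delta_{\bar j}F) \in (0,\pi)$ is the opening angle of the non-degenerate curve $\Gamma_j$, so $\sin\psi_j > 0$. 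Taking dot products of consecutive edges then yields
\begin{align*}
\cos\alpha_1 = \cos\alpha_3 &= \cos\psi_1\cos\psi_2 + \sin\psi_1\sin\psi_2\cos\omega,\\
\cos\alpha_2 = \cos\alpha_4 &= \cos\psi_1\cos\psi_2 - \sin\psi_1\sin\psi_2\cos\omega,
\end{align*}
where the automatic equalities $\alpha_1=\alpha_3$ and $\alpha_2=\alpha_4$ re-derive the geodesic condition as a consistency check. Hence all four angles are equal if and only if $\cos\alpha_1 = \cos\alpha_2$, i.e.\ $\sin\psi_1\sin\psi_2\cos\omega = 0$, which — since $\sin\psi_1,\sin\psi_2 > 0$ — holds exactly when $\cos\omega = 0$, that is $T_1 \perp T_2$.

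I expect the main obstacle to be bookkeeping rather than anything conceptual: pinning down the orientation of each $N_j$ relative to $N$ (the sign ambiguity left open by \lemmaref{Lem:discrete_geo_n}) and confirming that the cyclic order of edges around the star matches the pairing of consecutive-angle dot products. Both are benign — flipping the sign of an $N_j$ only flips the $N$-component in the decomposition and leaves the decisive factor $\sin\psi_1\sin\psi_2\cos\omega$ unchanged — but they must be stated carefully, alongside the standing non-degeneracy hypothesis on $\Gamma_1,\Gamma_2$ (so that $\psi_j \in (0,\tfrac{\pi}{2})$ and each $N_j$ is defined) that underlies the whole argument.
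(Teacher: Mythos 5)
Your proof is correct and takes essentially the same route as the paper: both invoke \lemmaref{Lem:discrete_geo_n} to decompose the four edge directions into components along $N$ and along $T_1$ or $T_2$ (your $\cos\psi_j,\ \sin\psi_j$ are just the paper's coefficients $a,b,c,d$ written trigonometrically using the fact that the edge directions are unit vectors), then prove (1)$\Leftrightarrow$(2) by dropping the $N$-component under projection and (1)$\Leftrightarrow$(3) by expanding dot products of consecutive edges to isolate the term $\langle T_1, T_2\rangle$. The only cosmetic differences are that you compute all four angle cosines explicitly where the paper compares just two of them using the standing opposing-angle assumption, and that you spell out the sign bookkeeping for $N_j$ versus $N$, which the paper leaves implicit in its unconstrained coefficients.
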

\begin{proof}
By \lemmaref{Lem:discrete_geo_n}, $N$ is a bisector of the unit-length vectors $\delta_{1}F,\delta_{\bar 1}F$, as well of the unit-length vectors $\delta_{2}F,\delta_{\bar 2}F$. Adding the formulas for $T_1, T_2$ from \defref{def:discrete_osc_plane} we have
\begin{align*}
\delta_1F + \delta_{\bar 1}F &= \tilde{a}\, N, &\quad \delta_2F + \delta_{\bar 2}F &= \tilde{c}\, N, \\
\delta_1F - \delta_{\bar 1}F &= \tilde{b}\,T_1,   &\quad \delta_2F - \delta_{\bar 2}F &= \tilde{d}\,T_2,
\end{align*}
for some $\tilde{a},\tilde{b},\tilde{c},\tilde{d} \in \R$. By adding/subtracting the respective equations of the second row to/from the first row, we can write the star's edge directions as
\begin{align*}
\delta_{1}F &= a\,N + b\, T_1, &\quad  \delta_{2}F &= c\,N + d\, T_2,\\
\delta_{\bar 1}F &= a\,N - b\,T_1, &\quad \delta_{\bar 2}F &= c\,N - d\, T_2,
\end{align*}
for some $a,b,c,d \in \R$. \\
$[\,$\eqref{orth_tangents}$\!\iff\!$\eqref{right_cross}$\,] $  Projection to the tangent plane is equivalent to removing the normal component $N$ from each vector, hence the direction vectors of the projected star edges are $bT_1, -bT_1, dT_2, -dT_2$ and the claim follows. \\
$[\,$\eqref{orth_geo_angle_cond}$\!\iff\!$\eqref{orth_tangents}$\,]$
As we assume opposing angles in the star are equal, \eqref{orth_geo_angle_cond}$\iff\! \langle \delta_{1}F,\ \delta_{2}F \rangle = \langle \delta_{\bar 1}F,\ \delta_{2}F \rangle \!\iff\! \langle aN + b T_1,\ cN + d T_2 \rangle = \langle aN - bT_1,\ cN + d T_2 \rangle$, which is equivalent to $T_1 \bot T_2$ for a non-degenerate star.
\end{proof}
Note that the third condition (all angles in the star are equal) subsumes the condition for a discrete geodesic net (\defref{def:wunderlich}) and conveniently encapsulates discrete orthogonal geodesic nets, as we expressed in \defref{def:def1}.


\section{Modeling deformations of discrete developable surfaces}
\label{sec:editing_system}
Our definition of discrete developable surfaces (\defref{def:def1}) is simple and local, such that it can be easily used in applications. We demonstrate this in an interactive editing system for discrete developable surfaces. Starting from a given discrete orthogonal geodesic net $F^0$, e.g., an orthogonal planar grid or a cylinder, the user can fix and move vertices around, as well as glue together or sever vertices. The latter is permitted only in case the operation keeps the mesh a (not necessarily oriented) manifold. We denote the set of vertices manipulated by the user (the handles) by $\mathcal{H}$. Whenever the user moves the handle vertices, the system computes a result from the space of discrete orthogonal geodesic nets, which is as close as possible to the prescribed handle positions. We analyze this shape space in \secref{sec:rigidity}. To choose a \emph{good}, or intuitive solution, our optimization includes isometry and smoothness regularizers, as well as constraints for boundary vertices. 


\subsection{Orthogonal geodesic constraints} \label{opt_const}
\defref{def:def1} gives us the feasible shape space through a set of constraints on each inner vertex of $F$ and a generalization for boundary vertices. We constrain every vertex to have all its corner angles equal. Let $e_j,\ j=1,\ldots,l,$ be the set of edges originating at a vertex $v$, ordered such that consecutive edges share a quad. Then the condition $\sphericalangle (e_j, e_{j+1}) = \sphericalangle (e_{j+1}, e_{j+2})$ is equivalent to:
\begin{equation}
\label{eq:star_constraints}
\langle e_j, e_{j+1} \rangle \|e_{j+2}\| - \langle e_{j+1}, e_{j+2} \rangle \|{e_{j}}\| = 0.
\end{equation}
In case of a corner boundary vertex with only two incident edges $e_1$ and $e_2$ and one angle, we constrain the angle to remain as in the reference shape: 
\begin{equation}
\label{eq:corner_constraints}
\frac{\langle e_1, e_2 \rangle}{\norm{e_1} \norm{e_2}} - \arccos(\alpha)  = 0,
\end{equation}
where $\alpha = \sphericalangle(e_1, e_2)$ in $F^0$. We denote the constraints \eqref{eq:star_constraints}, \eqref{eq:corner_constraints} as $c_i(F) = 0, \ i = 1, \ldots, m,$ where $i$ enumerates all the inner and boundary vertices and their relevant incident edges.

\subsection{Smoothness and isometry regularizers}
The constraints above do not encode smoothness or isometry, and simply projecting a given initial guess onto the feasible space might lead to unintuitive results. To generate smooth and aesthetically pleasing deformations, we seek a feasible solution that minimizes a deformation energy $E(F)$.
We employ a simple smoothness term, namely the Laplacian energy of the displacement w.r.t.\ the current state of the shape, or the current ``frame'', $F^k$:
\begin{equation}
E_\textrm{smooth}(F) = \norm{L(F) - L(F^k)}^2,
\end{equation}
where we use the simple uniform Laplacian $L$. The second energy term encourages maintaining isometry of the boundary, intuitively helping to control the scaling of the deformation:
\begin{equation}
E_\textrm{iso}(F) = \sum_{e_j \in \partial F} (\|{e_j}\|-l_j)^2,
\end{equation}
where $\partial F$ is the set of boundary edges of $F$, and $l_j$'s are the edge lengths in $F_0$. 
Finally, we add the positions of the handle vertices as soft constraints, since the user is likely to manipulate the handles in ways that are at odds with the developability constraints. The overall deformation energy is therefore
\begin{equation}
E(F) = E_\textrm{smooth} + w_\textrm{iso} E_\textrm{iso}(F) + w_\textrm{pos} \sum_{v\in \mathcal{H}}\norm{v - v_c}^2,
\end{equation}
where $v_c$ are the handle positions prescribed by the user and $w_\textrm{iso}$, $w_\textrm{pos}$ are scalar weights.

\subsection{Optimization} \label{sec:opt}
In each frame, we solve the following optimization problem:
\begin{equation} \label{eq:const_opt}
\begin{aligned}
& \argmin_F
& & E(F) \\
& \textrm{subject to}
& & c_i(F) = 0, \ \  i = 1, \ldots, m.
\end{aligned}
\end{equation}
We use the quadratic penalty method \cite{Nocedal}, which converts the above constrained minimization to a series of unconstrained problems of the form
\begin{equation} \label{inner_iter}
\argmin_F \quad w\, E(F) + \sum_{i} c_i (F)^2.
\end{equation}
The above is iterated starting with $w=w_0$ and halving the weight $w$ in each subsequent iteration, until the constraints are satisfied numerically, i.e.\ $\sum_{i} c_i (F)^2 < \epsilon$. The minimizations \eqref{inner_iter} are solved using using L-BFGS \cite{lbfgs}, where we use ARAP \cite{arap} with the given positional constraints to get an initial guess. The figures in this paper and the accompanying video were generated with the parameters $w_0=1,\ w_\textrm{iso}=1,\ w_\textrm{pos}=0.1,\ \epsilon = 1\text{e}{-12}$,  and the input mesh was first scaled to have an average edge length of $1$.

\begin{figure}[t]
\includegraphics[width=0.9\linewidth]{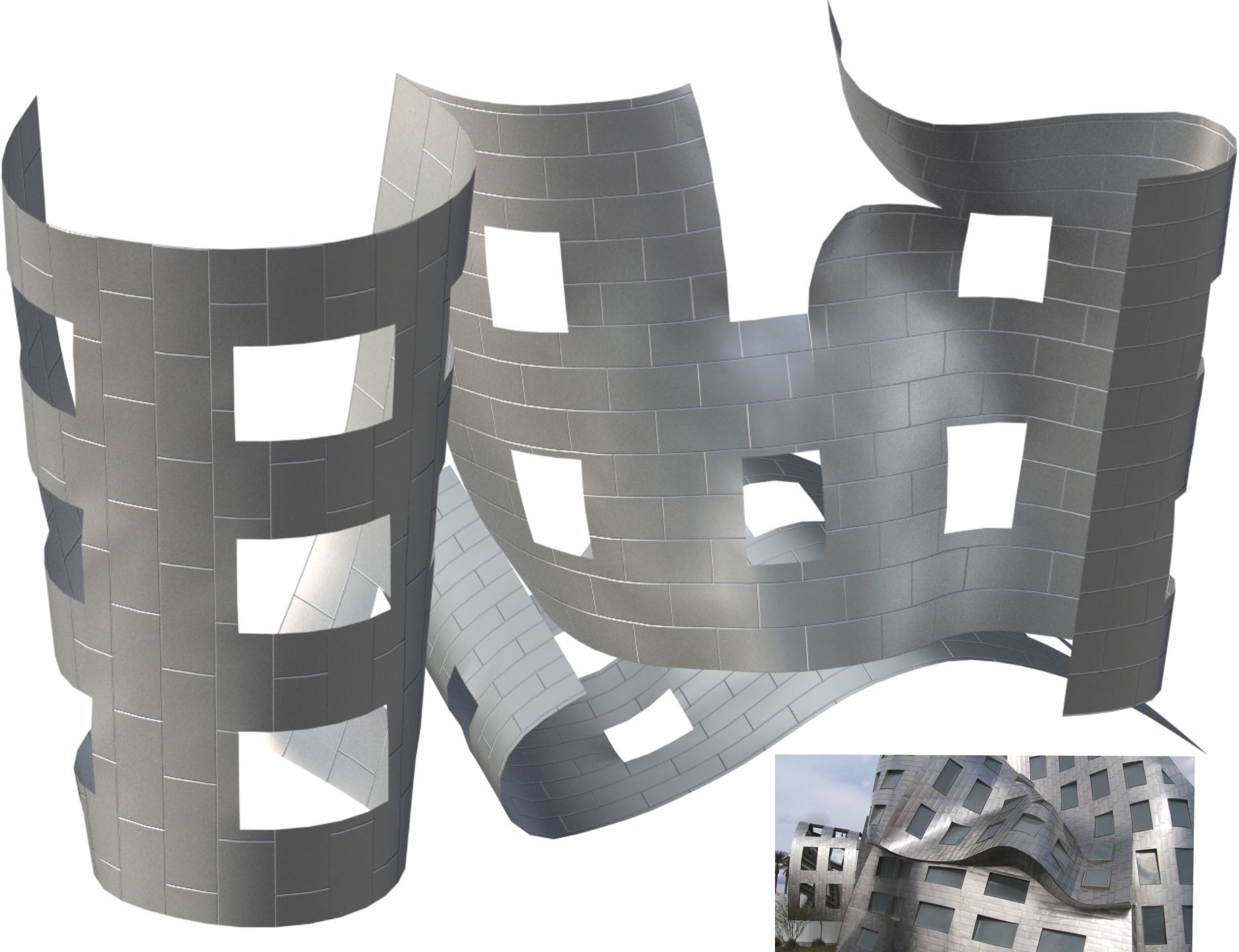}
\caption{\label{fig:gehry} An example of developable shapes with nontrivial topology, created in our interactive editing system (rendering done offline). Inspired by Frank Gehry's design of the Lou Ruvo Center for Brain Health (lower right). The texture coordinates for our model are simply obtained from the vertex coordinates of a planar rectangular grid with the same boundary edge lengths.}	
\end{figure}

\subsection{Results}
We implemented our editing system on a 3.4 GHz Intel Core i7 machine, on which our single threaded implementation can handle around 1000 vertices interactively.
The results in \figref{fig:paper_folding} demonstrate a variety of rolled, paper-like shapes similar to the results of \cite{solomon}, but made with a more intuitive, vertex-handle based editing system (see also the accompanying video). Our system can seamlessly handle surfaces with nontrivial topology, as well as non-orientable surfaces, as shown in Figs.\ \ref{fig:teaser}, \ref{fig:lantern}, \ref{fig:gehry}. 

Similarly to other nets in DDG, e.g., discrete $K$-surfaces, the geometric information of our net is only the vertex positions. Edges should not be seen as part of the surface, and the non-planarity of the quads in our model implies that we can only render and fabricate our surfaces by arbitrarily triangulating them. Note that this would also be the case for a dense sampling of a general smooth orthogonal geodesic net, which approximates our model, as shown later in \secref{taylor_sampling}. 
Nevertheless, we demonstrate in \figref{fig:fabrication} that our discrete model could be used for fabrication purposes.

\begin{figure}
\includegraphics[width=0.5\linewidth]{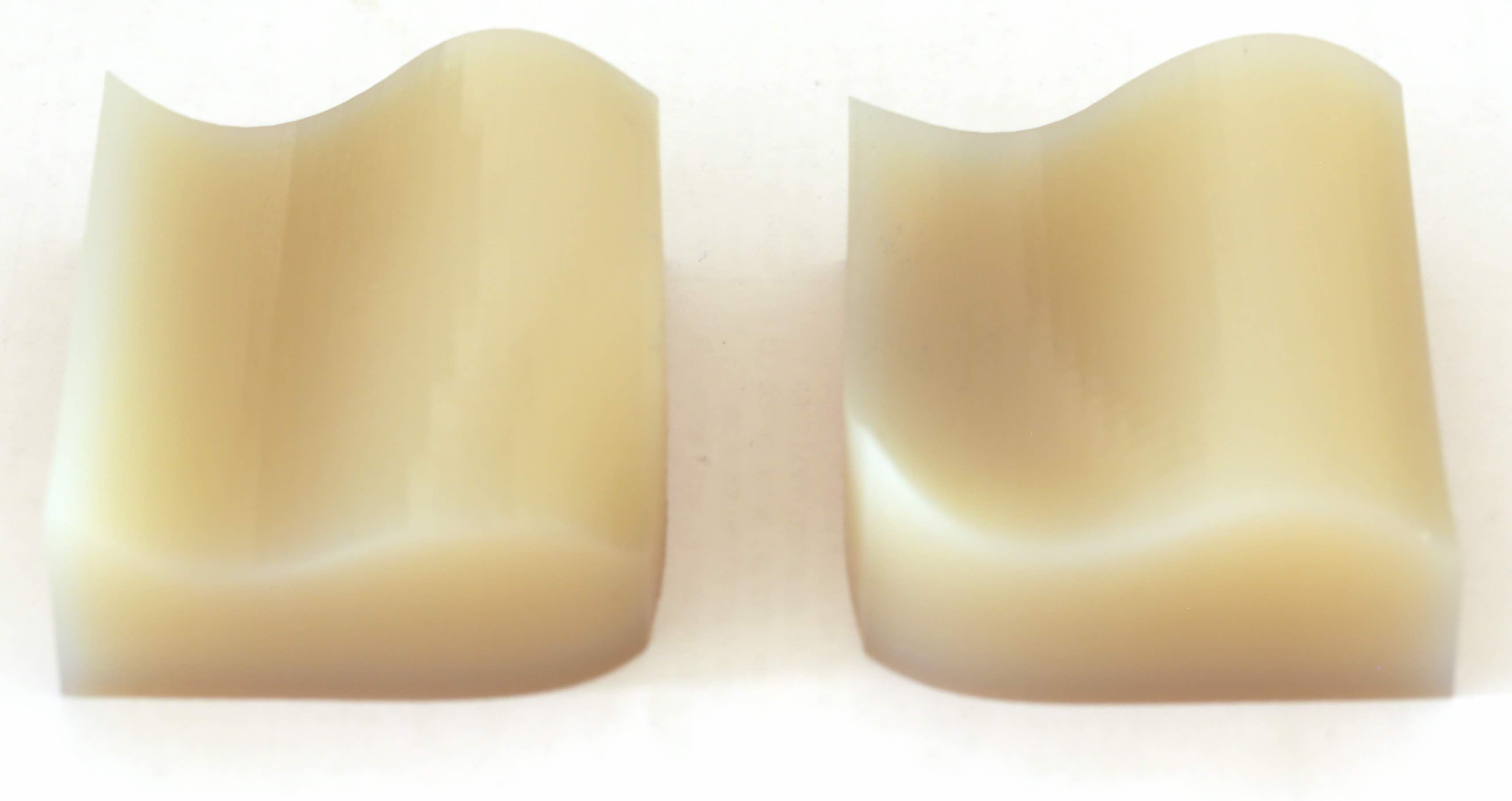}\includegraphics[width=0.5\linewidth]{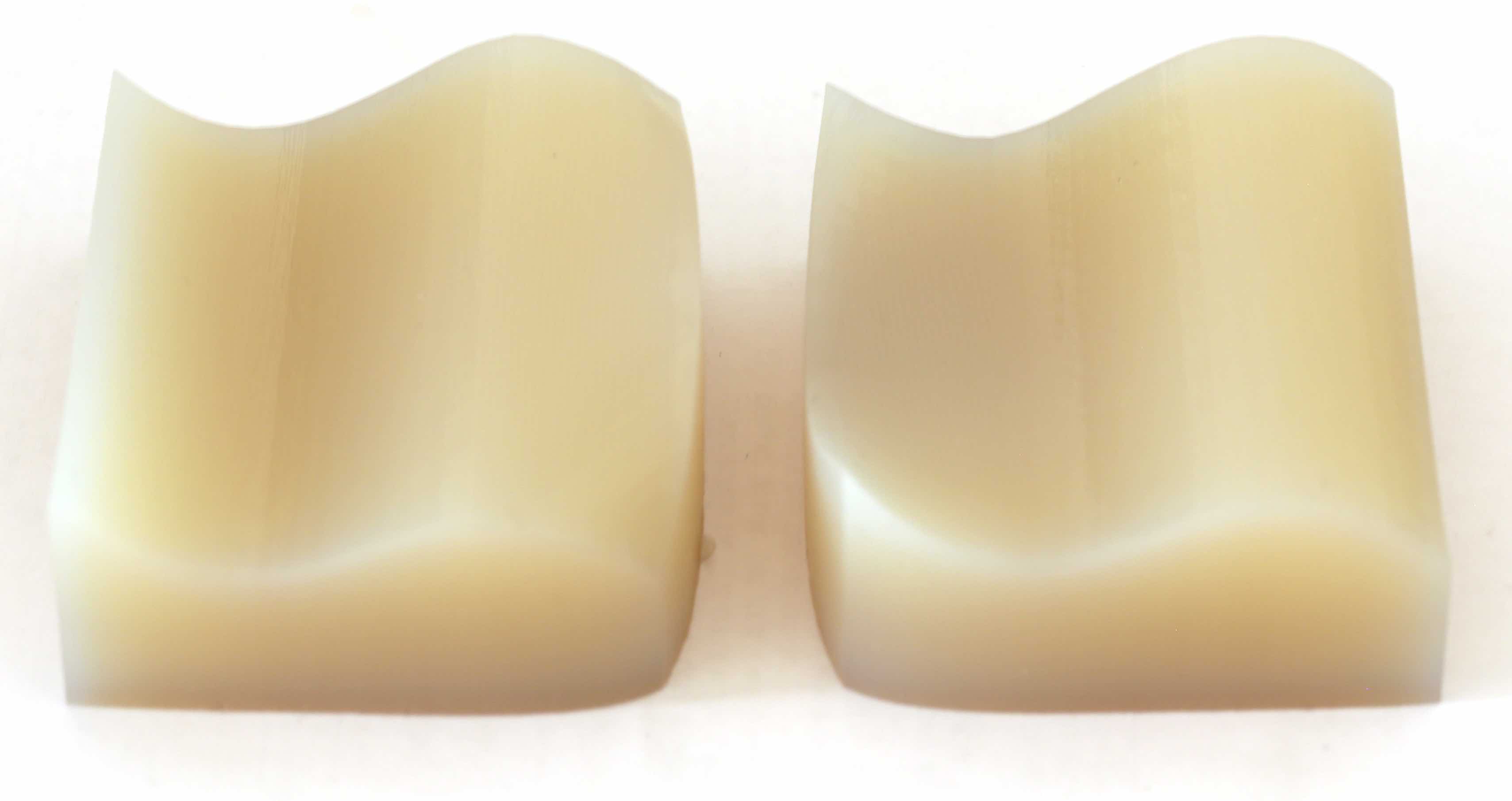}\\
\includegraphics[width=0.5\linewidth]{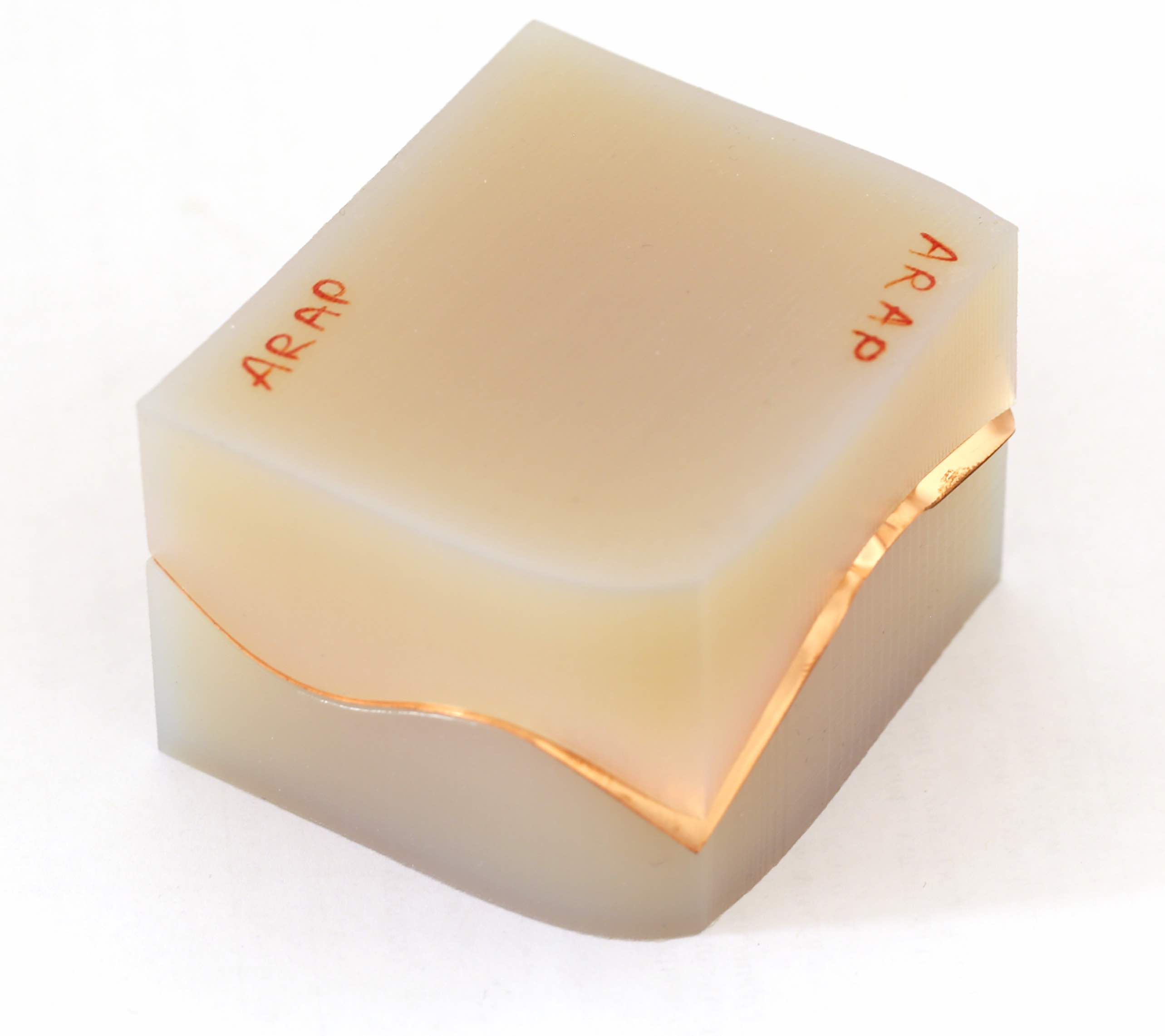}\includegraphics[width=0.5\linewidth]{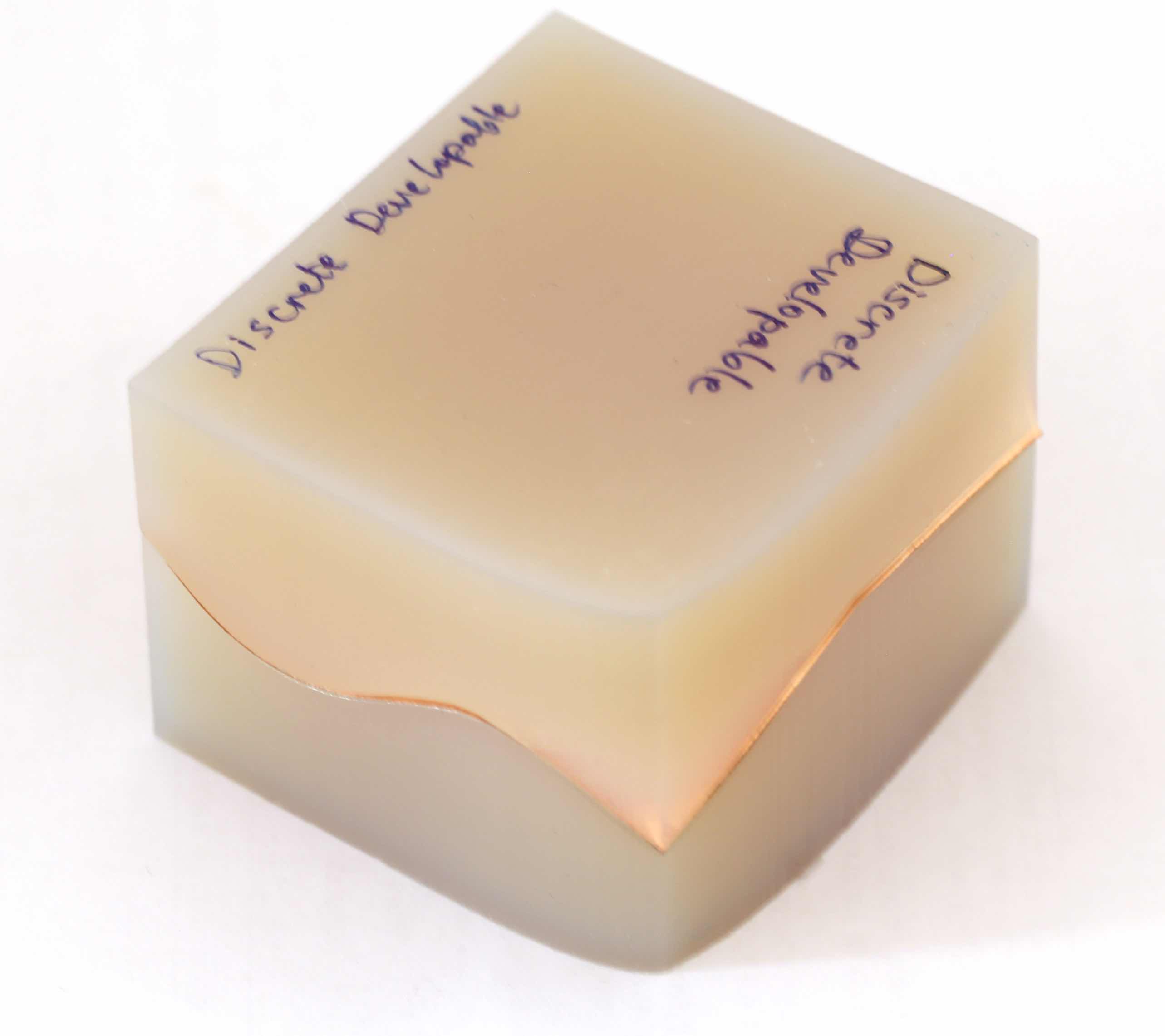}\\
\includegraphics[width=0.5\linewidth]{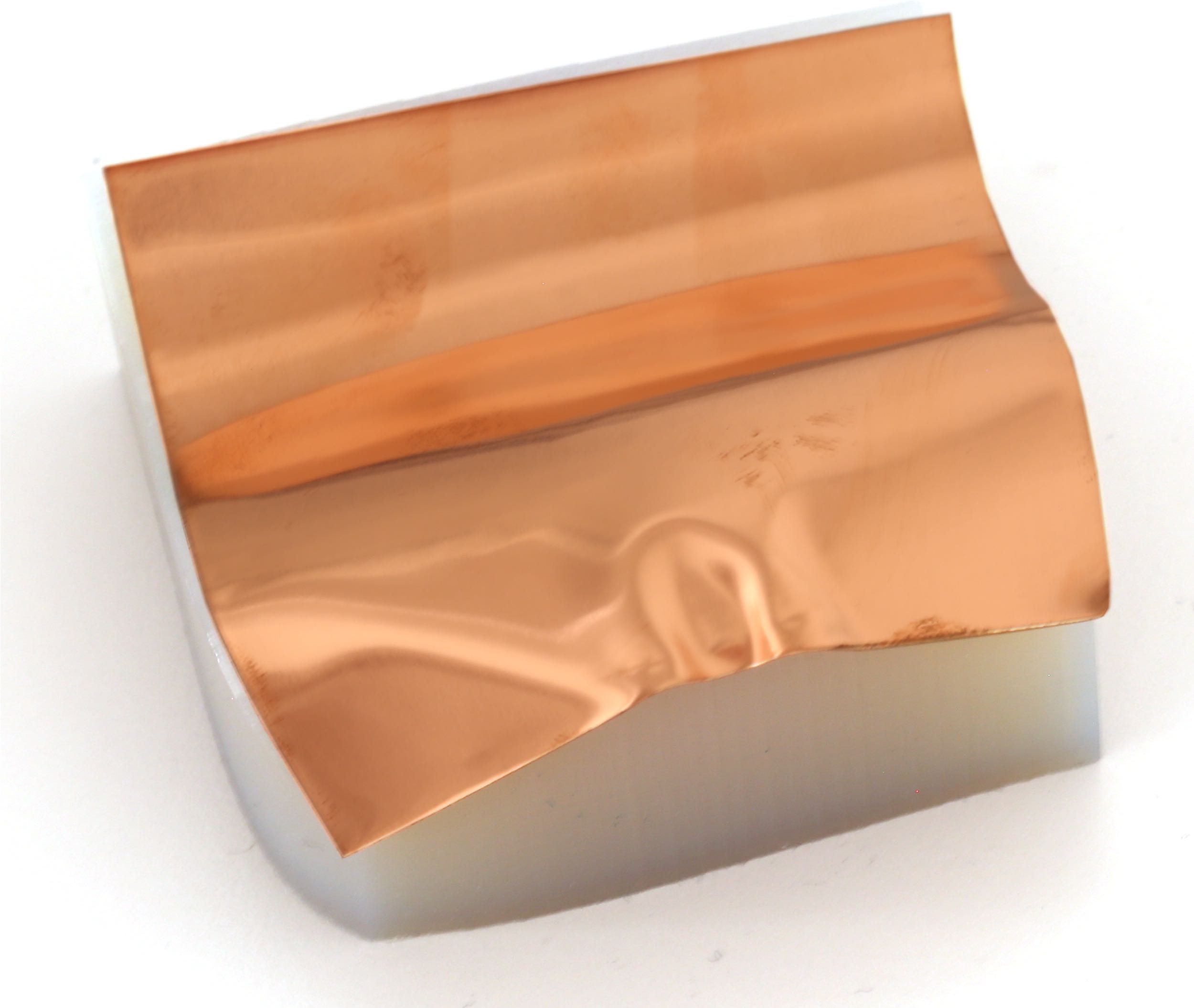}\includegraphics[width=0.5\linewidth]{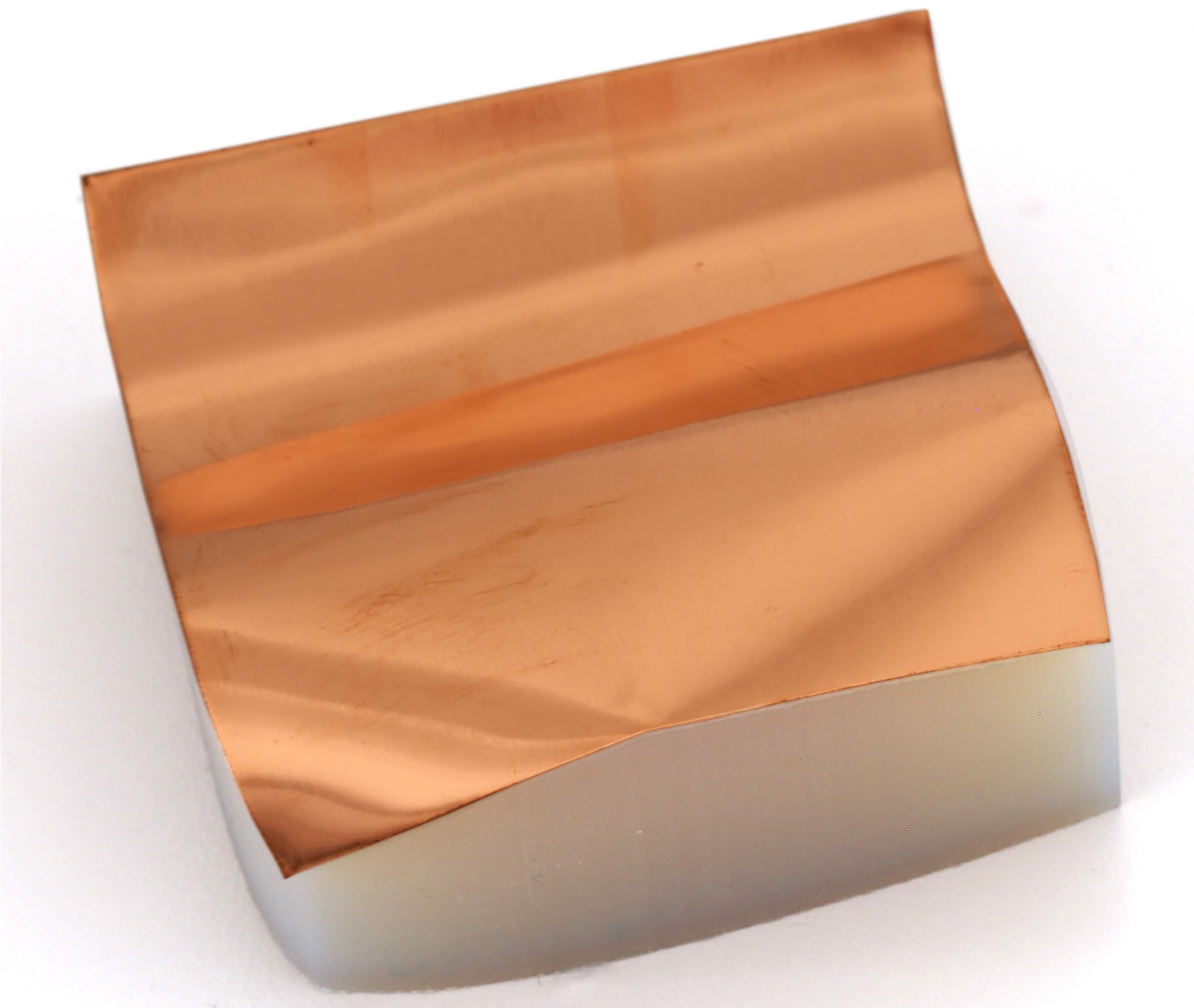}
\caption{\label{fig:fabrication}Validation by fabrication. We 3D-printed a ``sandwich'' (top row) whose inner cut surface is the result of deforming a flat square using ARAP (left column) and our editing system for discrete developable surfaces (right column) using the same (soft) positional constraints. We cut two squares out of a thin copper sheet with the dimensions of the initial model before deformation, and we sandwiched these squares in the fabricated pieces (second row). 
Not surprisingly, since the ARAP result is not developable, the sheet wrinkles and buckles (bottom left), while our result exhibits pure bending (bottom right). Note that our result is smooth everywhere except at the boundary, where a cone-like kink is created in the digital model to remedy the doubly curved ARAP surface.}
\end{figure}


\section{Normals and rulings} \label{sec:normals_and_rulings}

We continue investigating our discrete developable surface model by looking at the Gauss map and a simple local definition of the rulings. Although our model does not explicitly enforce any properties of these two objects, we empirically see that their behavior corresponds well to the expected properties of a developable surface.

\subsection{One-dimensional Gauss map} \label{gauss_dim}
In the continuous case, a smooth developable surface $f$ has vanishing Gaussian curvature. Since it corresponds to the area of the Gauss map, it means that the normal map $n$ of $f$ is one-dimensional \cite{do_carmo}. \defref{def:N_map} supplies us with a discrete per-vertex normal on a discrete geodesic net $F$, and we can view the collection of all vertex normals with the connectivity of $F$ as a discrete net $N$. 
%
%
We show in \figref{fig:gauss_map} and the supplementary video that editing with our system results in a discrete Gauss map that is approximately one-dimensional.

\begin{figure}[h]
\centering
   \includegraphics[width=\linewidth]{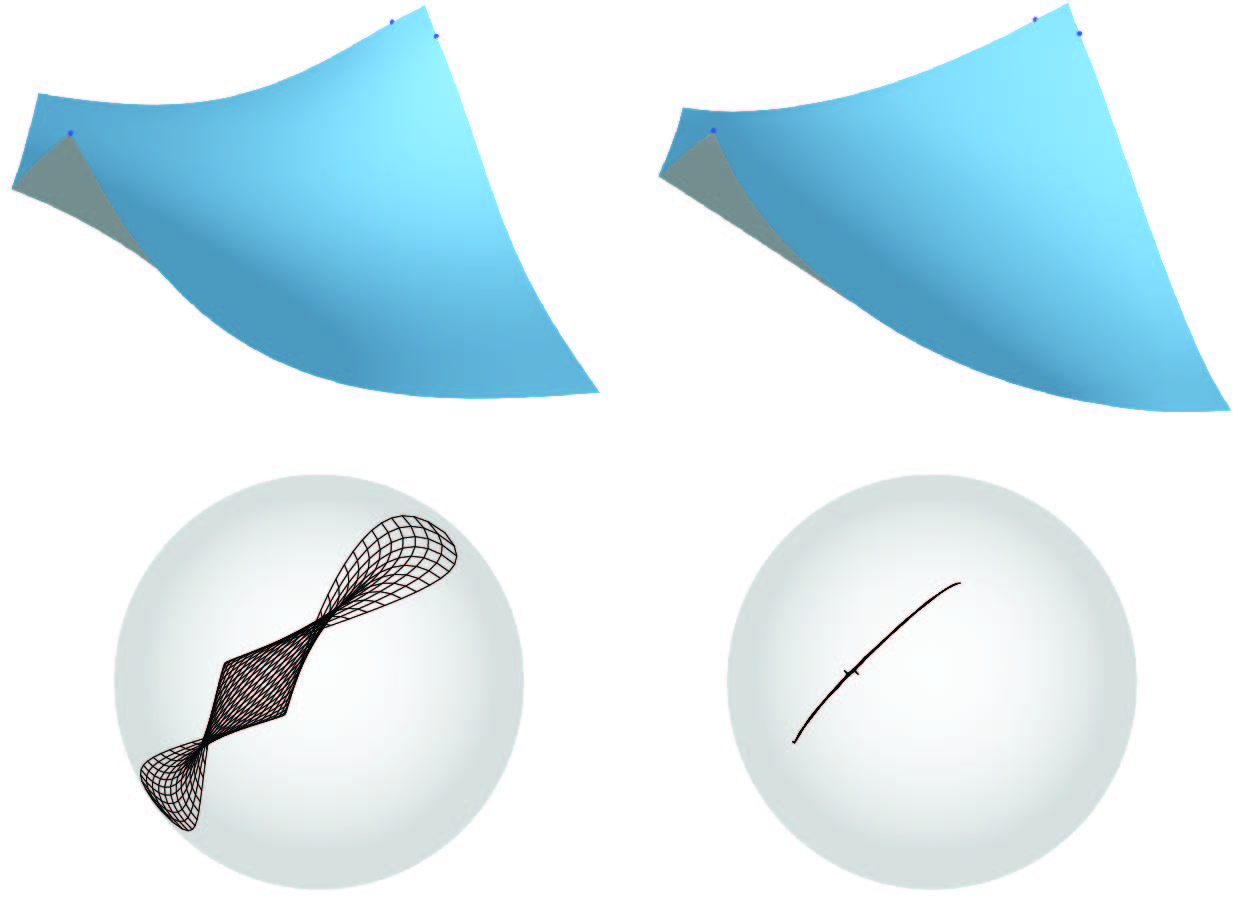}
   \caption{Left: The result of moving the lower left corner of a planar mesh towards the upper right corner using ARAP deformation~{\protect\cite{arap}}. Right: the same positional constraints are employed in our developable surface deformation system (\secref{sec:editing_system}), using the result of ARAP on the left as the initial guess. In this case the soft positional constraints are satisfied up to high precision. Below each net we display the image of its Gauss map $N$ (which in this case is virtually indistinguishable from the standard vertex-based normals of a triangle mesh obtained by triangulating the quad net). Note that our Gauss map tends to be one-dimensional.}
   \label{fig:gauss_map}
\end{figure}

\subsection{Vertex based rulings} \label{rulings_approx}
 Intuitively, rulings are line segments on a surface generated by the intersection of infinitesimally close tangent planes. As mentioned above, the Gauss map $n$ of a smooth developable net $f$ has a one-dimensional image, or, equivalently, parallel partial derivatives: $n_x \parallel n_y$. There is a unique ruling emanating from every non-planar on the surface in a direction $r$ that is orthogonal to $n$. The ruling is a curvature line, hence it is also orthogonal to the other principal direction $n_x \parallel n_y$ \cite{do_carmo}.
 Therefore, if w.l.o.g.\ $\langle n_x, n_y \rangle \geq 0$, then $r \parallel n \times (n_x + n_y)$. This holds even if one of the terms $n_x,\ n_y$ vanishes. This can be readily discretized:
 	\begin{mydefinition} \label{def:disc_rulings}  The direction of a discrete ruling, emanating from a point $F$ of a discrete geodesic developable net is 
	$$R = N \times (N_x + N_y),$$ 
	where $N_x = N_1-N_{\bar 1}$ and $N_y = N_2 - N_{\bar 2}$, oriented such that \mbox{$\langle N_x, N_y \rangle \geq 0$}. \end{mydefinition}
 	\defref{def:disc_rulings} is entirely local, however in practice the discrete rulings tend to fit the surface globally, see \figref{fig:rulings}. Note that the definition above is only valid at inner vertices with all neighbors being inner vertices as well, such that $N_x,N_y$ are defined. Unlike in the continuous case, $N_x$ and $N_y$ are not necessarily parallel.

\section{Analysis and parallels with the smooth model} \label{sec:rigidity}
In this section we further study discrete geodesic nets, drawing parallels between the discrete and continuous cases. We analyze the variety of shapes that can be modeled by discrete orthogonal geodesic nets given in \defref{def:def1}. Loosely speaking, a \emph{good} discrete developable model should be sufficiently flexible to approximate every smooth developable surface, which we show by the Taylor expansion analysis in \secref{taylor_sampling}. The model should also be sufficiently restrictive, or rigid, to avoid unreasonable shapes. To that end, in  \secref{dev_extension} we show that our discrete orthogonal geodesic nets share a similar rigid behavior with a smooth developable surface.
In \secref{curve_geo_thm} we prove a discrete analogue for a simple theorem connecting curvature line nets, geodesic nets and orthogonal geodesic nets.

\begin{figure}[b]
\centering
   \includegraphics[width=\linewidth]{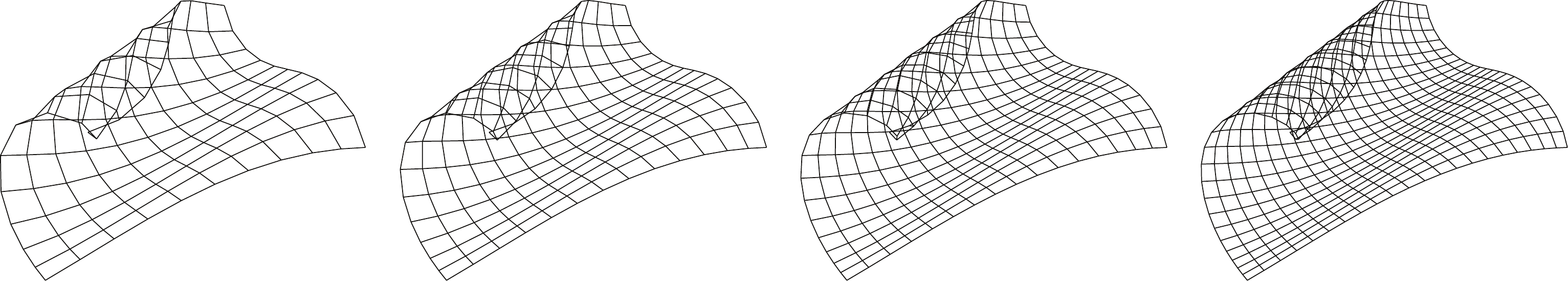}
   \caption{A series of samplings of a smooth orthogonal geodesic net $f$ with increasing sampling density. By \theoremref{Thm:taylor_orth_geo}, the stars of these discrete nets have equal angles up to second order, hence a discrete orthogonal geodesic net $F$ can also be viewed as an approximate sampling of a smooth orthogonal geodesic net $f$.}
   \label{fig:taylor}
\end{figure}

 	\begin{figure*}[t]
\centering
   \includegraphics[width=\linewidth]{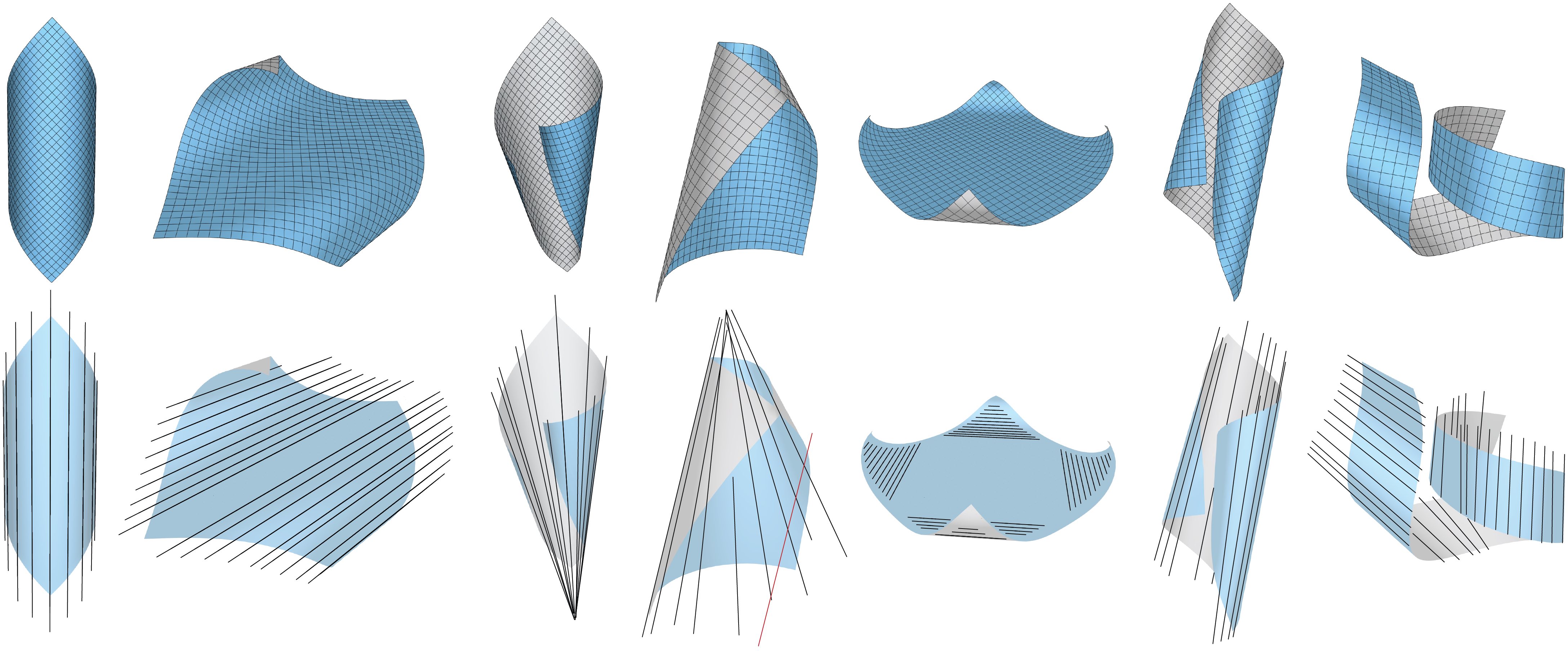}
   \caption{Discrete developable geodesic nets and their vertex based rulings. From left to right: Two cylinders, two cones, a planar region connected to four cylinders, two tangent developable surfaces. Remarkably, the rulings tend to fit the shapes globally, despite their entirely local definition (\defref{def:disc_rulings}). They are, however, less stable around planar regions due to the calculation of $N_x,N_y$ (see the red ruling in the center). We visualize the rulings sparsely for clarity. See the accompanying video for a three dimensional view.}
   \label{fig:rulings}
\end{figure*}

\subsection{Approximation of an analytical, smooth orthogonal geodesic net} 
\label{taylor_sampling}
%
Let $f$ be an arbitrary analytical smooth net and $p = f(x,y)$ a point on the surface. Imagine sampling points around $p$ to generate a discrete star. We show that this star is a discrete orthogonal geodesic star as in \defref{def:def1} up to second order if and only if $f$ is an orthogonal geodesic net (\figref{fig:taylor}).

 Let $\epsilon > 0$ and let
$f = f(x,y),\ f_1(\epsilon) = f(x+\epsilon,y),\ f_{\bar1}(\epsilon) = f(x-\epsilon,y),\ f_2(\epsilon) = f(x,y + \epsilon),\ f_{\bar2}(\epsilon) = f(x,y - \epsilon)$. 

\setlength{\intextsep}{8pt}%
\setlength{\columnsep}{8pt}%
\begin{wrapfigure}{r}{0.4\linewidth}
  \centering
  \includegraphics[width=\linewidth]{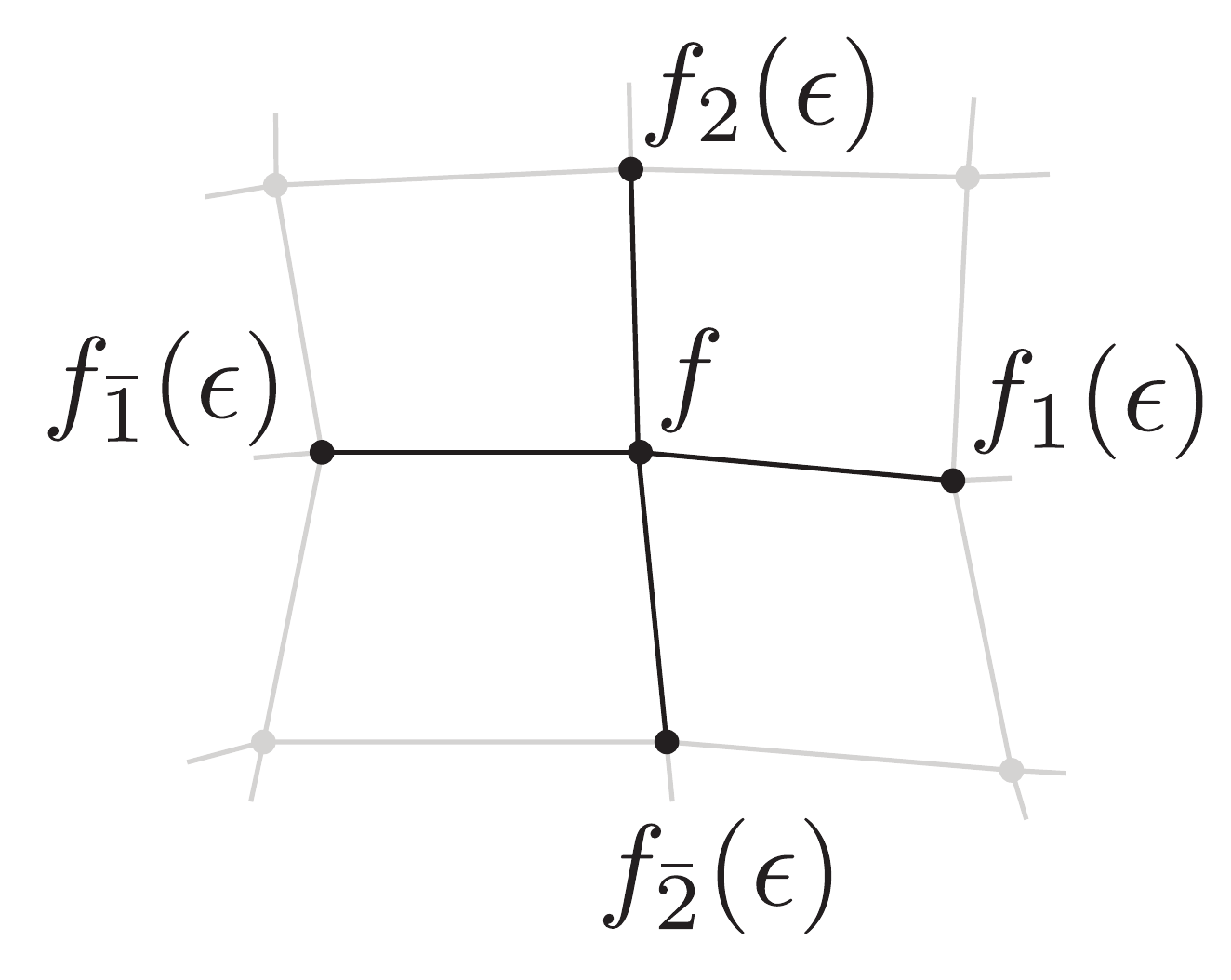}
\end{wrapfigure}
From here on, we refer to this set of points as an \emph{$\epsilon$-star} of the net $f$ around the point $p$ (see inset). The unit-length directions of the star edges are denoted as $\delta_jf(\epsilon),\ \delta_{\bar j}f(\epsilon)$.

By \defref{def:def1}, an $\epsilon$-star is a discrete orthogonal geodesic star if all its angles are equal, i.e., if
 \begin{equation}\label{eq:eps_star_angles}
 \langle \delta_j f(\epsilon),\ \delta_{j+1}f(\epsilon) \rangle  - \langle \delta_{j+1}f(\epsilon), \delta_{j+2}f(\epsilon) \rangle = 0, 
 \end{equation}	
where we use the notation $\delta_3f(\epsilon) = \delta_{\bar 1}f(\epsilon)$ and $\delta_4f(\epsilon) = \delta_{\bar 2}f(\epsilon)$ to enumerate all incident edges.
We show that our discretization is indeed loyal to the smooth case in the following theorem.

\begin{theorem}\label{Thm:taylor_orth_geo} \label{THM:TAYLOR_ORTH_GEO}  Equal angles on $\epsilon$-stars.
\begin{enumerate}
\item An analytic net $f$ is an orthogonal net, meaning $f_x \bot f_y$, if and only if all  its $\epsilon$-stars are discrete orthogonal geodesic stars up to first order, i.e., $\langle \delta_j f(\epsilon),\ \delta_{j+1}f(\epsilon) \rangle  - \langle \delta_{j+1}f(\epsilon),\  \delta_{j+2}f(\epsilon) \rangle = o(\epsilon).$
\item An analytic net $f$ is an orthogonal geodesic net if and only if all its $\epsilon$-stars are discrete orthogonal geodesic stars up to second order, i.e., $\langle \delta_jf(\epsilon),\ \delta_{j+1}f(\epsilon) \rangle  - \langle \delta_{j+1}f(\epsilon),\ \delta_{j+2}f(\epsilon) \rangle = o(\epsilon^2).$
\end{enumerate}
\end{theorem}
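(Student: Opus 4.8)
The plan is to Taylor-expand the four unit edge directions of an $\epsilon$-star in $\epsilon$ and then collect the angle defect
\[
g_j(\epsilon)=\langle \delta_j f(\epsilon),\delta_{j+1}f(\epsilon)\rangle-\langle \delta_{j+1}f(\epsilon),\delta_{j+2}f(\epsilon)\rangle
\]
order by order, matching its successive coefficients to the two analytic conditions $f_x\perp f_y$ and ``each coordinate curve is a geodesic.'' Since every $\delta_j f(\epsilon)$ is a unit vector, $g_j$ is a difference of cosines of consecutive star angles, so the two statements amount to identifying which order of $g_j$ first detects non-orthogonality and which order first detects a non-vanishing geodesic curvature: the first equivalence should be governed by the lowest-order coefficient of $g_j$ and the second by the next one.

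First I would expand the raw edge vectors, e.g.\ $f(x{+}\epsilon,y)-f=\epsilon f_x+\tfrac12\epsilon^2 f_{xx}+\tfrac16\epsilon^3 f_{xxx}+O(\epsilon^4)$, factor out $\epsilon$, and normalize to obtain a power series $\delta_1 f(\epsilon)=c_0+c_1\epsilon+c_2\epsilon^2+\dots$ with $c_0=t_1$, and likewise $\delta_2 f(\epsilon)=d_0+d_1\epsilon+\dots$ with $d_0=t_2$. The structural observation that makes this tractable is a reflection symmetry: the backward direction is obtained from the forward one by $\epsilon\mapsto-\epsilon$ together with a global sign, so $\delta_{\bar 1}f(\epsilon)=-c_0+c_1\epsilon-c_2\epsilon^2+\dots$. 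Hence odd orders cancel in the symmetric difference and even orders cancel in the symmetric sum,
\[
\delta_1 f-\delta_{\bar 1}f=2t_1+O(\epsilon^2),\qquad \delta_1 f+\delta_{\bar 1}f=2c_1\epsilon+O(\epsilon^3),
\]
and the same holds in the second coordinate. A short computation identifies $c_1$ as the component of $\tfrac12 f_{xx}$ orthogonal to $f_x$, rescaled by $1/\norm{f_x}$; this points along the discrete curve-normal $N_1$ of \defref{def:discrete_osc_plane}, and it is the quantity that carries the geodesic-curvature information.

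The defect then collapses through the identity $g_1=\langle \delta_2 f,\ \delta_1 f-\delta_{\bar 1}f\rangle$ and its three cyclic analogues. Substituting the expansions gives
\[
g_1=2\langle t_1,t_2\rangle+2\epsilon\,\langle d_1,t_1\rangle+O(\epsilon^2).
\]
The leading coefficient is a positive multiple of $\langle f_x,f_y\rangle$, and evaluating all four defects shows the family vanishes to leading order at every star iff $f_x\perp f_y$; together with \corollaryref{cor:orth_geo_dev_equiv} this yields the first statement. For the second I would impose orthogonality and read the next order: $\langle d_1,t_1\rangle$ then reduces to a multiple of $\langle f_{yy},f_x\rangle$, while the companion defect $g_2$ produces $\langle f_{xx},f_y\rangle$; up to the positive factor $1/(\norm{f_x}\norm{f_y})$ these are exactly the tangential accelerations of the two coordinate curves orthogonal to their own tangents, i.e.\ their geodesic-curvature defects. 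Thus the subsequent order of all four defects vanishes iff both coordinate curves are geodesics, upgrading an orthogonal net to an orthogonal geodesic net and giving the second statement. Regularity of the immersion guarantees the $\epsilon$-stars are non-degenerate and the normalizations valid for small $\epsilon$, so no separate degenerate case arises.

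The step I expect to be most delicate is the normalization bookkeeping: one must retain the cubic Taylor term of the edge vectors to be certain which contributions survive the odd/even cancellations, and then verify that the surviving inner products are genuinely geodesic-curvature quantities rather than reparametrization artifacts — concretely, that the tangential part of $f_{xx}$ along $f_x$ (a mere speed change) drops out while only the part along $f_y$ survives. Making the cancellation $\delta_1 f-\delta_{\bar 1}f=2t_1+O(\epsilon^2)$ rigorous — equivalently, that the discrete symmetric tangent agrees with the smooth tangent to second order — is the crux that separates orthogonality, detected at the leading order of the defect, from the geodesic condition, detected at the following order.
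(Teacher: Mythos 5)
Your proposal follows essentially the same route as the paper's proof in Appendix~\ref{app:Taylor_proof}: Taylor-expand the normalized edge directions, read off the constant coefficient of the angle defect as a positive multiple of $\langle f_x,f_y\rangle$, and then, after imposing orthogonality, identify the linear coefficients with $\langle f_x,f_{yy}\rangle$ and $\langle f_{xx},f_y\rangle$, whose vanishing (given $f_x \bot f_y$) is exactly the geodesic condition that the paper establishes via the principal-normal decomposition $f_{xx} = a f_x + b n^x$. Your parity observation $\delta_{\bar 1}f(\epsilon) = -\,\delta_1 f(-\epsilon)$ and the collapse $g_1 = \langle \delta_2 f,\ \delta_1 f - \delta_{\bar 1} f\rangle$ are a tidier bookkeeping of the same cancellations the paper computes term by term, not a different argument.
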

The proof is detailed in Appendix \ref{app:Taylor_proof}.

\subsection{Rigidity through developable surface extension}  \label{dev_extension}
Applying a deformation on a smooth developable surface locally generally dictates its shape globally. One way to see this is by looking at the rulings: 
 on a smooth developable surface, the rulings are global, in the sense that they either extend infinitely, or their endpoints must hit the boundaries of the surface \cite{spivak}. Flipping this point of view, one can ask how to extend a developable surface at its boundary: the possibilities are generally quite limited, since the points along the rulings are uniquely determined (see \figref{fig:rulings}). Note that arbitrarily extending rulings often results in singularities. Our discrete model shares a similar rigid structure, as shown in the following.

\subsubsection{Extension of a discrete orthogonal geodesic net}
Assume we have a vertex $F$ in our discrete net, as well as some neighboring vertices to its left (or right) and bottom (\figref{fig:reflection_and_cone}, right). The position of the top neighbor $F_2$ is then generally uniquely determined, as shown by the following two lemmas. Therefore, a given discrete orthogonal net can generally be extended at its boundary by setting only a small number of parameters, as illustrated in \figref{fig:evolution}. The process is analogue to the smooth case explained above, but it is not based on rulings.


\begin{lemma}\label{Lem:evo_plane_ref} (Direction propagation).
Given a vertex $F$ and three neighbors $F_{1},F_{\bar 2}, F_{\bar 1}$ such that the discrete curve $\Gamma_1$ through $F,F_{1},F_{\bar 1}$ is non-degenerate. Then there is a unique direction $\delta_{2}F$ such that $F,F_{1},F_{\bar 2},F_{\bar 1},F_{2}$ is an orthogonal geodesic star (where $F_2$ lies on the ray through $\delta_{2}F$; see \figref{fig:reflection_and_cone}).
\end{lemma}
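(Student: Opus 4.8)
The plan is to pin down the discrete normal $N$ from the given coordinate curve $\Gamma_1$ and then argue that the geodesic requirement forces $\delta_{2}F$ to be the reflection of $\delta_{\bar 2}F$ across the line spanned by $N$, which simultaneously yields uniqueness and existence. First I would invoke \lemmaref{Lem:discrete_geo_n}: in any geodesic star the normal $N$ is the common angle bisector of the two coordinate curves, so $N \parallel \delta_{1}F + \delta_{\bar 1}F$. Since $\Gamma_1$ is non-degenerate, $\Fbo,F,F_1$ are not collinear, hence $\delta_{1}F + \delta_{\bar 1}F \neq 0$ and $N$ is well defined and completely determined by the three given points. Consequently the tangent $T_1 = (\delta_{1}F - \delta_{\bar 1}F)/\norm{\delta_{1}F - \delta_{\bar 1}F}$ and the whole frame $\{N, T_1, N\times T_1\}$ are fixed before $F_2$ is ever chosen.

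Next I would use the decomposition established in the proof of \theoremref{Thm:orth_geo_equiv}: in an orthogonal geodesic star the four edge directions can be written as $\delta_{1}F = aN + bT_1$, $\delta_{\bar 1}F = aN - bT_1$, $\delta_{2}F = cN + dT_2$, $\delta_{\bar 2}F = cN - dT_2$ with $T_2 \bot T_1$ and $T_2 \bot N$. Reading the last two relations gives $\delta_{2}F + \delta_{\bar 2}F = 2cN$, i.e. $\delta_{2}F$ and $\delta_{\bar 2}F$ are symmetric about the line $\R N$. Given $\delta_{\bar 2}F$ and the already-fixed $N$, the only unit vectors $u$ with $u + \delta_{\bar 2}F \parallel N$ are $u = 2\ip{\delta_{\bar 2}F, N}\,N - \delta_{\bar 2}F$ and the degenerate $u = -\delta_{\bar 2}F$, the latter placing $F_2$ opposite $F_{\bar 2}$ and making $\Gamma_2$ collinear. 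Hence the non-degenerate candidate $\delta_{2}F = 2\ip{\delta_{\bar 2}F, N}\,N - \delta_{\bar 2}F$ is forced, which is exactly the asserted uniqueness and explains the ``reflection'' in the construction.

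Finally I would verify that this reflected direction genuinely produces an orthogonal geodesic star, i.e. that with $F_2$ anywhere on the ray $F + t\,\delta_{2}F$ all four star angles are equal (\defref{def:def1}). Expressing every edge in the fixed frame $\{N, T_1, T_2\}$ and computing the four consecutive inner products $\ip{\delta_{j}F, \delta_{j+1}F}$ directly, one checks that the reflection automatically equalises opposing angles (so the star is geodesic), and that all four coincide precisely when the tangential part of $\delta_{\bar 2}F$ is orthogonal to $T_1$. Reconciling the fixed tangent $T_1$ of $\Gamma_1$ with the prescribed edge toward $F_{\bar 2}$ is the crux of the argument and the step I expect to be the main obstacle: it is where one must use that the configuration is compatible with a discrete orthogonal geodesic net, so that $\delta_{\bar 2}F$ projects onto the tangent plane along $T_2 \parallel N \times T_1$ rather than in a generic direction (this compatibility should itself propagate through the reflection in a recursive extension). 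Once this is secured, the reflected $\delta_{2}F$ provides the unique orthogonal geodesic completion claimed by the lemma.
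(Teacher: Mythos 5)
Your proposal is correct on the point the paper actually establishes, and it takes essentially the same route: the paper's entire proof is the one-sentence necessity claim that, by \theoremref{Thm:orth_geo_equiv}, $\delta_{2}F$ must be the reflection of $\delta_{\bar 2}F$ with respect to the osculating plane $\Pi_1$ spanned by $F, F_1, F_{\bar 1}$ --- which is the same argument you extract from \lemmaref{Lem:discrete_geo_n} and the frame decomposition. The only cosmetic difference is that you reflect across the line $\R N$ while the paper reflects across the plane $\Pi_1$; these maps differ for generic input but coincide exactly when a completion exists (since then $\delta_{\bar 2}F$ has no $T_1$-component), so either formula pins down the same unique candidate. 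Two remarks on the parts where you go beyond the paper. First, dismissing the candidate $-\delta_{\bar 2}F$ as ``degenerate'' is slightly too quick: a straight $\Gamma_2$ is not excluded by \defref{def:def1} (a star with $\delta_{\bar 2}F \perp \Pi_1$ and all angles $\pi/2$ is a legitimate orthogonal geodesic star), but in that case $\ip{\delta_{\bar 2}F, N} = 0$ and your two candidates coincide, so uniqueness is unharmed. Second, the ``main obstacle'' you flag --- verifying that the reflected direction genuinely equalizes all four angles, which forces the compatibility condition $\ip{\delta_{\bar 2}F, T_1} = 0$ --- is a real restriction on the input data (it says the two angles already formed at $F$ by the given neighbors $F_1, F_{\bar 2}, F_{\bar 1}$ must be equal), but it is not a defect of your attempt relative to the paper: the paper's proof is purely a uniqueness/necessity argument and silently assumes this compatibility, which indeed holds in the net-extension setting of \figref{fig:evolution} where the lemma is applied. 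So your attempt reproduces the paper's proof and is, if anything, more explicit about its hypotheses.
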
	
\begin{proof}
By \theoremref{Thm:orth_geo_equiv}, the vector $\delta_{2}F$ must be in the direction of the reflection of $\delta_{\bar 2}F$ w.r.t.\ the plane $\Pi_1$ spanned by $F,F_{1},F_{\bar 1}$.
\end{proof}
In the case where $\Gamma_1$ is a straight line, there is a family of solutions consisting of all vectors that are orthogonal to $\Gamma_1$.

\begin{figure}[b]
\centering
   \includegraphics[width=\linewidth]{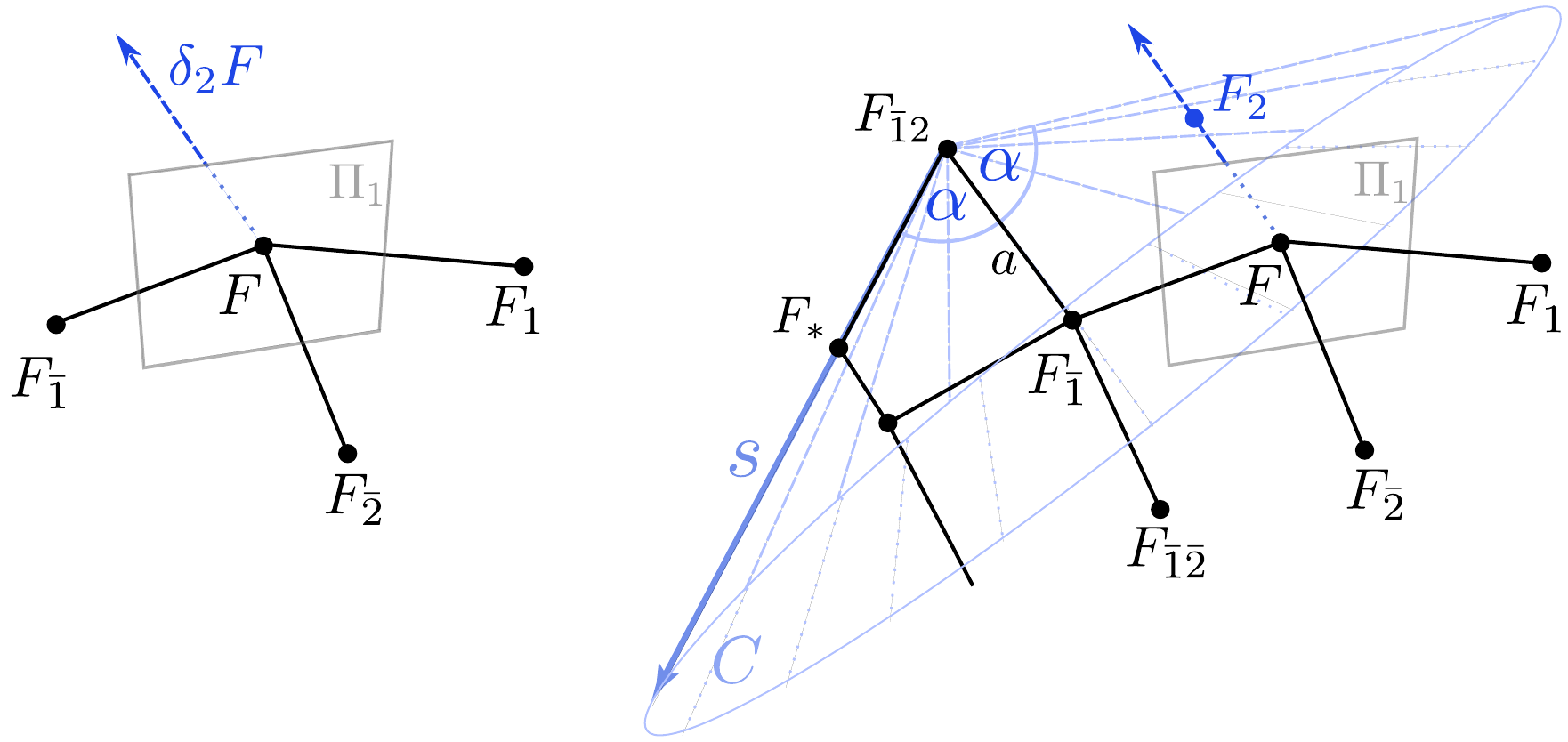}
   \caption{Left: By \lemmaref{Lem:evo_plane_ref} the direction $\delta_2 F$ is the reflection of $\delta_{\bar 2}F$ w.r.t.\ the plane $\Pi_1$. Right: By \lemmaref{Lem:evo_cone_int}, the same direction $\delta_2F$ intersects a cone $C$ with the apex at $F_{\bar1 2}$, determining the position of the point $F_2$. }
   \label{fig:reflection_and_cone}
\end{figure}

\begin{figure*}[t]
\centering
   \includegraphics[width=\linewidth]{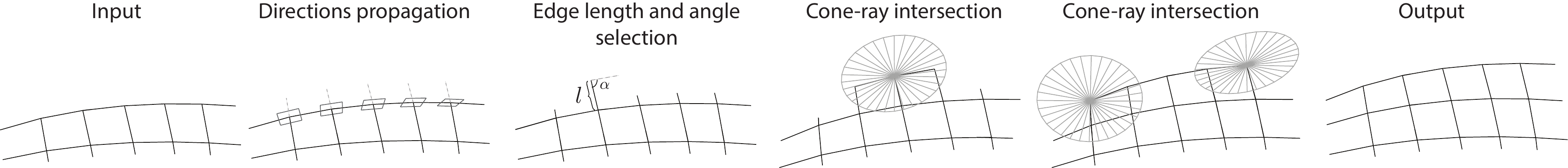}
   \caption{Extension of a discrete orthogonal net. Given a choice of two parameters: edge length $l$ and one angle $\alpha$, the row propagates by Lemmas \ref{Lem:evo_plane_ref} and \ref{Lem:evo_cone_int}.}
   \label{fig:evolution}
\end{figure*}

\begin{lemma}\label{Lem:evo_cone_int} (Cone-ray intersection).
Given a vertex $F$ in an orthogonal geodesic net that has at least all the neighboring vertices denoted in \figref{fig:reflection_and_cone} (right side). Let $C$ be the cone or plane generated by revolving the ray $s$ emanating from $F_{\bar 12}$ through $F_*$ about the axis $a = F_{\bar 1}- F_{\bar 12}$ (see \figref{fig:reflection_and_cone}). Then, the vertex $F_{2}$ has to lie on the intersection of $C$ and a line emanating from $F$ (\figref{fig:reflection_and_cone}).
\end{lemma}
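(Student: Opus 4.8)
The plan is to pin down the position of $F_2$ by combining the direction already fixed in \lemmaref{Lem:evo_plane_ref} with one additional angle constraint, this time read off at the \emph{already-known} top-left neighbor $F_{\bar 12}$ rather than at $F$ itself. First I would note that in the row-by-row extension the vertex $F_{\bar 12}$, its own previously propagated left neighbor $F_*$, and the edge $a = F_{\bar 1}-F_{\bar 12}$ joining $F_{\bar 12}$ to the known lower vertex $F_{\bar 1}$ are all available; the only edge incident to $F_{\bar 12}$ that still involves the unknown is the edge $F_{\bar 12}\to F_2$. In the shift notation this is immediate, since $F_{\bar 12}$ has right neighbor $F_2$, left neighbor $F_*$, and bottom neighbor $F_{\bar 1}$.

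Next I would invoke the defining property of a discrete orthogonal geodesic net (\defref{def:def1}): all consecutive angles of the star centered at $F_{\bar 12}$ are equal. In particular, the two consecutive angles sharing the common edge toward $F_{\bar 1}$ — namely $\sphericalangle(F_* - F_{\bar 12},\, a)$ and $\sphericalangle(F_2 - F_{\bar 12},\, a)$ — must coincide. The first is a known quantity, say $\beta$. Hence $F_2$ is forced to make the fixed angle $\beta$ with the axis $a$ at the apex $F_{\bar 12}$, which is exactly the statement that $F_2$ lies on the surface of revolution $C$ obtained by rotating the ray through $F_*$ about $a$ (a genuine cone when $\beta \neq \tfrac{\pi}{2}$, degenerating to a plane when $\beta = \tfrac{\pi}{2}$). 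Combined with \lemmaref{Lem:evo_plane_ref}, which already places $F_2$ on a fixed ray $\ell$ emanating from $F$ in direction $\delta_2 F$, this gives $F_2 \in C \cap \ell$, as claimed.

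The one point I would take care to verify — the only place where the argument is more than bookkeeping — is that the required angle equality at $F_{\bar 12}$ uses \emph{only} the three edges toward $F_*$, $F_{\bar 1}$, and $F_2$, and in particular does not reference the still-undetermined top edge of $F_{\bar 12}$. Since $(\text{left},\text{down})$ and $(\text{down},\text{right})$ form a pair of consecutive star angles sharing the edge to $F_{\bar 1}$, the equal-angle condition of \defref{def:def1} applies to them directly, so no information about the unbuilt part of the net is needed. The remaining subtlety is purely geometric, namely recognizing that the locus $\{X : \sphericalangle(X - F_{\bar 12},\, a) = \beta\}$ is precisely the revolved cone $C$ and recording the $\beta = \tfrac{\pi}{2}$ degeneracy to a plane; I do not expect this to pose a real obstacle.
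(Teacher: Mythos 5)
Your proposal is correct and follows essentially the same route as the paper: the paper's proof likewise applies \defref{def:def1} at the star of $F_{\bar 12}$ to equate the angle between $a = F_{\bar 1}-F_{\bar 12}$ and $F_2 - F_{\bar 12}$ with the known angle between $a$ and $F_* - F_{\bar 12}$, concluding that $F_2$ lies on $C$, with the line through $F$ coming from \lemmaref{Lem:evo_plane_ref}. Your additional care about using only the three already-known edges at $F_{\bar 12}$ and about the $\beta = \tfrac{\pi}{2}$ plane degeneracy is sound and consistent with the paper's statement (``cone or plane'').
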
	
\begin{proof}
By \defref{def:def1}, the angle $\alpha$ between the net edges $a = F_{\bar 1}- F_{\bar 12}$ and $F_2 - F_{\bar12}$ must be equal to the angle between $a$ and $F_* - F_{\bar12}$, and so $F_{2}$ must lie on $C$.
\end{proof}

Given the construction for $F_2$ above, we see that, speaking informally,  extending a discrete orthogonal geodesic net by one vertex at its boundary is a determined process if we already have neighbors below and to the left or to the right. The only degrees of freedom are available when one begins adding a new row to the grid, without yet having neighbors on the left or right but only below, see \figref{fig:evolution}. Assuming general position, we first use \lemmaref{Lem:evo_plane_ref} to compute the directions of the new net edges  that point upwards. We can then select the length $l$ of the first new edge, effectively setting a vertex of the new row, as well as the cone half-angle $\alpha$ for the first cone $C$ of the new row. Then, the remaining vertices of the row are determined using \lemmaref{Lem:evo_cone_int}, as illustrated in \figref{fig:reflection_and_cone} and \figref{fig:evolution}.

\subsection{Relation to curvature line nets} \label{curve_geo_thm}
Here we prove a discrete version of the following simple theorem and connect discrete geodesic nets, conical nets and discrete orthogonal geodesic nets.
\begin{theorem} \label{Lem:geo_curv_smooth}
A smooth geodesic net $f$ that is also a curvature line net is an orthogonal geodesic net, and therefore a parameterization of a developable surface.
\end{theorem}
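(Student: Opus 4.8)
The conclusion splits into two parts, and essentially all the content lies in the first. Since $f$ is assumed to be a geodesic net, showing that $f$ is an \emph{orthogonal} geodesic net reduces to the single claim that the coordinate tangents are orthogonal, $\ip{f_x,f_y}=0$; and once this is in hand, the ``therefore developable'' clause follows immediately from \corollaryref{cor:cont_orth_geo_dev}. So the plan is to reduce the theorem to proving that a curvature line net is orthogonal, and then invoke that corollary.

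To prove orthogonality I would use that the coordinate directions of a curvature line net point along principal directions, i.e.\ along eigenvectors of the shape operator $S=-dn$. Since $S$ is self-adjoint with respect to the first fundamental form, eigenvectors belonging to distinct eigenvalues are orthogonal, so at every non-umbilic point the principal directions $f_x,f_y$ satisfy $\ip{f_x,f_y}=0$. Equivalently, a curvature line net is exactly an orthogonal conjugate net, satisfying both the conjugacy condition $\ip{f_{xy},n}=0$ and orthogonality $\ip{f_x,f_y}=0$ --- so orthogonality is built into the notion itself. Note that the geodesic hypothesis plays no role in orthogonality; its job is to upgrade the orthogonal net to an orthogonal \emph{geodesic} net and thereby force developability through the corollary.

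The delicate point, and the main obstacle, is the behavior at umbilic points, where the principal directions are not distinct and the eigenvector argument does not apply verbatim. At a non-flat umbilic ($\kappa_1=\kappa_2=\kappa\neq 0$) the second fundamental form equals $\kappa$ times the first, so $\ip{f_{xy},n}=\kappa\ip{f_x,f_y}$, and the conjugacy condition $\ip{f_{xy},n}=0$ again forces $\ip{f_x,f_y}=0$; orthogonality therefore persists. The only genuinely degenerate case is a flat point ($\kappa_1=\kappa_2=0$), where $\ip{f_{xy},n}=0$ regardless of $\ip{f_x,f_y}$ and orthogonality is not forced --- but there the surface is locally planar, hence trivially developable, so the developability conclusion is unaffected, and if such points are isolated (or the non-umbilic points dense) the identity $\ip{f_x,f_y}=0$ extends by continuity. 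With orthogonality established, I would conclude by citing \corollaryref{cor:cont_orth_geo_dev}, which gives that $f$ parameterizes a developable surface.
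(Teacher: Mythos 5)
Your overall route is the same as the paper's: its entire proof is the one-line observation that a curvature line net has $f_x \bot f_y$, so a geodesic curvature line net is an orthogonal geodesic net, and developability follows from \corollaryref{cor:cont_orth_geo_dev}. Your reduction, and your invocation of that corollary, match this exactly; the eigenvector argument (self-adjointness of the shape operator) is the standard justification of the orthogonality fact that the paper simply takes for granted, and your handling of non-flat umbilics via proportionality of the two fundamental forms is also correct, \emph{provided} conjugacy is retained as part of what ``curvature line net'' means.

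The one genuine defect is your claim that at a flat point ($\kappa_1=\kappa_2=0$) ``the surface is locally planar, hence trivially developable.'' That is false: a flat point only says both principal curvatures vanish \emph{at that point}; the monkey saddle $z=x^3-3xy^2$ has a flat point at the origin and is nowhere developable near it. The deeper issue your edge-case analysis exposes is that if one reads ``line of curvature'' vacuously at umbilics, the theorem itself fails: on a plane (all flat umbilics) any affine non-orthogonal geodesic net is vacuously a curvature line net but not orthogonal, and on a sphere (all non-flat umbilics) two families of great circles form a geodesic net that is vacuously a curvature line net, yet the sphere is not developable. So no local argument can rescue those cases; the load-bearing step is precisely the remark you make in passing --- that a curvature line net is, by the standard convention (the one the paper implicitly uses), an \emph{orthogonal conjugate} net, with vanishing off-diagonal entries of both fundamental forms. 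Under that definition orthogonality holds everywhere by fiat, degenerate points included, and both your proof and the paper's are complete; under the vacuous reading, neither can be repaired. State the convention up front and drop the flat-point detour, and your argument is exactly the paper's, with the omitted justification filled in.
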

\begin{proof}
If $f$ is a curvature line net then $f_x$ and $f_y$ are orthogonal, hence by \corollaryref{cor:cont_orth_geo_dev} $f$ is developable.
\end{proof}

Conical meshes \cite{conical} are known to be a discrete analogue of curvature line nets. An inner vertex $v$ is conical if all the four oriented face planes meeting at $v$ are tangent to a common oriented cone of revolution, and a mesh is conical if its quads are planar and all of its inner vertices are conical.

\begin{theorem}\label{Lem:geo_curv_disc}
A discrete geodesic net $F$ that is also a conical net is a discrete orthogonal geodesic net.
\end{theorem}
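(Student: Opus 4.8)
The plan is to reduce the statement to the angle condition of \defref{def:def1} by combining the two constraints that the \emph{geodesic} and \emph{conical} hypotheses separately impose on each star. First I would fix an interior vertex $v$ and label the four face angles between consecutive edges as $\alpha_1,\alpha_2,\alpha_3,\alpha_4$ in cyclic order, so that $(\alpha_1,\alpha_3)$ and $(\alpha_2,\alpha_4)$ are the two pairs of opposing angles. Because $F$ is a discrete geodesic net, \defref{def:wunderlich} immediately gives the first pair of relations $\alpha_1=\alpha_3$ and $\alpha_2=\alpha_4$.

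The crux is to extract from the conical hypothesis the complementary relation $\alpha_1+\alpha_3=\alpha_2+\alpha_4$. I would do this by intersecting the four face planes through $v$ with a small sphere centered at $v$. Since a conical net has planar quads, each face cuts the sphere in a great-circle arc, and these four arcs have lengths exactly $\alpha_1,\dots,\alpha_4$, forming a spherical quadrilateral. The common oriented cone of revolution tangent to the four face planes meets the sphere in a small circle, and tangency of the planes to the cone is precisely tangency of the four arcs to this circle. The spherical analogue of the Pitot theorem for tangential quadrilaterals then yields equal sums of opposite sides, i.e.\ $\alpha_1+\alpha_3=\alpha_2+\alpha_4$. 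Alternatively, this angle criterion for conical vertices may simply be cited from \cite{conical}.

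Combining the two sets of relations gives $2\alpha_1=2\alpha_2$, hence $\alpha_1=\alpha_2=\alpha_3=\alpha_4$: all angles between consecutive edges of the star are equal. This is exactly the defining condition of a discrete orthogonal geodesic net in \defref{def:def1} (equivalently, condition~(3) of \theoremref{Thm:orth_geo_equiv}), so $F$ is a discrete orthogonal geodesic net, which is what we wanted.

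I expect the only genuine obstacle to be the middle step, namely translating the geometric definition of a conical vertex (tangency of the oriented face planes to a common oriented cone of revolution) into the clean angle identity $\alpha_1+\alpha_3=\alpha_2+\alpha_4$. Care is needed so that the \emph{oriented} tangency is respected throughout the sphere-intersection argument, ensuring that the resulting spherical quadrilateral is genuinely tangential with the correct pairing of opposite arcs; once this identification is in place, the remainder of the proof is an immediate two-line deduction.
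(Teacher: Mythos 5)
Your proposal is correct and takes essentially the same route as the paper: the paper likewise combines the geodesic condition $\alpha_1=\alpha_3$, $\alpha_2=\alpha_4$ with the conical angle balance $\alpha_1+\alpha_3=\alpha_2+\alpha_4$, concluding immediately that all four angles are equal, except that it obtains the angle balance purely by citation (from \cite{conical_angle}, the angle criterion for conical vertices) rather than proving it. Your spherical Pitot sketch is a sound way to justify that criterion, but it plays the role of re-deriving a quoted result rather than changing the structure of the argument.
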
	
\begin{proof}
Using the notation of \figref{fig:geodesic_star}, a net is conical if and only if its quads are planar and every inner vertex satisfies the angle balance $\alpha_1 + \alpha_3 = \alpha_2 + \alpha_4$ \cite{conical_angle}. Since the net is also a discrete geodesic net, $\alpha_1 = \alpha_3$ and $\alpha_2 = \alpha_4$ and therefore $\alpha_1 = \alpha_2 = \alpha_3 = \alpha_4$, as in \defref{def:def1}.
\end{proof}

Note that both in the discrete and the smooth case, a (discrete) orthogonal geodesic net that is also a (discrete) conjugate net has planar coordinate curves.

\section{Discrete Isometry} \label{sec:iso}
 
\setlength{\intextsep}{8pt}%
\setlength{\columnsep}{8pt}%
\begin{wrapfigure}{r}{0.3\linewidth}
  \centering
  \includegraphics[width=\linewidth]{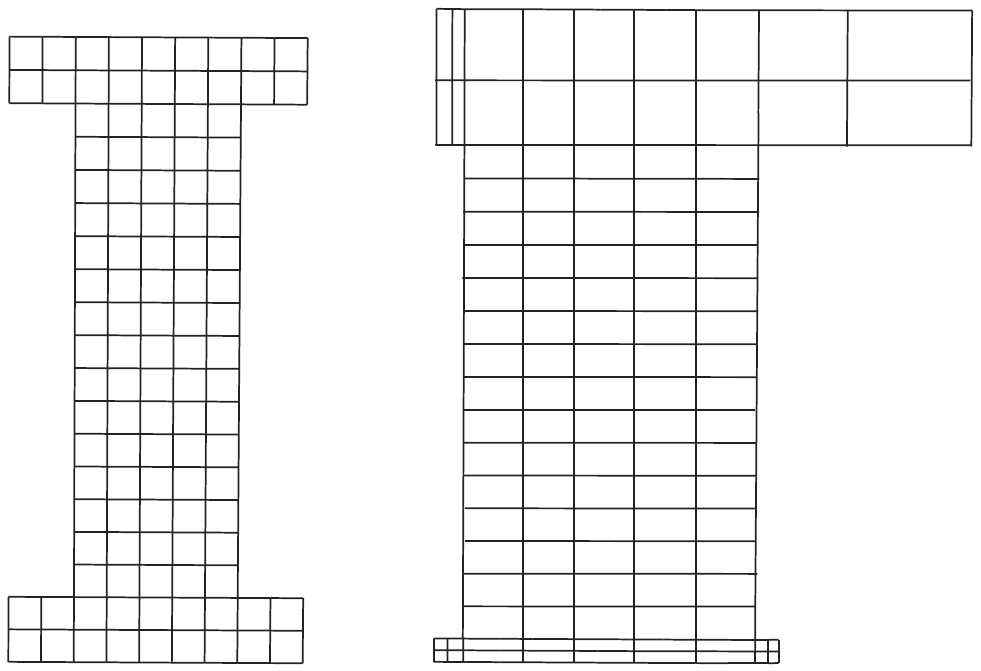}
\end{wrapfigure}
So far we have defined a model for discrete developable surfaces,
but we have not touched upon the subject of their \emph{discrete isometries}. 
Our net can describe a variety of surfaces with different scales, shapes and lengths (see inset for two orthogonal geodesic nets with the same connectivity). Though our editing system uses smoothness and isometry regularizers, which generally prevents large stretch in deformations, in this section we are looking for a definition of discrete isometry that specifies when two nets are ``the same'' in a precise manner.  Two smooth surfaces $S_1, S_2$ are said to be isometric, denoted $S_1 \cong S_2$, if there exists an isometry map $\phi: S_1 \rightarrow S_2$, i.e., a bijective map that preserves distances on the surfaces, or equivalently the lengths of all geodesics.

\begin{figure}[t]
\centering
   \includegraphics[width=0.7\linewidth]{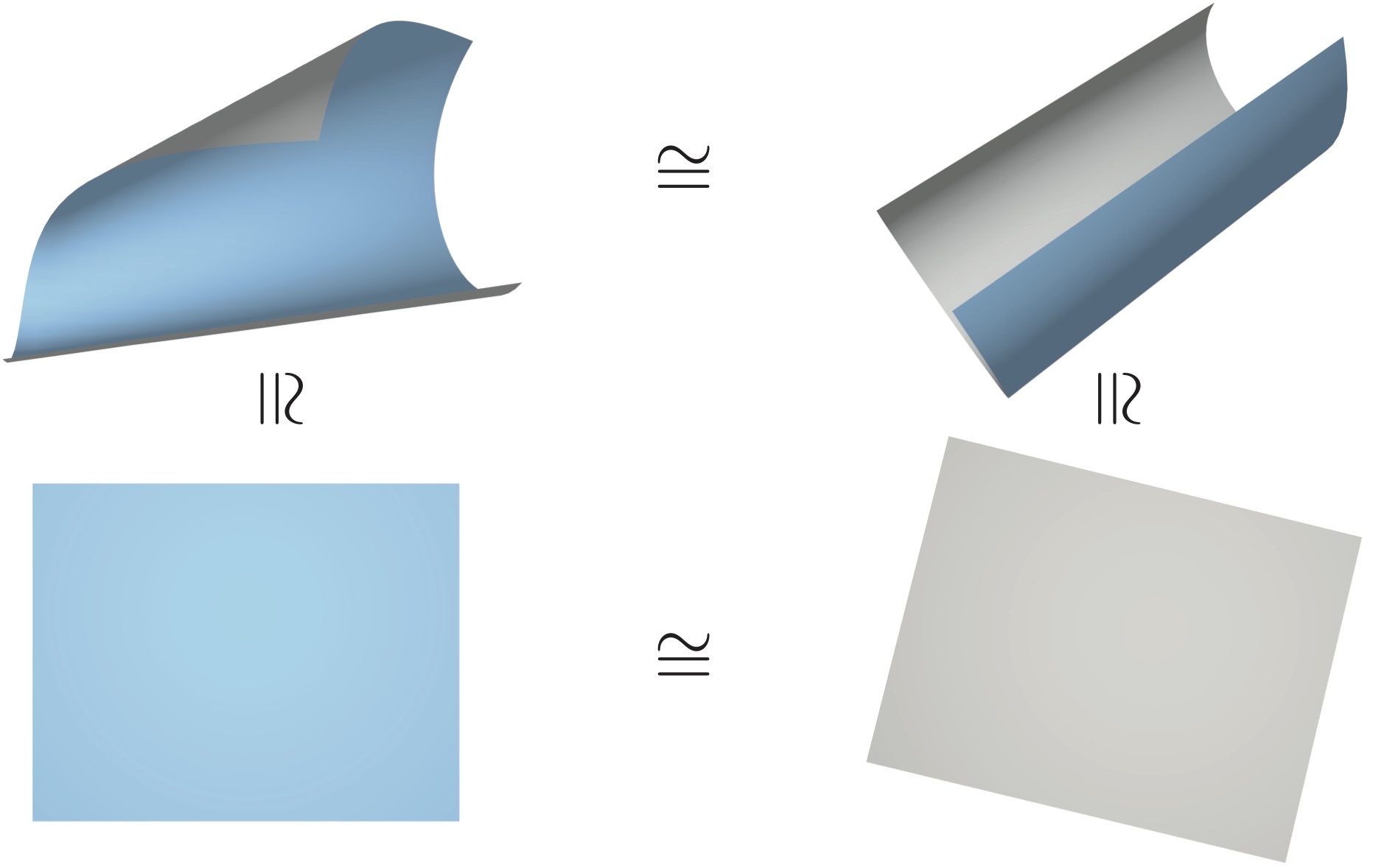}
      \caption{\label{fig:disc_iso}An application of \lemmaref{Lem:geo_disc_isometry}: A developable surface with disc topology and  piecewise geodesic boundary with $\frac{\pi}{2}$-corners is isometric to a flat rectangular shape on the plane. Two such surfaces with equal lengths of the boundary pieces  are isometric, as their flattened shapes are isometric.}
\end{figure}

\begin{figure*}[t]
\centering
   \includegraphics[width=\linewidth]{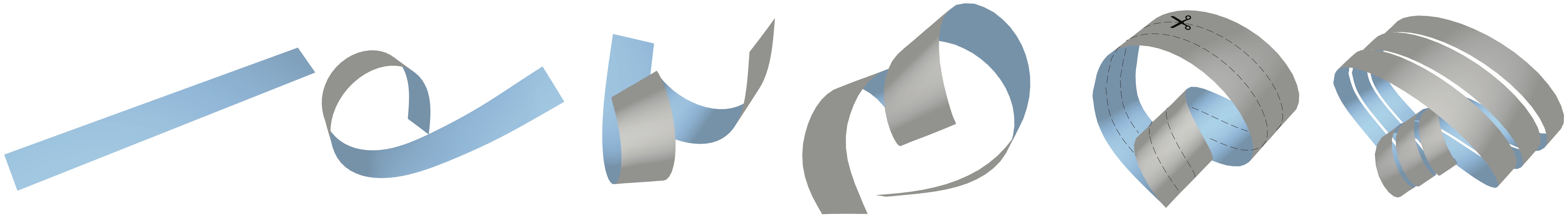}
   \caption{Our local isometry model allows us to deform, glue, and cut a surface while maintaining an exact discrete notion of isometry. In this figure we twist a long strip twice, glue it, and then cut it along two horizontal sections, creating three interleaved knotted surfaces.}
   \label{fig:iso_edit}
\end{figure*}

\subsection{Global isometry for disc topology nets}
In the special case of two developable surfaces with disc topology, one can test whether they are isometric by looking at their boundaries, as justified by the following lemma.
\begin{lemma} \label{Lem:geo_disc_isometry}  \label{LEM:GEO_DISC_ISOMETRY}
Let $S_1$ and $S_2$ be two  smooth developable surfaces with \emph{disc topology} and equal-length boundaries. Let $\gamma_1(s),\ \gamma_2(s)$ be their closed boundary curves in arc length parameterization and $\kappa_{g 1}(s),\ \kappa_{g 2}(s)$ the geodesic curvatures of these curves on $S_1$ and $S_2$, respectively. Then $S_1 \cong S_2    \iff \kappa_{g 1}(s)=\kappa_{g 2}(s)$.
\end{lemma}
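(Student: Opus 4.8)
The plan is to reduce the statement to the fundamental theorem of planar curves by isometrically flattening each surface. The forward direction is immediate from the intrinsic nature of geodesic curvature; the disc topology is essential only in the reverse direction.

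For $[\Rightarrow]$, suppose $\phi: S_1 \to S_2$ is an isometry. An isometry between compact surfaces with boundary carries $\partial S_1$ onto $\partial S_2$ and, being arc-length preserving, sends the arc-length parameterization $\gamma_1$ to an arc-length parameterization of $\partial S_2$. Since geodesic curvature depends only on the first fundamental form, it is preserved by $\phi$. Hence, up to the choice of base point and orientation implicit in the statement, $\kappa_{g1}(s)=\kappa_{g2}(s)$.

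For $[\Leftarrow]$, I would first invoke that a smooth developable surface with disc topology admits a \emph{development}, i.e.\ can be isometrically flattened onto a planar region. Since $K\equiv 0$ and each $S_i$ is simply connected, the developing map is a well-defined isometry $\psi_i: S_i \to \bar S_i \subset \R^2$ onto a planar domain, carrying $\gamma_i$ to a planar boundary curve $\bar\gamma_i=\psi_i\circ\gamma_i$. Because development is an isometry and the geodesic curvature of a planar curve equals its signed curvature, $\bar\gamma_i$ has signed curvature $\kappa_{gi}(s)$. I would then apply the fundamental theorem of planar curves: two arc-length parameterized planar curves with the same length and the same signed curvature function are congruent, related by a rigid motion $\Phi$ of $\R^2$. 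Thus under $\kappa_{g1}=\kappa_{g2}$ and equal boundary length, $\Phi(\bar\gamma_1)=\bar\gamma_2$. Because each $\bar S_i$ is a disc bounded by the simple closed curve $\bar\gamma_i$, the Jordan curve theorem gives $\Phi(\bar S_1)=\bar S_2$; since $\Phi$ is an isometry, $\bar S_1\cong\bar S_2$. Composing $S_1 \xrightarrow{\psi_1} \bar S_1 \xrightarrow{\Phi} \bar S_2 \xrightarrow{\psi_2^{-1}} S_2$ yields the desired isometry.

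The main obstacle is the claim that a disc-topology developable surface flattens to a planar region genuinely bounded by a \emph{simple} closed curve, i.e.\ that $\psi_i$ is injective rather than merely a local isometry. This is exactly where disc topology enters: a self-overlapping developed boundary would invalidate both the ``region enclosed by $\bar\gamma_i$'' step and the application of the Jordan curve theorem. I would justify injectivity via a monodromy argument together with Gauss--Bonnet (with $K\equiv 0$ the total turning of the boundary is $2\pi$, consistent with $\psi_i$ being a diffeomorphism onto its image), establishing that each development is an embedding.
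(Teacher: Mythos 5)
Your proposal takes essentially the same route as the paper's proof: flatten each surface isometrically into the plane, use the isometry-invariance of geodesic curvature to transfer $\kappa_{g1},\kappa_{g2}$ to the flattened boundary curves, apply the fundamental theorem of planar curves to obtain a rigid motion identifying the two developed boundaries, conclude that the planar regions coincide, and compose the resulting isometries. The differences are that you spell out the (trivial) forward direction and the Jordan-curve step explicitly (the paper compresses the latter into ``exactly the same planar shape up to rigid motion''), and---crucially---that the paper discharges the flattening step by citing \cite{pottmann_new} for the statement that a simply connected developable surface is globally isometric to a planar domain, whereas you attempt to prove it.

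That attempted proof is where your argument has a genuine gap. Knowing from Gauss--Bonnet that the developed boundary has total turning $2\pi$ does not establish that the development $\psi_i$ is injective: rotation index $1$ does not imply that a closed planar curve is simple, and, worse, injectivity can genuinely fail for smooth flat discs embedded in $\R^3$. A compact piece of the tangent developable of a circular helix (a ``spiral ramp''; its parameter domain is a rectangle, hence it has disc topology) is embedded in $\R^3$ and flat, yet its development wraps around an annular region of the plane by more than a full turn, so it is not isometric to any planar domain---while Gauss--Bonnet, being intrinsic, holds for it all the same, with total boundary turning exactly $2\pi$. So monodromy (which does give a well-defined developing map for a simply connected flat surface) together with Gauss--Bonnet cannot deliver embeddedness; it must enter as an additional hypothesis or be proved by a different argument. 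To be fair, this subtlety is not resolved in the paper either---it is exactly what is imported wholesale through the citation of \cite{pottmann_new}---but since your stated aim was to justify that step rather than cite it, the justification you sketch is the one step of your proposal that does not hold up.
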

\begin{proof}
See Appendix \ref{app:disc_iso_proof}.
\end{proof}
This lemma can be extended to the case of piecewise geodesic boundary, where the lengths of matching boundary pieces on the two surfaces are equal and the  angles of the turns (or ``corners'') match as well, see \figref{fig:disc_iso}. This is simple to discretize: two discrete developable nets $F_1$ and $F_2$ with disc topology and piecewise geodesic boundaries can be considered isometric if each matching pair of boundary pieces have equal lengths and the matching corners' angles agree. 

Such a global definition of isometry cannot be easily generalized to non-disc topologies and it does not provide us with the isometry map in the discrete case. One can easily find a situation where two
discrete nets $F_1, F_2$ with the same connectivity 
are deemed isometric by the global definition above, 
but there is no \emph{vertex-to-vertex} map $\Phi:F_1 \rightarrow F_2$ that we can reasonably call an isometry. 
\setlength{\intextsep}{8pt}%
\setlength{\columnsep}{8pt}%
\begin{wrapfigure}{r}{0.3\linewidth}
  \centering
  \includegraphics[width=\linewidth]{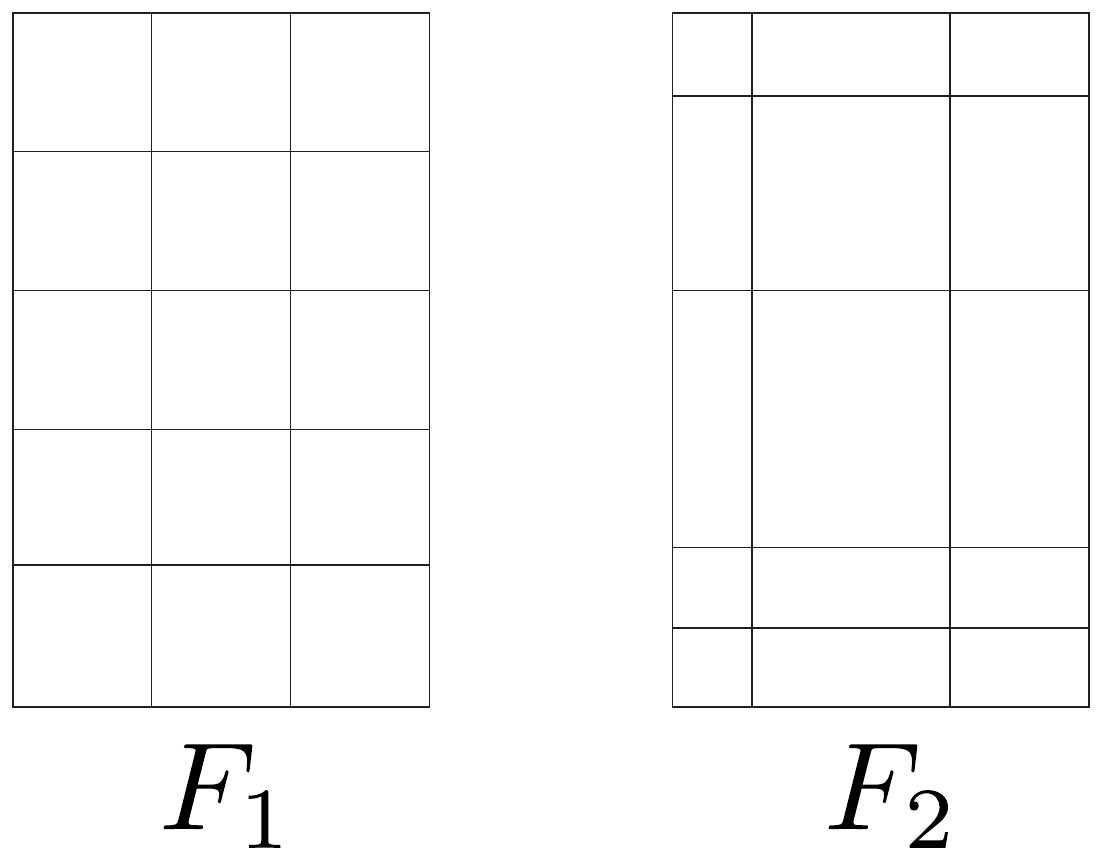}
\end{wrapfigure}
For example, the inset shows a case of two isometric rectangles represented by two different discrete orthogonal geodesic nets, where the discrete mapping $\Phi$ that matches corresponding vertices does not preserve any edge lengths. Consequently, a smaller piece $F_1' \subset F_1$ of the first surface is not isometric to the corresponding piece $\Phi(F_1') \subset F_2$ of the second surface. In practical terms, this means that the global criterion is too limited for the purposes of isometric shape modeling, and we need a \emph{local} definition of isometry that tells us when a mapping between two discrete nets is isometric.

\subsection{A local model for isometry: discrete orthogonal 4Q geodesic nets}
A natural attempt to define local isometry is to employ the global definition above to each local neighborhood on a surface. For our discrete nets, the first idea would be to look at the level of each single quad and impose length constraints. Unfortunately, the analysis in \secref{dev_extension} implies that we cannot add this many constraints to our net. \figref{fig:evolution} depicts how the cone-ray intersection discussed in \secref{dev_extension} propagates and determines a whole quad strip, leaving us solely one edge length and one angle per strip as degrees of freedom.
We therefore have to expand our notion of local neighborhood on discrete nets and loosen the developable net definition somewhat. We define a new class of nets called \emph{4Q orthogonal geodesic nets}, composed of \emph{4Q orthogonal patches}, defined as follows:

\begin{mydefinition} \label{def:4Q_patch}  An {orthogonal 4Q patch} is a composition of four quads (see \figref{fig:4Q}), such that:
\begin{tight_enumerate}
  \item Odd vertices have discrete orthogonal geodesic stars (\defref{def:def1}); \label{odd_orth}
  \item Even vertices have discrete geodesic stars (\defref{def:wunderlich}); \label{even_geo}
  \item The lengths of opposing sides (each a sum of two edges) of the 4Q patch are equal. \label{op_arcs_len}
\end{tight_enumerate}
\end{mydefinition}

\begin{figure}[h]
\centering
   \includegraphics[width=0.8\linewidth]{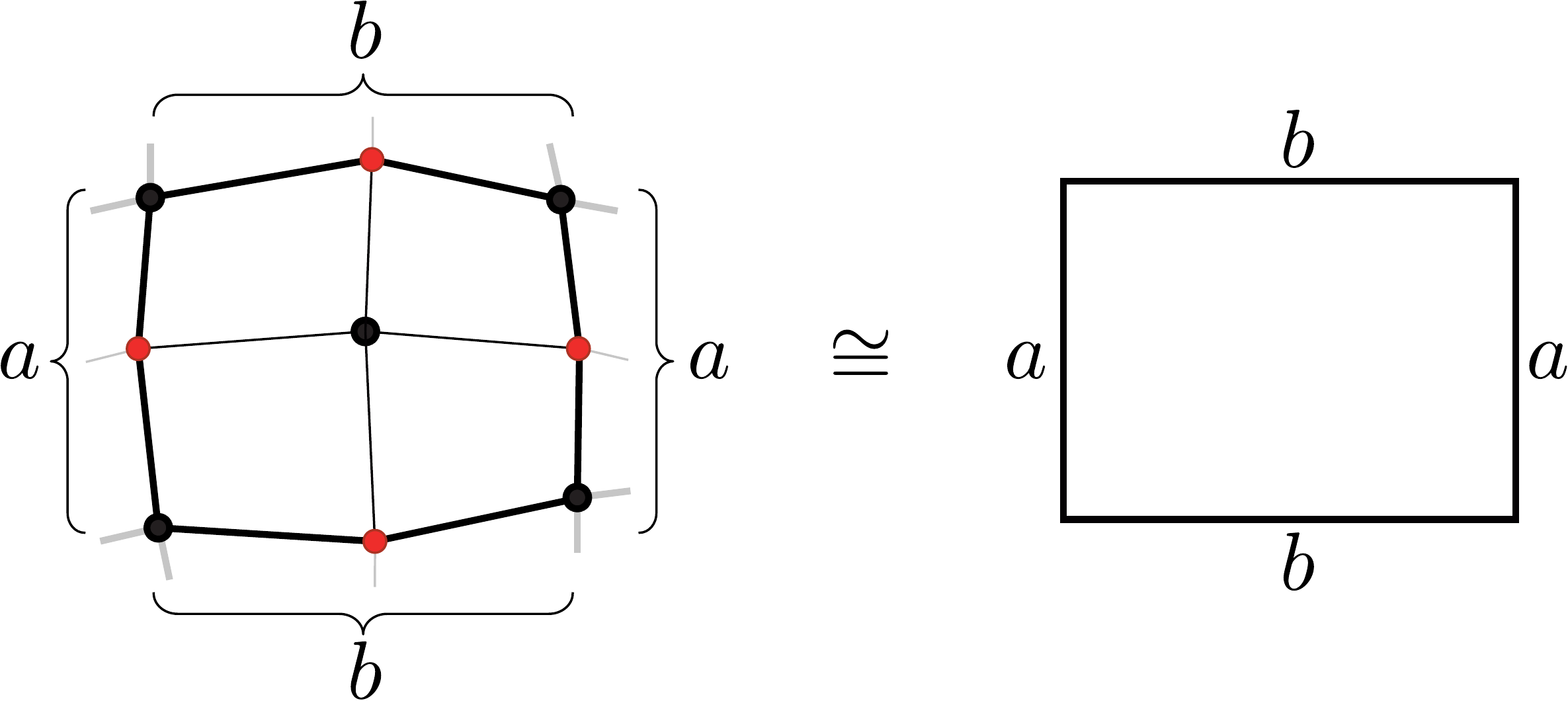}
      \caption{\label{fig:4Q}An orthogonal 4Q patch. Odd (black) vertices are discrete \emph{orthogonal} geodesic vertices, even (red) are discrete geodesic vertices. The lengths of opposing sides of the 4Q patch are equal. An orthogonal 4Q patch is seen as isometric to a rectangle in the plane with the same side lengths.}
\end{figure}

Conditions \eqref{odd_orth} and \eqref{even_geo} imply that an orthogonal 4Q patch can be seen as discrete developable, since its boundary can be interpreted as a set of four geodesic curves intersecting orthogonally, resulting in a vanishing integrated Gaussian curvature in the interior of the patch. 
Condition \eqref{op_arcs_len} implies that the 4Q patch can be seen as isometric to a rectangle, in the sense of the extension of \lemmaref{Lem:geo_disc_isometry} discussed above.
In the same spirit, we can model (global) isometries of the 4Q patch by requiring the conservation of the lengths of its sides.

An orthogonal 4Q geodesic net $F$ is a discrete net composed of orthogonal 4Q patches.
Two orthogonal 4Q geodesic nets are isometric if there exists a one-to-one correspondence between their 4Q patches, such that for each pair of matching  patches, the corresponding side lengths are equal. Modeling isometric deformations on an orthogonal 4Q net amounts to keeping these lengths fixed, enabling us to model isometries on a wide range of surfaces, unconstrained by their topology. 


In Appendix \ref{app:4Q_evo} we analyze the rigidity of orthogonal 4Q nets by looking at the construction of a 4Q net from a single strip, similarly to the analysis of orthogonal geodesic nets in \secref{dev_extension}. We observe that orthogonal 4Q nets have a similar rigid structure, which implies that while these nets do offer us additional degrees of freedom to incorporate local length constraints, they are not too permissive and still reasonably represent the space of developable surfaces.

\begin{figure*}[t]
\centering
   \includegraphics[width=\linewidth]{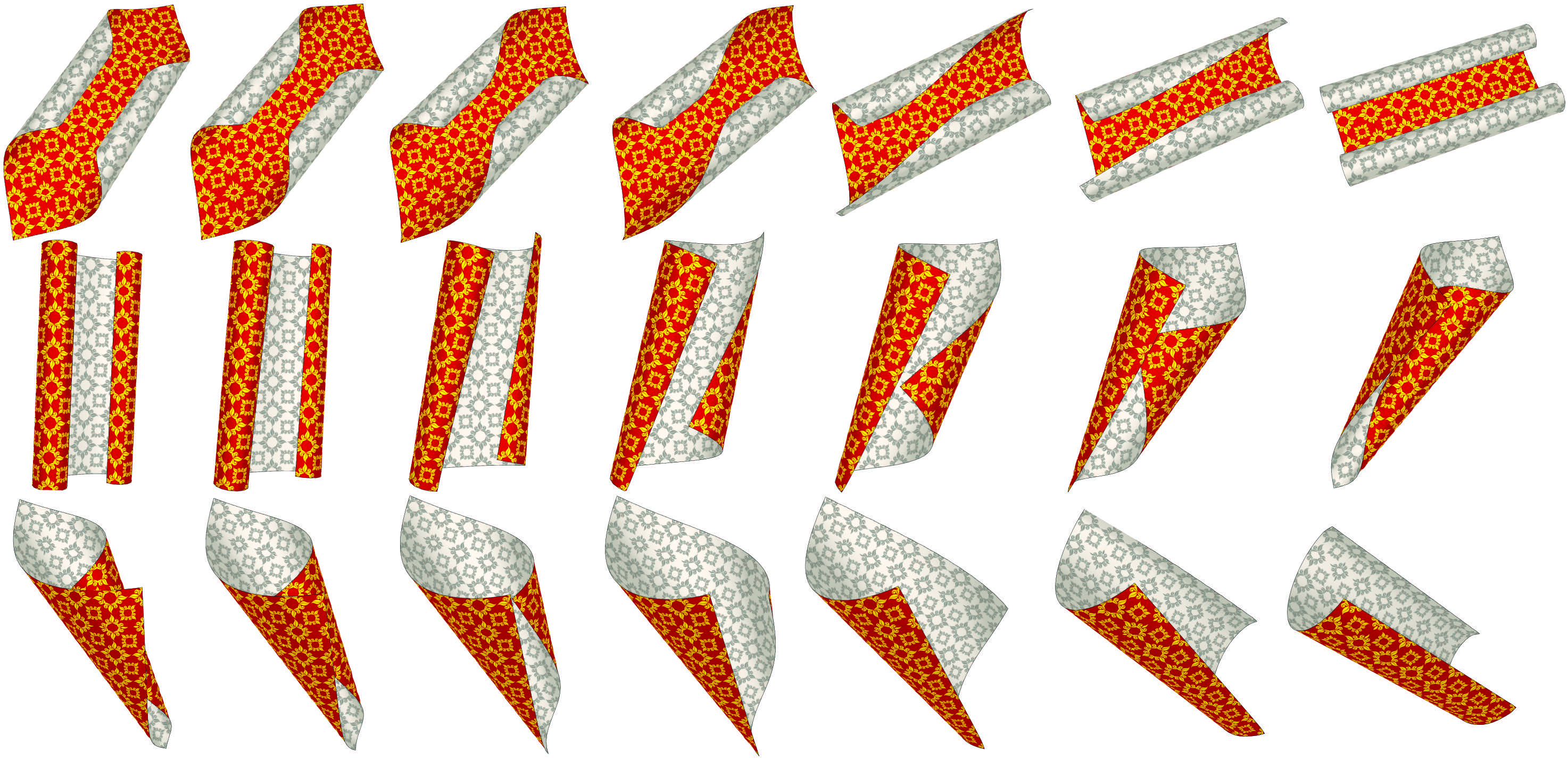}
   \caption{An interpolation sequence of isometric orthogonal 4Q nets. Note that in all cases, the deformations alter the rulings directions and their combinatorial structure. See the accompanying video for the entire sequence.}
   \label{fig:smooth-iso}
\end{figure*}

\subsection{Optimization} \label{sec:4q_opt}
To perform isometric surface deformation on orthogonal 4Q nets, our optimization stays largely similar to \secref{sec:opt}, with a few minor differences. We constrain the orthogonal geodesic vertices just as in \secref{opt_const} (\equref{eq:star_constraints}). Condition \eqref{even_geo} in \defref{def:4Q_patch}, i.e., equality of opposing angles around an even vertex can be written as
\begin{equation}
\label{eq:geo_constraints}
\langle e_j, e_{j+1} \rangle \|{e_{j+2}}\| \|{e_{j+3}}\| - \langle e_{j+2}, e_{j+3} \rangle \|{e_{j}}\| \|{e_{j+1}}\| = 0,
\end{equation}
where the $e_j$'s are the edge vectors emanating from the vertex. 
We combine the length constraints \eqref{op_arcs_len} in \defref{def:4Q_patch} with the isometry requirement by constraining the length of each side of each 4Q patch (i.e., the sum of the two respective edge lengths) to retain the same value as in the input orthogonal 4Q net.
We thus do not need to include an isometry regularizer as in \secref{sec:opt}, since our constraints already maintain the lengths of the coordinate curves exactly.

\subsection{Results}	
Incorporating the constraints in \secref{sec:4q_opt} allows us to isometrically edit orthogonal 4Q nets. We found experimentally that this optimization, which includes angle as well as length constraints, is in practice slower than the optimization in \secref{sec:opt}, allowing us to interactively edit coarser models of about 600 vertices. \figref{fig:iso_edit} demonstrates an editing operation that includes bending, glueing and cutting of a strip, all done while maintaining the orthogonal 4Q patches isometric to the reference state. 

Additionally, our constraints can be used in combination with a shape interpolation algorithm such as \cite{lipman2005,froh_botsch}. In \figref{fig:smooth-iso} we compute a sequence of isometric shapes, morphing a source shape into an (isometric) target, thereby simulating isometric bending of developable surfaces that generally happens \emph{not} along their rulings. An initial guess for each interpolation frame is first computed with \cite{froh_botsch}, followed by the optimization of (i.e., projection onto) our constraints, as specified in \secref{sec:4q_opt}. 


\section{Limitations and future work}
This paper is a first step towards a discrete theory for modeling developable surface deformations through orthogonal geodesics. As such, this work focuses on the geometric model, its connections to the smooth case, and a straightforward integration of the model in existing applications. Various practical as well as theoretical problems remain unanswered, opening new avenues for further research, as detailed below.


\emph{Deformation algorithms for discrete developable geodesic nets.} Our most notable limitation is speed, as our editing system can only handle interactive editing of nets with ca.\ 1000 vertices. In this work we used an out-of-the box L-BFGS algorithm, and we leave it as future work to devise a more efficient deformation algorithm. In addition, we believe it would be useful to allow for interactive exploration of our shape space by discretizing various geometric flows, for instance to enable approximation of arbitrary shapes by our discrete developable nets.

\emph{Boundary conditions.} Our theory mainly concerns the internal vertices of the net, and our boundary constraints derived in \secref{opt_const} can be seen as a generalization of the internal vertex constraints, specifying that the boundary is a piecewise-geodesic curve, i.e., comprised of pieces of straight lines meeting at right angles. Currently, we can circumvent the jagged appearance of our boundaries by applying culling using alpha-textures, as was done for the letter G in \figref{fig:teaser} and is further illustrated in \figref{fig:flower_grid}. Given that developable surfaces are fairly rigid and the degrees of freedom in extending them at the boundary is quite limited, the culling approach is a reasonable pragmatic solution. Nevertheless, it would be interesting to derive other boundary conditions, allowing us to model curved boundaries with prescribed geodesic curvature using coarser models and represent shapes with curved boundaries by a tighter mesh.

\begin{figure}[t]
\centering
   \includegraphics[width=1\linewidth]{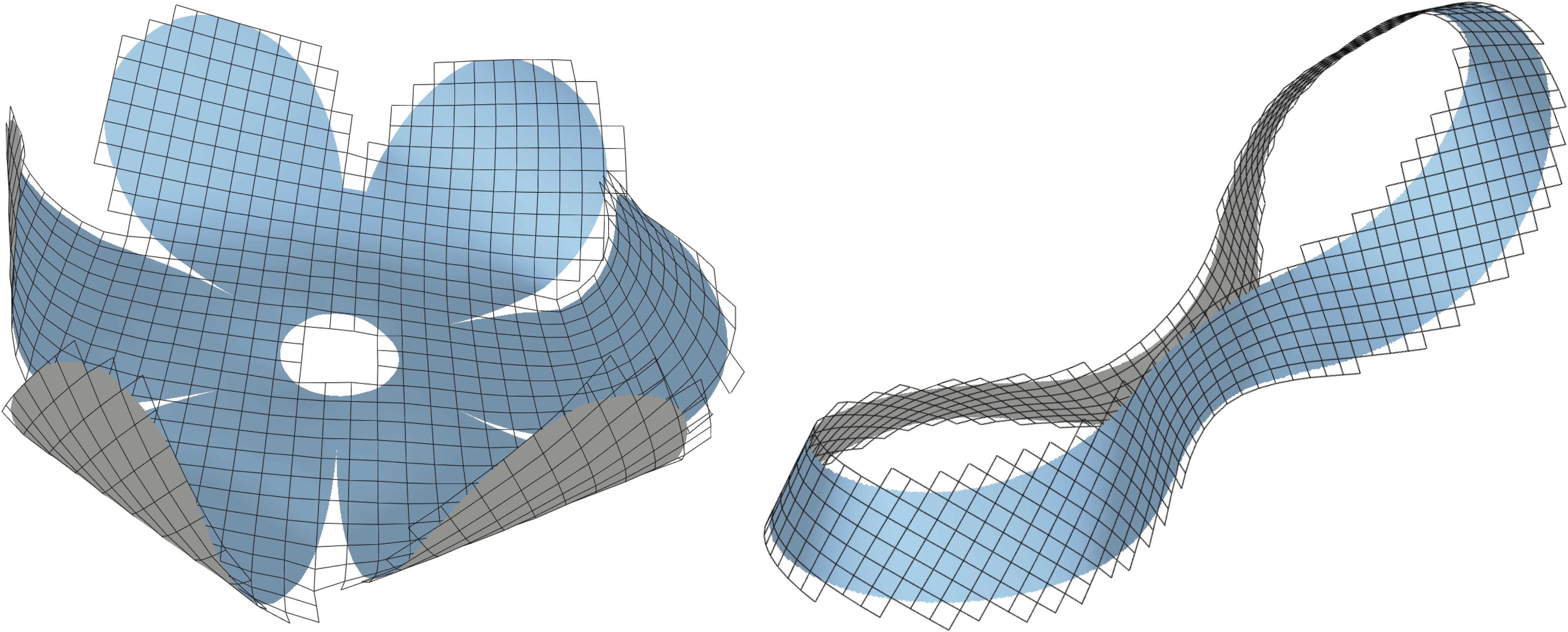}
   \caption{Editing a flower and an 'O' shaped developable surface with curved boundaries. Such boundaries can be approximated up to any precision by an orthogonal geodesic net. In practice however, for the purpose of interactive editing, our grid resolution is limited by our L-BFGS optimization. Our current pragmatic solution to alleviate the jagged boundary appearance is culling using alpha-textures. In the future we plan to explore the discretization of general curved boundary conditions and prescribing geodesic curvature.}
   \label{fig:flower_grid}
\end{figure}

\emph{Subdivision and refinement operations}. The geometry of our model consists solely of the vertex positions, and the quad faces are generally non-planar. Currently we simply arbitrarily triangulate the quad faces for rendering and fabrication purposes. In particular for fabrication applications, it would be interesting to look at refinement operations for our model that adhere to our constraints, as well as the convergence of such refinements to a smooth developable surface.

\emph{Discrete geodesic nets}. We leave further study of non-orthogonal discrete geodesic nets as future work. These can be beneficial for modeling developable surfaces, as well as deformations and isometries on more general doubly curved surfaces. In particular, we would like to define a discrete Gaussian curvature on these nets through an extension of the derivation in \secref{sec:disc_orth_nets}.

\emph{Isometry.} We are well aware that \secref{sec:iso} is just the tip of the iceberg. In terms of applications, modeling isometries is essential for simulating the bending of physical developable surfaces, and we have not yet experimented with methods to build or bend real life objects.
 We also did not treat the subject of choosing an optimal interpolation path between two isometric shapes, nor have we devised an interpolation algorithm with smoothness guarantees.
We believe that there is much more theory to explore in order to better understand the 4Q geodesic nets.

\begin{acks}

The authors would like to thank Noam Aigerman, Mario Botsch, Oliver Glauser, Roi Poranne, Katja Wolff, Christian Sch\"uller, Jan Wezel and Hantao Zhao for illuminating discussions and help with results production.

The work was supported in part by the \grantsponsor{ERC}{European Research Council}{https://erc.europa.eu/} under Grant
No.:~\grantnum{ERC}{StG-2012-306877} (ERC Starting Grant iModel) and by the Deutsche Forschungsgemeinschaft-Collaborative Research Center, TRR 109, "Discretization in Geometry and Dynamics."  
\end{acks}

\bibliographystyle{ACM-Reference-Format}
\bibliography{97-bib-opt}

\appendix
\section{Proof of \theoremref{Thm:taylor_orth_geo}} \label{app:Taylor_proof}
Assuming $f$ is analytic, with the shorthand $f_x = f_x(x,y), f_{xx} = f_{xx}(x,y)$, we use Taylor expansion to write the nearby points of $f$ in the form
\begin{align*}
  f_1(\epsilon) &= f + \epsilon f_x +\frac{\epsilon^2}2f_{xx} + o(\epsilon^3), & 
  f_{\bar1}(\epsilon) &= f - \epsilon f_x +\frac{\epsilon^2}2f_{xx} + o(\epsilon^3),\\
  f_2(\epsilon) &= f + \epsilon f_y +\frac{\epsilon^2}2f_{yy} + o(\epsilon^3), &
  f_{\bar2}(\epsilon) &= f - \epsilon f_y +\frac{\epsilon^2}2f_{yy} + o(\epsilon^3).
\end{align*}
The rest of the proof requires writing the first coefficients of the Taylor expansion of the edge directions $\delta_jf(\epsilon), \delta_{\bar j}f(\epsilon)$. Here we derive the coefficients of $\delta_1f(\epsilon)$, and the other coefficients are analogous.
The edge vector $f_1 - f$ can be written as
$$f_1(\epsilon) -f = \epsilon f_x +\frac{\epsilon^2}2f_{xx} + \dots$$
and so $\delta_1 f(\epsilon)$ can be written as
$$
\delta_1 f(\epsilon) = \frac{\epsilon  (f_x+\frac{\epsilon}2  f_{xx}+ \dots)}{\|{\epsilon  (f_x+\frac{\epsilon}2  f_{xx} + \dots)}\|} = \frac{f_x+\frac{\epsilon}2 f_{xx} + \dots}{\|{f_x+\frac{\epsilon}2  f_{xx} + \dots}\|}.
$$
Let $a_i$ be the Taylor coefficients of $\delta_1f(\epsilon)$, by direct computation:
\begin{align*}
& a_0 = \delta_1f(0) =  \frac{f_x}{\|{f_x}\|}\\
& a_1 = \delta_1f'(0) = -\frac{\langle f_{xx},f_x \rangle}{2 \langle f_x,f_x \rangle ^{3/2}}f_x+ \frac{1}{2\sqrt{\langle f_x,f_x \rangle}}f_{xx}.
\end{align*}
Similarly performing this for $\delta_{\bar 1}f(\epsilon), \delta_2 f(\epsilon), \delta_{\bar 2}f(\epsilon)$ and plugging the expressions in \equref{eq:eps_star_angles} gets us
$$
\langle \delta_1f(\epsilon), \delta_2 f(\epsilon) \rangle- \langle \delta_2 f(\epsilon), \delta_{\bar 1} f(\epsilon) \rangle = \frac {2\langle f_x,f_y\rangle}{\|{f_x}\| \|{f_y}\|} + o(\epsilon)
$$
and by symmetry we get exactly the same for the other angles. Therefore, the angles of an $\epsilon$-star are equal up to first order if and only if $f$ is an orthogonal (not necessarily geodesic) net. If $f$ is orthogonal, then by plugging in $\langle f_x,  f_y\rangle=0$ we see that:
\begin{align*}
&\langle \delta_1f(\epsilon), \delta_2 f(\epsilon) \rangle = \epsilon \frac{\langle f_x,f_{yy} \rangle + \langle f_{xx},f_y \rangle}{2\|{f_x}\| \|{f_y}\|} + o(\epsilon^2)\\
&\langle \delta_2f(\epsilon), \delta_{\bar 1}f(\epsilon) \rangle = \epsilon \frac{-\langle f_x,f_{yy} \rangle + \langle f_{xx},f_y \rangle}{2\|{f_x}\| \|{f_y}\|} + o(\epsilon^2)\\
&\langle \delta_{\bar 1}f(\epsilon), \delta_{\bar 2} f(\epsilon) \rangle = \epsilon \frac{-\langle f_x,f_{yy} \rangle - \langle f_{xx},f_y \rangle}{2\|{f_x}\| \|{f_y}\|} + o(\epsilon^2)\\
&\langle \delta_{\bar 2}f(\epsilon), \delta_{1} f(\epsilon) \rangle = \epsilon \frac{\langle f_x,f_{yy} \rangle - \langle f_{xx},f_y \rangle}{2\|{f_x}\| \|{f_y}\|} + o(\epsilon^2)
\end{align*}
Equality of all the linear terms implies $\langle f_x, f_{yy} \rangle = 0$ and $\langle f_y, f_{xx} \rangle = 0$. Together with $f_x \bot f_y$, this implies that $f$ is a geodesic orthogonal net. To see that, let $n^x$ be the principle normal of the $x$ coordinate curve and let  $f_{xx} = af_x + bn^x$ for some $a,b \in \R$. Then $0 = \langle f_{xx} , f_y \rangle = \langle af_x +bn^x, f_y \rangle = \langle bn^x, f_y \rangle$ and so $n^x \perp f_y$. By construction, the principle normal satisfies $n^x \perp f_x$, which means that the principle normal of the $x$ coordinate curve is parallel to the surface normal and so the curve is a geodesic. By a similar calculation, the principle normal of the $y$ coordinate curve is parallel to the surface normal.

\section{Proof of \protect{\lemmaref{Lem:geo_disc_isometry}}} \label{app:disc_iso_proof}
Every developable surface is \emph{locally} isometric to a planar surface. By \cite{pottmann_new}, a simply connected developable surface is (globally) isometric to a planar surface. Hence, disc topology developable surfaces $S_1,S_2$ are isometric to some planar surfaces $\hat{S}_1,\ \hat{S}_2$. As geodesic curvature is invariant to isometries, the curvatures of the boundary curves of $\hat{S}_1,\ \hat{S}_2$ are $\kappa_{g1}(s),\kappa_{g2}(s)$. By the fundamental theorem of planar curves, the planar boundary curves differ by a rigid motion (meaning that $\hat{S}_1, \hat{S}_2$ are exactly the same planar shape up to rigid motion) if and only if $\kappa_{g1}(s)=\kappa_{g2}(s)$, hence if and only if $S_1 \cong \hat{S}_1 \cong \hat{S}_2 \cong S_2$.

\begin{figure}[b]
\includegraphics[width=\linewidth]{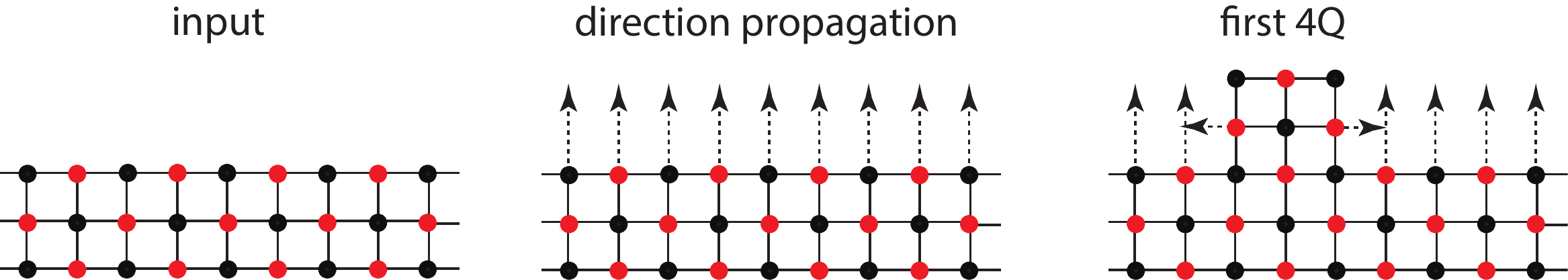}
\caption{\label{fig:4Q_evo_input}Left: A given 4Q strip. Center: By direction propagation, any vertex with its three neighbors can be generally completed to a geodesic star by a point on a unique ray. Right: The first extension 4Q quad must be such that the horizontal rays emanating from its middle, determined by the direction propagation, intersect the two neighboring vertical rays emanating from the strip, such that valid vertices can be formed at the intersection points.}
\end{figure}

\section{4Q net evolution} \label{app:4Q_evo}
Analogously to our analysis in \secref{dev_extension}, we show how orthogonal 4Q net constraints propagate from a given horizontal strip, leaving only a few degrees of freedom, and in practice, for nets representing smooth shapes, almost none. 
Recall that we denote by black vertices the centers of discrete orthogonal geodesic stars, while red vertices are centers of discrete geodesic stars that are not necessarily orthogonal; opposite sums of edges in every 4Q quad are equal. We start by noting that a vertex and three of its neighbors can be generally completed to a geodesic star by a point located on a unique ray (\figref{fig:4Q_evo_input}), and we refer to this as \emph{direction propagation}; this is analogous to the plane reflection \lemmaref{Lem:evo_plane_ref} that refers to the special case of orthogonal geodesic stars, and is a direct result of  \lemmaref{Lem:discrete_geo_n}. We analyze the most constrained case, where one 4Q quad is already given that extends our horizontal strip. This is similar to the choice of one edge length and angle for discrete orthogonal geodesic nets in \secref{dev_extension}, but with a few more degrees of freedom (\figref{fig:4Q_evo_input}). By direction propagation, this first extension 4Q quad  must be such that the two horizontal rays emanating from its middle intersect the two neighboring vertical rays from the strip (\figref{fig:4Q_evo_input}), so that valid vertices can be formed at the intersection points.

 We continue observing how the entire strip propagates by the orthogonal 4Q  geodesic net constraints. We refer the reader to \figref{fig:4Q_ray_cone}, where we note that by the previous constraint on the first extending 4Q quad, two rays intersect at a new vertex. The rest of the figure shows repeated application of \lemmaref{Lem:evo_cone_int}, a sequence of cone-ray intersections. Note that this lemma is also valid when only one of the vertices is a geodesic, as evident in its proof.
\begin{figure}
\includegraphics[width=\linewidth]{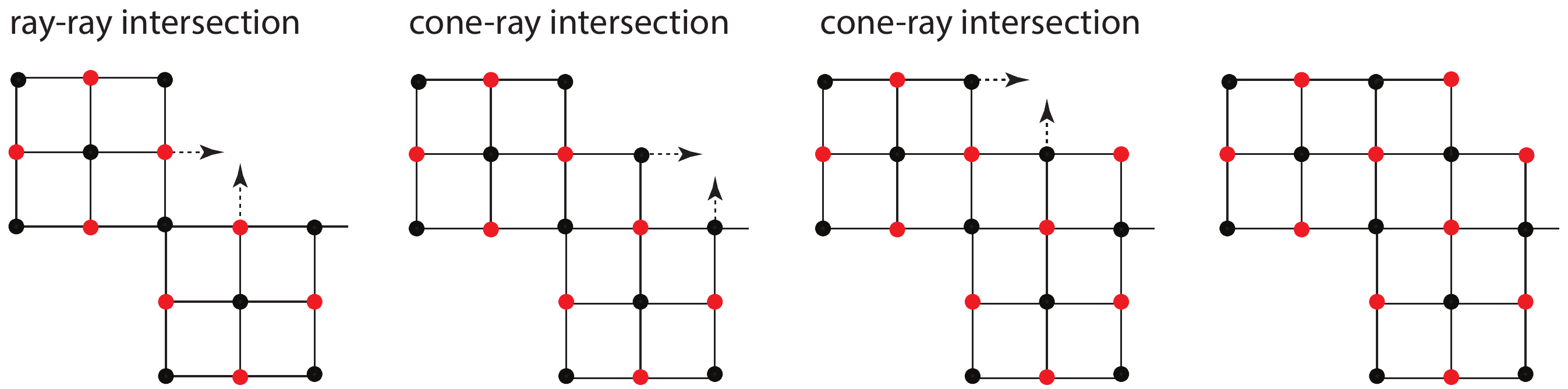}
\caption{\label{fig:4Q_ray_cone} A ray-ray intersection and repeated application of \lemmaref{Lem:evo_cone_int} determines the location of all but one vertices of a neighboring 4Q quad.}
\end{figure}

The cone intersection propagation determines all vertices of a neighboring 4Q quad but one. This vertex must fulfill two conditions: 
\begin{enumerate}
  \item The sums of edge lengths of the opposing vertical sides of the 4Q quad are equal;
  \item The sums of edge lengths of the opposing horizontal sides of the 4Q quad are equal.
\end{enumerate}
As all edges except one are already determined, this means that the missing vertex should lie in a fixed distance from two different points, or equivalently on an intersection of two spheres (\figref{fig:4Q_spheres}). If the spheres intersect, they either intersect in a point or a circle; in practice for a smooth enough net, this generally results in a circle. Every point on this circle satisfies the length constraint, but does not in general create a direction that intersects with a given vertical direction for the net. The set of all of these directions generates a cone, and so the last 4Q vertex lies on the intersection of this cone with a given vertical ray (see \figref{fig:4Q_spheres}). 
 This process repeats to reveal the entire extension strip, as the next vertex of a neighboring 4Q quad is given by a ray-ray intersection.

\begin{figure}
\includegraphics[width=\linewidth]{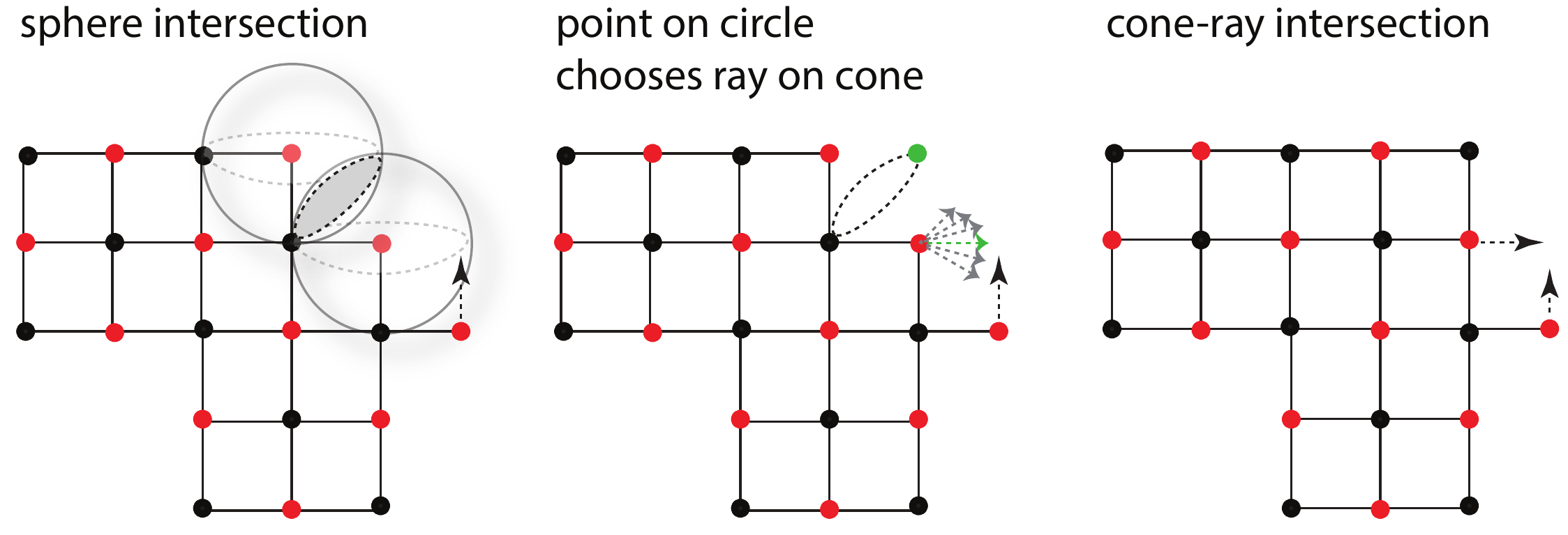}
\caption{\label{fig:4Q_spheres}
Left: The upper right corner vertex lies on an intersection of two spheres. Center: These spheres generally intersect in a circle. Every point on this circle determines a unique ray by direction propagation, and all these directions together form a cone. Right: This cone intersects with a given vertical ray, and the upper right corner vertex is a point on a circle that propagates the direction of the intersecting ray on the cone.
}	
\end{figure}
\end{document}